\newtheorem{theorem}{Theorem}
\newtheorem{lemma}{Lemma}
\newtheorem{corollary}{Corollary}
\newtheorem*{conclusion*}{Conclusion}
\newtheorem*{definition*}{Definition}
\newtheorem*{remark*}{Remark}
\newtheorem{example}{Example}
\newcommand{\Rmnum}[1]{\expandafter\@slowromancap\romannumeral #1@}
\newcommand{\notequiv}{{\,\not\equiv\, }}
\newcommand{\ls}[1]
    {\dimen0=\fontdimen6\the\font\lineskip=#1\dimen0
     \advance\lineskip.5\fontdimen5\the\font
     \advance\lineskip-\dimen0
     \lineskiplimit=0.9\lineskip
     \baselineskip=\lineskip
     \advance\baselineskip\dimen0
     \normallineskip\lineskip\normallineskiplimit\lineskiplimit
     \normalbaselineskip\baselineskip
     \ignorespaces}
\begin{document}

\bibliographystyle{abbrv}

\title{Cyclic codes from  the first class two-prime Whiteman's generalized cyclotomic sequence with order 6}
\author{Pramod Kumar Kewat and Priti Kumari}
\address{Department of Applied Mathematics, Indian School of Mines, Dhanbad 826 004, India}
\email{kewat.pk.am@ismdhanbad.ac.in, priti.jsr13@gmail.com}        
\subjclass{}
\keywords{Cyclic codes, finite fields, cyclotomic sequences, autocorrelation}
\date{}
 \maketitle
\markboth{P.K. Kewat and Priti Kumari}{Cyclic codes from  the first class two-prime WGCS with order 6}

\thispagestyle{plain} \setcounter{page}{1}

\begin{abstract}
Binary Whiteman's cyclotomic sequences of orders 2 and 4 have a number of good randomness properties. In this paper, we compute the autocorrelation values and linear complexity of the first class two-prime Whiteman's generalized cyclotomic sequence (WGCS-I) of order $d=6$. Our results show that the autocorrelation values of this sequence is four-valued or five-valued if $(n_1-1)(n_2-1)/36$ is even or odd respectively, where $n_1$ and $n_2$ are two distinct odd primes and their linear complexity is quite good. We employ the two-prime WGCS-I of order 6 to construct several classes of cyclic codes over $\mathrm{GF}(q)$ with length $n_1n_2$. We also obtain the lower bounds on the minimum distance of these cyclic codes.
\end{abstract}

\section{Introduction}\label{section 1}
Let $q$ be a power of a prime $p$. An $[n,k,d]$ linear code $C$ over a 
finite field $\mathrm{GF}(q)$ is a $k-$dimensional subspace of the vector space $\mathrm{GF}(q)^n$ with the minimum distance $d$.
A linear code $C$ is a cyclic code if the cyclic shift of a codeword in $C$ is again a codeword in $C$, i.e., 
if $ (c_0,\cdots,c_{n-1})\in C$ then $(c_{n-1},c_0\cdots,c_{n-2})\in C$.
Let $\mathrm{gcd}(n, q)= 1$. We consider the univariate polynomial ring $\mathrm{GF}(q)[x]$ and the ideal 
$I=\langle x^n-1\rangle$ of $\mathrm{GF}(q)[x].$ We denote by $R$ the ring  $\mathrm{GF}(q)[x]/I$.
We can consider a cyclic code of length $n$ over $\mathrm{GF}(q)$ as an ideal in $R$ via the following correspondence
\begin{align}
 \mathrm{GF}(q)^n \rightarrow  R,\ \ \ \ \ (c_0,c_1,\cdots,c_{n-1}) \mapsto c_0+c_1x+\cdots+c_{n-1}x^{n-1}.\nonumber
\end{align}
 Then, a linear code $C$ over $\mathrm{GF}(q)$ is a cyclic code if and only if $C$ is an ideal of $R$. Since $R$ is a principal ideal ring, if $C$ is not trivial, there exists a unique monic polynomial
 $g(x)$ dividing $x^n-1$ in $\mathrm{GF}(q)[x]$ and $C=\langle g(x) \rangle$. The polynomials $g(x)$ and $h(x)=(x^n-1)/g(x)$ are called
 the generator polynomial and the parity-check polynomial of $C$ respectively.
 If the dimension of the code $C$ is $k$, the generator polynomial has degree $n-k.$  An $[n,k,d]$ cyclic code $C$ is capable of encoding $q-$ary messages of length $k$ and requires $n-k$ 
 redundancy symbols.
 
 The total number of cyclic codes over $\mathrm{GF}(q)$ and their construction are closely related to the cyclotomic cosets modulo $n$. 
 One way to construct cyclic codes over $\mathrm{GF}(q)$ with length $n$ is to use the generator polynomial
\begin{align}
 \frac{x^n-1}{\mathrm{gcd}(x^n-1,S(x))}\label{genpol},
\end{align}
where $S(x)=\sum\limits_{i=0}^{n-1}s_ix^i \in \mathrm{GF}(q)[x]$ and $s^\infty={(s_i)}_{i=0}^{\infty}$ is a sequence of period $n$ over $\mathrm{GF}(q).$
The cyclic code $C_s$ generated by the polynomial in Eq.(\ref{genpol}) is called the cyclic code defined by the sequence $s^{\infty},$ and the sequence $s^{\infty}$ is called the defining sequence of the cyclic code $C_s.$ 

Cyclic codes have been studied in a series of papers and a lot of progress have been accomplished 
 (see, for example \cite{Betti06}, \cite{Ding15}, \cite{Eupen93},  \cite{Macwilliams77} and \cite{Lint86}).
 The Whiteman's generalized cyclotomy was introduced by Whiteman and its properties were studied in \cite{whiteman62}.
 The two-prime Whiteman's generalized cyclotomic sequence(WGCS) was introduced by Ding \cite{ding97} and its coding properties were studied in \cite{ding12} and \cite{white04}.
 Ding \cite{ding12} and Sun et al.\cite{white04} constructed number of classes of
cyclic codes over $\mathrm{GF}(q)$ with length product of two-distict primes from the two-prime WGCS of order $2$ and $4$ respectively and gave the lower bounds
on the minimum weight of these cyclic codes under certain conditions. After that Kewat et.al \cite{kewat} constructed number of classes of cyclic codes over $\mathrm{GF}(q)$ with length $n$ from the second class two-prime Whiteman's generalized cyclotomic sequence (WGCS-II) of order 6. In this paper, we employ the first class two-prime Whiteman's generalized cyclotomic sequence (WGCS-I) of order 6 to construct number of classes of cyclic codes over $\mathrm{GF}(q)$. 

The autocorrelation values of generalized cyclotomic sequences of order 2, 4 and 6 have been studied in (\cite{auto2},\cite{auto4}, and \cite{Auto13}). It has been shown that the autocorrelation values of the generalized cyclotomic sequence of order two are quite good when $|q-p|$ is small enough. In case of order 4 and 6, the results show that the autocorrelation values are low provided that the parameters are chosen carefully. First, we discuss about the autocorrelation properties of the two-prime WGCS-I.
Binary sequences with low correlation values have wide applications in stream cipher, software testing, communication systems, radar navigation and other fields. In this correspondence, we calculate the exact autocorrelation values of the two-prime WGCS-I. The linear complexity of generalized cyclotomic sequences of length product of two-distict primes has been calculated by Ding \cite{ding97} and Bai \cite{bai05}.
We also calculate the exact value of the linear complexity of this sequence without any special requirements on the primes. We employ the two-prime WGCS-I with order $6$ to construct several classes of cyclic codes over $\mathrm{GF}(q).$  We also obtain the lower bounds 
on the minimum weight of these cyclic codes.
 
  \section{Preliminaries}\label{section 2}
\subsection{Linear complexity and minimal polynomial}\label{subsection 2.1}

If ${(s_i)}_{i=0}^{\infty}$ is a sequence over a finite field $\mathrm{GF}(q)$ and $f(x)$ is a polynomial with coefficients in $\mathrm{GF}(q)$ given by
$ f(x)=c_0+c_1x+\cdots+c_{L-1}x^{L-1},$
then we define 
$ f(E)s_j=c_0s_j+c_1s_{j-1}+\cdots+c_{L-1}s_{j-L+1},$
where $E$ is a left shift operator defined by $Es_i=s_{i-1}$ for $i\geq 1.$
Let $s^n$ be a sequence $s_0s_1\cdots s_{n-1}$ of length $n$ over a finite field $\mathrm{GF}(q)$. For a finite sequence, the $n$ is finite; for a semi-infinite sequence, the $n$ is $\infty$.
A polynomial $f(x)\in \mathrm{GF}(q)[x]$ of degree $\leqslant l$ with $c_0\neq 0$ is called a characteristic polynomial of the sequence $s^n$ if $f(E)s_j=0$ for all $j$ with $j\geq l.$
For every characteristic polynomial there is a least $l\geq \mathrm{deg}(f)$ such that the above equation hold. The smallest $l$  is called the associate recurrence length of $f(x)$
with respect to the sequence $s^n$. The characteristic polynomial with smallest length is known as minimal polynomial of the sequence $s^n$ and the associated recurrence 
length is called the linear span or linear complexity of the sequence $s^n$.\\ If a semi-infinite sequence $s^{\infty}$ is periodic, then its minimal polynomial is unique 
if $c_0=1.$ The linear complexity of a periodic sequence is equal to the degree of its minimal polynomial. For the periodic sequences,
there are few ways to determine their linear spans and minimal polynomials. One of them is given in the following lemma.
\begin{lemma}\label{spanmin}
\cite{finite} Let $s^\infty$ be a sequence of a period $n$ over $GF(q)$. Define
 \begin{align}
 S^{n}(x)=s_0+s_{1}x +\cdots+s_{n-1}x^{n-1}\in \mathrm{GF} (q) [x].\nonumber
 \end{align}
 Then the minimal polynomial $m_s$ of $s^\infty$ is given by
 \begin{align}
  \frac{x^n-1}{\mathrm{gcd}(x^n-1,S^n(x))}.\nonumber 
 \end{align}
 Consequently, the linear span $L_s$ of $s^\infty$ is given by
 \begin{align}
  L_s=n-deg(\mathrm{gcd}(x^n-1,S^n(x))).\nonumber 
 \end{align}
\end{lemma}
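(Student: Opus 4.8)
The plan is to convert the statement about linear recurrences into a divisibility statement in $\mathrm{GF}(q)[x]$ via generating functions. First I would form the generating function $G(x)=\sum_{i=0}^{\infty}s_ix^i$ of the sequence. Because $s^\infty$ has period $n$, grouping the terms into blocks of length $n$ (write $i=jn+r$ with $0\le r<n$) gives $G(x)=S^n(x)\sum_{j\ge 0}x^{jn}=S^n(x)/(1-x^n)$, so that $(x^n-1)\,G(x)=-S^n(x)$ as formal power series.

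Next I would reinterpret the characteristic-polynomial condition. For $f(x)=c_0+c_1x+\cdots+c_Lx^L$ with $c_0\neq0$, the coefficient of $x^m$ in the product $f(x)G(x)$ equals $\sum_{k=0}^{L}c_ks_{m-k}=f(E)s_m$ whenever $m\ge L$. Hence $f$ is a characteristic polynomial of $s^\infty$ (that is, $f(E)s_j=0$ for all sufficiently large $j$) if and only if $f(x)G(x)$ is a polynomial, and by the identity $(x^n-1)G(x)=-S^n(x)$ this is in turn equivalent to $(x^n-1)\mid f(x)S^n(x)$.

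Then I would divide out the gcd. Put $g(x)=\gcd(x^n-1,S^n(x))$ and write $x^n-1=g(x)u(x)$, $S^n(x)=g(x)v(x)$ with $\gcd(u(x),v(x))=1$. The condition $(x^n-1)\mid f(x)S^n(x)$ becomes $u(x)\mid f(x)v(x)$, hence $u(x)\mid f(x)$. Thus the characteristic polynomials of $s^\infty$ are exactly the multiples of $u(x)=(x^n-1)/\gcd(x^n-1,S^n(x))$, which is monic since both $x^n-1$ and the gcd are monic. The monic multiple of least degree is $u(x)$ itself; a short degree count shows it also realizes the least associated recurrence length, so $m_s(x)=u(x)$ and consequently $L_s=\deg u=n-\deg\gcd(x^n-1,S^n(x))$.

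The only point requiring care — more bookkeeping than a real difficulty — is to reconcile "least degree" with the actual definition of the minimal polynomial via "least associated recurrence length". Writing a general characteristic polynomial as $f=u\cdot w$, one computes $f(x)G(x)=-w(x)v(x)$, a polynomial of degree $\deg w+\deg v$; and since $\deg v=\deg S^n(x)-\deg g\le (n-1)-\deg g<n-\deg g=\deg u$, the recurrence $f(E)s_j=0$ already holds for all $j\ge\deg f=\deg u+\deg w$. Hence the associated recurrence length of $f$ equals $\deg u+\deg w$, which is minimized precisely when $w$ is a nonzero constant, namely by $u(x)$ itself. Everything else is routine manipulation of formal power series together with the uniqueness of the monic gcd.
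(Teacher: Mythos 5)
Your proof is correct. There is nothing in the paper to compare it against: the lemma is quoted from the reference \cite{finite} and the paper gives no proof of it, so your generating-function derivation is a self-contained substitute for (and is essentially the standard textbook proof of) this classical fact. The key steps all check out: periodicity gives $(x^n-1)G(x)=-S^n(x)$; comparing coefficients shows $f$ satisfies $f(E)s_j=0$ for all large $j$ exactly when $f(x)G(x)$ is a polynomial, i.e.\ when $(x^n-1)\mid f(x)S^n(x)$; cancelling $g(x)=\gcd(x^n-1,S^n(x))$ and using $\gcd(u,v)=1$ yields $u(x)\mid f(x)$ with $u(x)=(x^n-1)/g(x)$ monic; and your final bookkeeping correctly shows that for $f=uw$ the product $f(x)G(x)=-w(x)v(x)$ has degree $\deg w+\deg v<\deg f$, so the associated recurrence length of $f$ (in the sense defined in Subsection \ref{subsection 2.1}) is exactly $\deg f=\deg u+\deg w$, minimized by $w$ constant; hence $m_s=u$ and $L_s=\deg u=n-\deg\gcd(x^n-1,S^n(x))$. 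One small point worth stating explicitly: the definition of a characteristic polynomial also requires $c_0\neq 0$, and this is automatic for $u$ (indeed for every divisor of $x^n-1$), since the constant term of $x^n-1$ is $-1\neq 0$ and so $x\nmid x^n-1$; multiples of $u$ with vanishing constant term are simply excluded from the competition and do not affect the minimization. With that remark added, the argument is complete.
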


\subsection{The Whiteman's generalized cyclotomic sequences and its construction}\label{subsection 2.2}
Let $n$ be a positive integer. An integer $a$ is called a primitive root of modulo $n$ if the multiplicative order of $a$ modulo $n,$ denoted by $ \mathrm{ord}_n(a)$, is equal to $\phi(n),$ where $\phi(n)$ is the Euler phi function and $\mathrm{gcd}(a,n)=1.$ 
Let $n_1$ and $n_2$ be two distinct odd primes, define $n=n_1n_2,\ d=\mathrm{gcd}(n_1-1,n_2-1)$ and $e=(n_1-1)(n_2-1)/d$. From the Chinese Remainder theorem, there are common primitive roots of both $n_1$ and $n_2$. Let $g$ be a fixed common primitive root of both $n_1$ and $n_2$. Let $u$ be an integer satisfying
\begin{align}
 u\equiv g \ (\mathrm{mod}\ n_1),~~~~~  u\equiv 1 \ (\mathrm{mod}\ n_2).\label{defu}
\end{align}
Whiteman \cite{whiteman62} proved that
\begin{align}
 \mathbb{Z}_{n}^{\ast}=\{g^{s}u^{i}:s=0,1,\cdots,e-1;\ i=0,1,2,\cdots,d-1\}.\nonumber
\end{align}
where $\mathbb{Z}_{n}^{\ast}$ denotes the set of all invertible elements of the residue class ring $\mathbb{Z}_{n}$ and $e$ is the order of $g$ modulo $n$.
The Whiteman's generalized cyclotomic classes $W_i$  of order $d$ are defined by
\begin{align}
 W_i=\{g^su^i ~(\mathrm{mod}\ n) :s=0,1,\cdots,e-1\},i=0,1,\cdots,d-1.\nonumber
\end{align}
The classes $W_i$, $1\leq i\leq d-1$ give a partition of $ \mathbb{Z}_{n}^{\ast},$ i.e.,
 $\mathbb{Z}_{n}^{\ast}=\cup_{i=0}^{d-1} {W_i},\ W_{i}\cap W_{j}=\emptyset \ \mathrm{for}\ i\neq j.$ \\
Let
$$P=\{n_{1},2n_{1},3n_{1},\cdots,(n_{2}-1)n_{1}\},\ Q=\{n_{2},2n_{2},3n_{2},\cdots,(n_{1}-1)n_{2}\},$$
$$C_{0}=\{0\}\cup Q\cup \bigcup\limits_{i=0}^{\frac{d}{2}-1}W_{2i}\ and \ \ C_{1}=P\cup \bigcup\limits_{i=0}^{\frac{d}{2}-1}W_{2i+1},$$
$$C_{0}^{\ast}=\{0\}\cup Q\cup \bigcup\limits_{i=0}^{\frac{d}{2}-1}W_{i},\ C_{1}^{\ast}=P\cup \bigcup\limits_{i=\frac{d}{2}}^{d-1}W_{i}.$$

It is easy to see that if $d > 2$, then $C_0 \neq C_0^{\ast}$ and $C_1 \neq C_1^{\ast}$ . 
Now we introduce two kinds of Whiteman's generalized cyclotomic sequences of order $d$ (see \cite{defwhiteman}).
\begin{definition*} The two-prime Whiteman's generalized cyclotomic sequence $\lambda^\infty= (\lambda_i)_{i=0}^{n-1} $ of order $d$ and period $n$, 
which is called the first class two-prime Whiteman's generalized cyclotomic sequence denoted by WGCS-I, is defined by
\begin{equation}
\lambda_{i}=\left\{
\begin{array}{ll}
0,\ \ \mathrm{if}\ i\in C_{0}\\
1,\ \ \mathrm{if}\ i\in C_{1}.\label{si}
\end{array}
\right.
\end{equation}
The two-prime Whiteman's generalized cyclotomic sequence $s^\infty=(s_i)_{i=0}^{n-1}$ of order $d$ and period $n$, 
which is called the second class two-prime Whiteman's generalized cyclotomic sequence denoted by  WGCS-II, is defined by
\begin{equation}
s_{i}=\left\{
\begin{array}{ll}
0,\ \ \mathrm{if}\ i\in C^{\ast}_{0}\\
1,\ \ \mathrm{if}\ i\in C^{\ast}_{1}.\nonumber
\end{array}
\right.
\end{equation}
\end{definition*}
The sets $C_1$ and $C_1^\ast\subseteq \mathbb{Z}_n$ are called the characteristic sets of the sequences $\lambda^\infty$ and $s^\infty$ respectively, and the sequences $\lambda_i$ and $s_i$ are referred as the characteristic sequences of $C_1$ and $C_1^\ast$ respectively.

The cyclotomic numbers corresponding to these cyclotomic classes are defined as
$$
(i,j)_{d}=|(W_{i}+1)\cap W_{j}|, \nonumber where~~ 0\leq i,j\leq d-1.
$$
Additionally, for any $t\in \mathbb{Z}_{n}$, we define
\begin{align}
d(i,j;t)=|(W_{i}+t)\cap W_{j}|,\nonumber
\end{align}
where $W_{i}+t=\{w+t|w\in W_{i}\}$.

\subsection{Properties of Whiteman's cyclotomy of order d}\label{section 3.1}
In this subsection, we summarize number of properties of Whiteman's generalized cyclotomy of order $d=\mathrm{gcd}(n_1-1,n_2-1).$
The proof of the following Lemma follows from the Theorem 4.4.6 of \cite{Streamcipher}.
\begin{lemma}\label{val d}
Let the notations be same as before and $t\neq0$. We have
\begin{align}
d(i,j;t)=\left\{
\begin{array}{llll}
\frac{(n_{1}-1)(n_{2}-1)}{d^2},\ \ \ \ \ \ \ \ \ \ \ \ \ \ \ \ \  i\neq j,\ t \in  P \cup Q  \nonumber\\
\frac{(n_{1}-1)(n_{2}-1-d)}{d^2},\ \ \ \ \ \ \ \ \ \ \ \ \ \ i=j,\ t\in P,\ t \notin Q \nonumber \\
\frac{(n_{1}-1-d)(n_{2}-1)}{d^2},\ \ \ \ \ \ \ \ \ \ \ \ \ \ i=j,\ t\in Q,\ t \notin P \nonumber\\
(i',j')_d \ for\  some\ (i',j'),\ \  otherwise.
\end{array}
\right.
\end{align}
\end{lemma}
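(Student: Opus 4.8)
The plan is to push everything through the Chinese Remainder isomorphism $\mathbb{Z}_n\cong\mathbb{Z}_{n_1}\times\mathbb{Z}_{n_2}$, under which the sets $W_i$, $P$ and $Q$ become transparent, and then to invoke the ordinary single‑prime cyclotomic‑number identities recorded in Theorem 4.4.6 of \cite{Streamcipher}. Fix the common primitive root $g$, and for $x\in\mathbb{Z}_n^\ast$ let $a(x)=\mathrm{ind}_g(x\bmod n_1)$ and $b(x)=\mathrm{ind}_g(x\bmod n_2)$ denote the two discrete logarithms. First I would record, from $W_i=\{g^su^i\}$ together with $u\equiv g\ (\mathrm{mod}\ n_1)$, $u\equiv 1\ (\mathrm{mod}\ n_2)$ and the observation that $u^d\in W_0$, the clean description
\begin{align}
x\in W_i \iff a(x)-b(x)\equiv i \ (\mathrm{mod}\ d),\nonumber
\end{align}
and also note that $\mathbb{Z}_n\setminus\{0\}$ is the disjoint union of $\mathbb{Z}_n^\ast$, $P$ and $Q$ (an element is a multiple of exactly one of $n_1,n_2$, or of neither), so in particular $P\cap Q=\emptyset$ and the four cases of the statement are exhaustive and mutually exclusive.

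The easy case is $t\in\mathbb{Z}_n^\ast$, say $t\in W_k$. Since the $W_i$ are exactly the cosets of the subgroup $W_0=\langle g\rangle$ of $\mathbb{Z}_n^\ast$, multiplication by $t^{-1}$ is a bijection of $\mathbb{Z}_n$ carrying $W_i+t$ onto $W_{i-k}+1$ and $W_j$ onto $W_{j-k}$ (indices mod $d$); hence $d(i,j;t)=|(W_{i-k}+1)\cap W_{j-k}|=(i-k,\,j-k)_d$, which is the ``otherwise'' line.

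The real work is the case $t\in P$ (the case $t\in Q$ being identical after interchanging $n_1$ and $n_2$). Here $t\equiv 0\ (\mathrm{mod}\ n_1)$ while $t_2:=t\bmod n_2\in\mathbb{Z}_{n_2}^\ast$, so adding $t$ leaves the $\mathbb{Z}_{n_1}$‑coordinate unchanged. Counting $x\in W_i$ with $x+t\in W_j$ then factorises: for each admissible pair $(b(x),b(x+t))=(b,c)$ the number of compatible $\mathbb{Z}_{n_1}$‑coordinates $a(x)$ equals $(n_1-1)/d$ when $i+b\equiv j+c\ (\mathrm{mod}\ d)$ and is $0$ otherwise, whence
\begin{align}
d(i,j;t)=\frac{n_1-1}{d}\cdot\#\{\,x_2\in\mathbb{Z}_{n_2}^\ast:\ x_2+t_2\in\mathbb{Z}_{n_2}^\ast,\ \mathrm{ind}_g(x_2)-\mathrm{ind}_g(x_2+t_2)\equiv j-i\ (\mathrm{mod}\ d)\,\}.\nonumber
\end{align}
Substituting $x_2=t_2y$ and then $w=y/(y+1)$ turns the indicated set into $\{\,w\in\mathbb{Z}_{n_2}\setminus\{0,1\}:\mathrm{ind}_g(w)\equiv j-i\ (\mathrm{mod}\ d)\,\}$, i.e.\ a single cyclotomic class of order $d$ modulo $n_2$ with the element $1$ possibly removed. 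Its cardinality is $(n_2-1)/d$ if $i\neq j$ and $(n_2-1)/d-1=(n_2-1-d)/d$ if $i=j$; multiplying by $(n_1-1)/d$ yields the second line, the $Q$‑analogue yields the third, and when $i\neq j$ both reductions (over $n_2$ or over $n_1$) land on the common value $(n_1-1)(n_2-1)/d^2$ of the first line.

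I expect the only delicate point to be the bookkeeping in the $t\in P$ step: keeping the CRT‑index conditions straight, verifying that $x_2\mapsto y\mapsto w$ is a genuine bijection between the relevant punctured sets, and confirming that the sole coincidence that can arise is $w=1$ (equivalently, that $1$ lies in the class $D_0$ modulo $n_2$), which is precisely what produces the $-d$ correction in the $i=j$ rows. Once this is in place, the stated values are a direct transcription, through the Chinese Remainder isomorphism, of the single‑prime cyclotomic identities of Theorem 4.4.6 of \cite{Streamcipher}.
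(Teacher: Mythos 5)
Your argument is correct, but it is genuinely different in character from what the paper does: the paper offers no proof at all for this lemma, disposing of it with the single sentence that it ``follows from Theorem 4.4.6 of \cite{Streamcipher}'', whereas you supply a self-contained derivation. Your key steps all check out: the CRT description $x\in W_i \iff \mathrm{ind}_g(x \bmod n_1)-\mathrm{ind}_g(x\bmod n_2)\equiv i \pmod d$ follows from $u\equiv g \pmod{n_1}$, $u\equiv 1\pmod{n_2}$ together with Whiteman's partition of $\mathbb{Z}_n^{\ast}$; the unit case correctly reduces, via multiplication by $t^{-1}$ and the coset property $rW_j=W_{(i+j)\bmod d}$ (the paper's Lemma \ref{mod d}), to the cyclotomic number $(i-k,j-k)_d$ for $t\in W_k$, which even sharpens the vague ``$(i',j')_d$ for some $(i',j')$'' of the statement; and in the $t\in P$ case the factorization giving the factor $(n_1-1)/d$, the substitutions $x_2=t_2y$ and $w=y/(y+1)$, and the observation that only $w=1$ must be excised (exactly when $i=j$) correctly produce $(n_2-1)/d$ versus $(n_2-1-d)/d$, hence all three displayed values, with the $t\in Q$ case by symmetry. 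What the paper's citation buys is brevity; what your route buys is transparency — one sees exactly where the $-d$ correction comes from and which cyclotomic number appears in the generic case — and in fact your reduction lands on counting a single cyclotomic class modulo $n_2$, which you handle directly, so the closing appeal to Theorem 4.4.6 functions as a cross-check rather than a genuine dependency.
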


\begin{lemma}\label{W3}
Let the symbols be defined as before. The following four statements are equivalent:\\
(1) $-1 \in W_{\frac{d}{2}}.$\\
(2)  $\frac{(n_1-1)(n_2-1)}{d^2}$ is even.\\
(3) One of the following sets of equations are satisfied:\\
 $
 \left\{
\begin{array}{ll}
 n_1 \equiv 1 \ (\mathrm{mod}\ 2d)\nonumber\\
 n_2 \equiv d+1 \ (\mathrm{mod}\ 2d),\nonumber
\end{array}
\right.
 \left\{
\begin{array}{ll}
 n_1 \equiv d+1 \ (\mathrm{mod}\ 2d)\nonumber\\
 n_2 \equiv 1 \ (\mathrm{mod}\ 2d).\nonumber
\end{array}
\right.
$\\
(4) $ n_1n_2 \equiv d+1 \ (\mathrm{mod}\ 2d).$\\
\end{lemma}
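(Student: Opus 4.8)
The plan is to reduce all four conditions to a short trichotomy for the residues of $n_1$ and $n_2$ modulo $2d$. Since $n_1$ and $n_2$ are odd, $n_1-1$ and $n_2-1$ are even, so $d=\mathrm{gcd}(n_1-1,n_2-1)$ is even and $d/2$ is meaningful. Moreover $2d$ cannot divide both $n_1-1$ and $n_2-1$, since that would force $2d\mid\mathrm{gcd}(n_1-1,n_2-1)=d$. Because $d\mid n_i-1$, each $n_i$ is congruent to $1$ or to $d+1$ modulo $2d$, and by the preceding remark at least one of them is $\equiv d+1$. Hence exactly one of the following occurs, all congruences being taken modulo $2d$: (a) $n_1\equiv 1$ and $n_2\equiv d+1$; (b) $n_1\equiv d+1$ and $n_2\equiv 1$; (c) $n_1\equiv d+1$ and $n_2\equiv d+1$. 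I will show that each of (1)--(4) holds precisely in cases (a) and (b), which proves the equivalence.

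Conditions (2), (3) and (4) are then elementary. For (2), write $(n_1-1)(n_2-1)/d^2=\frac{n_1-1}{d}\cdot\frac{n_2-1}{d}$; this product is even iff one of the factors is even, i.e. iff $2d$ divides one of $n_1-1,n_2-1$, i.e. iff we are in case (a) or (b). Condition (3) is by definition the assertion that case (a) or case (b) holds. For (4): in cases (a) and (b) one has $n_1n_2\equiv 1\cdot(d+1)\equiv d+1\pmod{2d}$, whereas in case (c) one has $n_1n_2\equiv(d+1)^2=1+2d+d^2\equiv 1+d^2\pmod{2d}$, and since $d$ is even $2d\mid d^2$, so $n_1n_2\equiv 1\not\equiv d+1\pmod{2d}$.

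The real content is condition (1), which I would treat through the Chinese Remainder isomorphism $\mathbb{Z}_n^{\ast}\cong\mathbb{Z}_{n_1}^{\ast}\times\mathbb{Z}_{n_2}^{\ast}$. Writing $g_1=g\bmod n_1$ and $g_2=g\bmod n_2$ (both primitive roots) and using $u\equiv g\pmod{n_1}$, $u\equiv 1\pmod{n_2}$, one checks that $g^su^i$ corresponds to $(g_1^{s+i},g_2^{s})$. Since $-1$ corresponds to $(g_1^{(n_1-1)/2},g_2^{(n_2-1)/2})$ and $-1\in\mathbb{Z}_n^{\ast}$ lies in exactly one class $W_i$ with $0\le i\le d-1$, membership $-1\in W_i$ forces an integer $s$ with $s+i\equiv(n_1-1)/2\pmod{n_1-1}$ and $s\equiv(n_2-1)/2\pmod{n_2-1}$. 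Reducing both congruences modulo $d$ (which is legitimate because $d\mid n_1-1$ and $d\mid n_2-1$) and subtracting eliminates $s$ and gives $i\equiv\frac{n_1-1}{2}-\frac{n_2-1}{2}=\frac{n_1-n_2}{2}\pmod d$. Therefore $-1\in W_{d/2}$ iff $\frac{n_1-n_2}{2}\equiv\frac{d}{2}\pmod d$, i.e. iff $n_1-n_2\equiv d\pmod{2d}$, which by the trichotomy is exactly cases (a) and (b).

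The main obstacle is this last step: one must carefully transport the cyclotomic-class index of $-1$ through the CRT isomorphism together with the non-standard generators $g$ and $u$, and verify that reducing the exponent congruences modulo $d$ pins down the class index uniquely. The small observation that $2d$ cannot divide both $n_i-1$ is also indispensable --- without it the equivalence $(1)\Leftrightarrow(2)$ would break --- but granted the trichotomy, the remaining conditions (2), (3), (4) demand only routine computation.
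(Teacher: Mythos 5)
Your proof is correct, and it diverges from the paper's exactly where the lemma has real content. For the elementary equivalences the two arguments coincide up to packaging: the paper writes $n_1-1=df$, $n_2-1=df'$ with $\gcd(f,f')=1$ and argues on the parities of $f,f'$, which is precisely your trichotomy (a)/(b)/(c); your version has the minor advantage that $(3)\Rightarrow(2)$, $(3)\Rightarrow(4)$ and $(4)\Rightarrow(3)$ all drop out of one case analysis, where the paper declares two of them obvious and computes the third. The genuine difference is $(1)\Leftrightarrow(2)$: the paper does not prove it but cites equation (2.3) of Whiteman's original article, whereas you derive the cyclotomic class of $-1$ from scratch via the CRT correspondence $g^su^i\leftrightarrow(g_1^{s+i},g_2^{s})$, the identity $-1\equiv g_j^{(n_j-1)/2}\ (\mathrm{mod}\ n_j)$, and reduction of the exponent congruences modulo $d$ to obtain $i\equiv(n_1-n_2)/2\ (\mathrm{mod}\ d)$; combined with the fact, quoted in the paper's preliminaries, that the classes $W_i$ partition $\mathbb{Z}_n^{\ast}$, this pins down the class of $-1$ uniquely and yields $-1\in W_{d/2}$ if and only if $n_1-n_2\equiv d\ (\mathrm{mod}\ 2d)$, i.e.\ in cases (a) and (b). What your route buys is self-containedness and a sharper statement (the exact class index of $-1$ for every residue pattern, which in particular recovers the paper's later assertion for $d=6$ that $-1\in W_0$ or $W_3$ according to the parity of $(n_1-1)(n_2-1)/36$); what the citation buys the authors is brevity. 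The steps you flag as delicate --- transporting $-1$ through the CRT description, the uniqueness of the class index, and the observation that $2d$ cannot divide both $n_i-1$ --- all check out.
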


\begin{proof}
$(1) \Leftrightarrow (2)$  The result follows from (2.3) in \cite{whiteman62}.\\
$(2) \Rightarrow (3) $  Let $n_1-1=df,\  n_2-1=df'$ and $e=dff'$, where $f$ and $f'$ are integer. Since $\mathrm{gcd}(f,f')=1,\ f$ and $f'$ can not both be even. Here $ff'=\frac{(n_1-1)(n_2-1)}{d^2}$ is even. So, $f$ or $f'$ is even.
Let $f$ is even and $f'$ is odd. If $f$ is even, then $n_1-1 = d(2k_1)$, where $k_1$ is an integer. Therefore, $n_1 \equiv 1 \ ( \mathrm{mod}\ 2d )$.  If $f'$  is odd, then $n_1-1 = d(2k_2+1)$, where $k_2$ is an integer. Therefore, $n_2 \equiv d+1 \ ( \mathrm{mod}\ 2d ).$ 
Similarly, when $f$ is odd and $f'$ is even. We get $n_1 \equiv d+1 \ ( \mathrm{mod}\ 2d )$ and $n_2 \equiv 1 \ ( \mathrm{mod}\ 2d ).$\\
$(3) \Rightarrow (2)$  and $(3) \Rightarrow (4)$ are obvious.\\
$(4) \Rightarrow (3)$  Since $\mathrm{gcd}(n_1-1,n_2-1)=d$, let $n_1-1=fd$ and $n_2-1=f'd$. We have $ n_1n_2 \equiv d+1\ (\mathrm{mod}\ 2d)$, this gives $fd+f'd\equiv d \ (\mathrm{mod}\ 2d).$
Thus, we have $f+f'=2k+1$ for an integer $k$. So, $n_1=2kd+(1-f')d+1$, this gives $n_1\equiv(1-f')d+1 \ (\mathrm{mod}\ 2d).$
If $f'$ is odd, then $n_1\equiv 1 \ (\mathrm{mod}\ 2d)$ and $n_2 \equiv d+1 \ (\mathrm{mod}\ 2d)$. If $f'$ is even, then $n_1\equiv d+1 \ (\mathrm{mod}\ 2d)$ and $n_2 \equiv 1 (\mathrm{mod}\ 2d).$ 
\end{proof}

\begin{lemma}\label{W0}
Let the symbols be defined as before. The following four statements are equivalent:\\
(1) $-1 \in W_0.$\\
(2)  $\frac{(n_1-1)(n_2-1)}{d^2}$ is odd.\\
(3)  The following set of equation is satisfied:\\
$
\left\{
\begin{array}{ll}
n_1 \equiv d+1 \ (\mathrm{mod}\ 2d)\nonumber\\
n_2 \equiv d+1 \ (\mathrm{mod}\ 2d),\nonumber
\end{array}
\right.
$\\
(4) $ n_1n_2 \equiv (d+1)^2=1 \ (\mathrm{mod}\ 2d).$
\end{lemma}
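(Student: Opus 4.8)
The plan is to follow the template of the proof of Lemma~\ref{W3}, with ``$-1\in W_0$'' in place of ``$-1\in W_{\frac{d}{2}}$'' and ``odd'' in place of ``even''. Throughout I will use that $d=\gcd(n_1-1,n_2-1)$ is even (since $n_1-1$ and $n_2-1$ are both even), so that $d^2\equiv 0\ (\mathrm{mod}\ 2d)$; this is exactly what makes $(d+1)^2\equiv 1\ (\mathrm{mod}\ 2d)$, legitimising the equality written in statement~(4) and letting the implication chain close up.

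For $(1)\Leftrightarrow(2)$ I will appeal to the same source cited in Lemma~\ref{W3}, namely (2.3) of \cite{whiteman62}: $-1$ lies in exactly one class $W_i$, and Lemma~\ref{W3} already identifies the case $i=\frac{d}{2}$ with $(n_1-1)(n_2-1)/d^2$ even, so the odd case must give $i=0$. If a self-contained argument is preferred, write $-1=g^su^i$ and reduce modulo $n_1$ and modulo $n_2$ using $u\equiv g\ (\mathrm{mod}\ n_1)$, $u\equiv 1\ (\mathrm{mod}\ n_2)$ together with $g^{(n_1-1)/2}\equiv-1\ (\mathrm{mod}\ n_1)$ and $g^{(n_2-1)/2}\equiv-1\ (\mathrm{mod}\ n_2)$; writing $n_1-1=df$, $n_2-1=df'$ this forces $i\equiv\frac{d}{2}(f-f')\ (\mathrm{mod}\ d)$, hence $i=0$ exactly when $f-f'$ is even, which (since $\gcd(f,f')=1$ rules out both even) happens exactly when $ff'=(n_1-1)(n_2-1)/d^2$ is odd.

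For $(2)\Rightarrow(3)$: put $n_1-1=df$, $n_2-1=df'$; since $\gcd(f,f')=1$ they are not both even, and $ff'$ odd forces both $f$ and $f'$ odd, whence $n_1\equiv d+1$ and $n_2\equiv d+1\ (\mathrm{mod}\ 2d)$. The implications $(3)\Rightarrow(2)$ and $(3)\Rightarrow(4)$ are immediate: multiplying the two congruences gives $n_1n_2\equiv(d+1)^2\equiv1\ (\mathrm{mod}\ 2d)$, and reading them back shows $f,f'$ are odd, so $ff'$ is odd. For $(4)\Rightarrow(3)$: again with $n_1-1=fd$, $n_2-1=f'd$, expand $n_1n_2=1+(f+f')d+ff'd^2$; since $d^2\equiv0\ (\mathrm{mod}\ 2d)$, the hypothesis $n_1n_2\equiv1\ (\mathrm{mod}\ 2d)$ gives $(f+f')d\equiv0\ (\mathrm{mod}\ 2d)$, i.e.\ $f+f'$ is even, and with $\gcd(f,f')=1$ this means both $f$ and $f'$ are odd, which is statement~(3).

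I do not expect a substantial obstacle: the computations are elementary and run parallel to those of Lemma~\ref{W3}. The only points demanding care are keeping the evenness of $d$ in play consistently (so that $d^2$ and $(d+1)^2-1$ both vanish modulo $2d$) and quoting \cite{whiteman62} correctly for the single non-elementary equivalence $(1)\Leftrightarrow(2)$; if one prefers to avoid that citation, the direct modular computation sketched above supplies it.
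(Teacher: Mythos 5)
Your proposal is correct and follows essentially the same route as the paper, which simply proves this lemma by repeating the argument of Lemma~\ref{W3} with ``odd'' in place of ``even'' (writing $n_1-1=df$, $n_2-1=df'$, using $\gcd(f,f')=1$ and the evenness of $d$, and citing (2.3) of \cite{whiteman62} for $(1)\Leftrightarrow(2)$). Your optional direct computation locating the class of $-1$ via $i\equiv\frac{d}{2}(f-f')\ (\mathrm{mod}\ d)$ is a sound, self-contained substitute for that citation, but it does not change the nature of the proof.
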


\begin{proof}
The proof is similar to the proof of the above Lemma.
\end{proof}

\subsection{Properties of Whiteman's cyclotomy of order 6}\label{section 3.2}
We recall the following lemmas (Lemma 1 and Lemma 2) from \cite{gong13}. 
\begin{lemma}\label{odd&even}
Let $\mathrm{gcd}(n_1-1,n_2-1) = 6$, i.e.,  $n_1 \equiv 1\ \mathrm{mod\ 6}$, $n_2 \equiv 1\ \mathrm{mod\ 6}$. 
Let $a,b,x,y,c$ and $d$ are integers. There are 10 possible different cyclotomic numbers of order 6 and they are given by the following relations:
\\ If $\frac{(n_{1}-1)(n_{2}-1)}{36}$ is odd, we have
 $$
 \begin{array}{lllll}
 &(0,0)_{6}=\frac{1}{72}(12M+32+6a-24x+2c),\\
 &(0,1)_{6}=(1,0)_{6}=(5,5)_{6}=\frac{1}{72}(12M+8+a+3b+8x+24y-c+9d),\\
 &(0,2)_{6}=(2,0)_{6}=(4,4)_{6}=\frac{1}{72}(12M+8-3a+9b-c-9d),\\
 &(0,3)_{6}=(3,0)_{6}=(3,3)_{6}=\frac{1}{72}(12M+8-2a+8x+2c),\\
 &(0,4)_{6}=(4,0)_{6}=(2,2)_{6}=\frac{1}{72}(12M+8-3a-c-9b+9d),\\
 &(0,5)_{6}=(5,0)_{6}=(1,1)_{6}=\frac{1}{72}(12M+8+a-3b+8x-24y-c-9d),\\
 &(1,2)_{6}=(2,1)_{6}=(4,5)_{6}=(5,4)_{6}=(5,1)_{6}=(1,5)_{6}=\frac{1}{72}(12M-4-2a-4x+2c),\\
 &(1,3)_{6}=(2,5)_{6}=(3,1)_{6}=(3,4)_{6}=(4,3)_{6}=(5,2)_{6}=\frac{1}{72}(12M-4+a+3b-4x-12y-c+9d),\\
 &(1,4)_{6}=(2,3)_{6}=(3,2)_{6}=(3,5)_{6}=(4,1)_{6}=(5,3)_{6}=\frac{1}{72}(12M-4+a-3b-4x+12y-c-9d) \ \mathrm{and} \\
 &(2,4)_{6}=(4,2)_{6}=\frac{1}{72}(12M-4+6a+12x+2c).
 \end{array}
 $$
 If $\frac{(n_{1}-1)(n_{2}-1)}{36}$ is even, we have
 $$
 \begin{array}{lllll}
 &(0,0)_{6}=(3,0)_{6}=(3,3)_{6}=\frac{1}{72}(12M+20-8x-2a+2c),\\
 &(0,1)_{6}=(2,5)_{6}=(4,3)_{6}=\frac{1}{72}(12M-4-3a-9b-c+9d),\\
 &(0,2)_{6}=(1,4)_{6}=(5,3)_{6}=\frac{1}{72}(12M-4-8x+a-c+24y-3b-9d),\\
 &(0,3)_{6}=\frac{1}{72}(12M-4+24x+6a+2c),\\
 &(0,4)_{6}=(1,3)_{6}=(5,2)_{6}=\frac{1}{72}(12M-4-8x+a-c-24y+3b+9d),\\
 &(0,5)_{6}=(2,3)_{6}=(4,1)_{6}=\frac{1}{72}(12M-4-3a-c+9b-9d),\\
 &(1,0)_{6}=(2,2)_{6}=(3,1)_{6}=(3,4)_{6}=(4,0)_{6}=(5,5)_{6}=\frac{1}{72}(12M+8+4x+a-c+12y+3b+9d),\\
 &(1,1)_{6}=(2,0)_{6}=(3,2)_{6}=(3,5)_{6}=(4,4)_{6}=(5,0)_{6}=\frac{1}{72}(12M+8+4x+a-c-12y-3b-9d),\\
 &(1,2)_{6}=(1,5)_{6}=(2,4)_{6}=(4,2)_{6}=(5,1)_{6}=(5,4)_{6}=\frac{1}{72}(12M-4+4x-2a+2c)\ \mathrm{and} \\
 &(2,1)_{6}=(4,5)_{6}=\frac{1}{72}(12M-4+6a-12x+2c).
 \end{array}
 $$
 Where $n_1n_2 = x^2+3y^2, M = \frac{1}{6}((n_1-2)(n_2-2)-1)$ and $4n_1n_2 = a^2+3b^2 = c^2+27d^2$. 
 \end{lemma}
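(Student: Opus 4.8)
The numbers $(i,j)_6$ above are the \emph{generalized} cyclotomic numbers attached to the Whiteman classes $W_i\subseteq\mathbb{Z}_{n_1n_2}^{\ast}$, so the plan is to push the problem down to ordinary sextic cyclotomy modulo the two primes and then invoke the classical (Dickson) evaluation there. Using the Chinese Remainder isomorphism $\mathbb{Z}_n^{\ast}\cong\mathbb{Z}_{n_1}^{\ast}\times\mathbb{Z}_{n_2}^{\ast}$ together with the element $u$ of (\ref{defu}), one checks that $g^su^i$ corresponds to $(g_1^{\,s+i},g_2^{\,s})$ with $g_1=g\bmod n_1$, $g_2=g\bmod n_2$; since $\gcd((n_1-1)/6,(n_2-1)/6)=1$, it follows that $W_i$ corresponds to the disjoint union $\bigcup_{k=0}^{5}D_k^{(1)}\times D_{k-i}^{(2)}$, where $D_k^{(1)}$ and $D_l^{(2)}$ denote the ordinary cyclotomic classes of order $6$ modulo $n_1$ and $n_2$. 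Transporting $(W_i+1)\cap W_j$ through the isomorphism (a coordinate that becomes $0$ contributes nothing) yields the bilinear identity
\begin{align*}
(i,j)_6=\sum_{k=0}^{5}\sum_{m=0}^{5}(k,m)_6^{(n_1)}\,(k-i,\,m-j)_6^{(n_2)},
\end{align*}
expressing every generalized cyclotomic number as a fixed bilinear form in the ordinary ones modulo $n_1$ and $n_2$.

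Next I would record the symmetry relations responsible for the collapse to ten equivalence classes. The involution $w\mapsto w^{-1}$ gives $(i,j)_6=(-i,\,j-i)_6$ for all $i,j$; if $-1\in W_0$ the involution $w\mapsto-(w+1)$ gives in addition $(i,j)_6=(j,i)_6$, while if $-1\in W_3$ it gives $(i,j)_6=(j+3,\,i+3)_6$. By Lemmas \ref{W3} and \ref{W0} these two alternatives are precisely the cases in which $(n_1-1)(n_2-1)/36$ is even and odd, which is why the list of values splits into two shapes. A Burnside count for the copy of $S_3$ generated by these relations acting on $(\mathbb{Z}/6\mathbb{Z})^2$ produces ten orbits in each case, and the orbits match exactly the groupings displayed in the two tables. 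Adjoining the row- and column-sum relations — $\sum_m(k,m)_6$ and $\sum_k(k,m)_6$ equal $|W_k|$, resp. $|W_m|$, minus explicit corrections counting when $W_k\pm1$ meets $P\cup Q\cup\{0\}$ — leaves only a few independent unknowns.

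To evaluate those unknowns I would pass to Gaussian periods and Jacobi sums. Modulo a prime $p\equiv 1\ (\mathrm{mod}\ 6)$ the sextic cyclotomic numbers are the structure constants of the periods $\eta_i=\sum_{a\in D_i}\zeta_p^{\,a}$, and the sextic Jacobi sums $J(\chi,\chi)$, $J(\chi,\chi^2)$ lie in $\mathbb{Z}[\zeta_3]$ with norm $p$; this is exactly what produces, for each $n_\ell$, representations $n_\ell=x_\ell^2+3y_\ell^2$ and $4n_\ell=a_\ell^2+3b_\ell^2=c_\ell^2+27d_\ell^2$ in Dickson's formulas. Substituting Dickson's formulas for $n_1$ and $n_2$ into the bilinear identity, the products of local Jacobi sums assemble into Jacobi-type sums for the composite, and multiplicativity of the norm on $\mathbb{Z}[\zeta_3]$ turns the products of the local quadratic-form representations into single representations $n_1n_2=x^2+3y^2$ and $4n_1n_2=a^2+3b^2=c^2+27d^2$. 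The term $\tfrac{1}{36}\bigl((n_1-2)(n_2-2)-1\bigr)=M/6$ common to all the $(i,j)_6$ is just the ``diagonal'' contribution, namely $|W_i|/6$ up to the usual lower-order corrections. Collecting the terms gives the two displayed lists.

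The hard part is bookkeeping rather than any single idea. First, Dickson's formulas hold only under specific sign and residue normalizations of the parameters (for instance $a\equiv 1\ (\mathrm{mod}\ 3)$, with the signs of $b$ and $d$ tied to the chosen primitive root and to a fixed square root of $-3$); carrying these normalizations coherently through the CRT splitting, so that the composite $x,y,a,b,c,d$ are well defined and the even and odd cases permute them exactly as shown, is delicate. Second, expanding the $36$-term bilinear form with Dickson's already case-heavy inputs and reducing it to ten closed expressions is long, the natural consistency check being the linear sum-relations above. Since this computation is carried out in \cite{gong13}, in the present paper it is enough to cite it.
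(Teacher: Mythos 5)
The paper does not prove this lemma at all: it is recalled verbatim from \cite{gong13}, which is exactly where your proposal ends up, so your bottom line coincides with the paper's approach, and your sketch (CRT reduction of the Whiteman classes to ordinary sextic cyclotomy modulo $n_1$ and $n_2$, the bilinear identity for $(i,j)_6$, the symmetry relations, and the Dickson/Jacobi-sum evaluation) is a reasonable outline of how the cited result is established. The one slip is your parity pairing, which should read $-1\in W_0$ when $(n_1-1)(n_2-1)/36$ is odd and $-1\in W_3$ when it is even (Lemmas \ref{W3} and \ref{W0}), the reverse of what you wrote; this matches the tables, since the odd-case list is the symmetric one with $(i,j)_6=(j,i)_6$.
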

 
 \begin{lemma}\label{lemma 3.6}
\cite{whiteman62} Define $\eta=\frac{(n_{1}-1)(n_{2}-1)}{36}$. Let symbols be same as before. Then
$$
-1 \in \left\{
\begin{array}{llll}
W_0,\   \ \mathrm{if}\ \eta\   is\ odd\\
W_3,\   \ \mathrm{if}\ \eta\   is\ even.
\end{array}
\right.
$$
\end{lemma}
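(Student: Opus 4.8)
The plan is to obtain Lemma \ref{lemma 3.6} as a direct specialization of Lemmas \ref{W3} and \ref{W0} to the order $d=6$. Under the standing hypothesis $\gcd(n_1-1,n_2-1)=6$ we take $d=6$, so that $d/2=3$, $d^2=36$, and the quantity $\frac{(n_1-1)(n_2-1)}{d^2}$ occurring in those two lemmas coincides with $\eta$. Since $6\mid n_1-1$ and $6\mid n_2-1$, the number $\eta$ is a positive integer, hence it is either even or odd, and these two cases are exhaustive and mutually exclusive.

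First I would note that $-1\in\mathbb{Z}_n^{\ast}$: as $n_1$ and $n_2$ are odd primes, $\gcd(-1,n)=1$, so by Whiteman's partition $\mathbb{Z}_n^{\ast}=\bigcup_{i=0}^{5}W_i$ the element $-1$ lies in exactly one of the classes $W_0,\dots,W_5$. It then remains only to identify which one, and this is dictated by the parity of $\eta$. If $\eta$ is odd, then $\frac{(n_1-1)(n_2-1)}{d^2}=\eta$ is odd, so the equivalence $(1)\Leftrightarrow(2)$ of Lemma \ref{W0} gives $-1\in W_0$. If $\eta$ is even, then $\frac{(n_1-1)(n_2-1)}{d^2}=\eta$ is even, so the equivalence $(1)\Leftrightarrow(2)$ of Lemma \ref{W3} gives $-1\in W_{d/2}=W_3$. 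Together these two cases cover all possibilities and prove the assertion.

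I do not expect any genuine obstacle here: the substantive input has already been established in Lemmas \ref{W3} and \ref{W0} (which themselves rest on equation (2.3) of \cite{whiteman62}), and what is left is the bookkeeping $d=6\Rightarrow d^2=36,\ d/2=3$ together with the observation that $\eta\in\mathbb{Z}$, which guarantees the case split is exhaustive. If one instead wanted a proof independent of Lemmas \ref{W3} and \ref{W0}, one would argue directly that $-1=g^{e/2}u^{i}$ for the appropriate $i\in\{0,3\}$ by tracking $g^{e/2}$ modulo $n_1$ and modulo $n_2$ in terms of $\eta=ff'$, where $n_1-1=6f$ and $n_2-1=6f'$; but this merely reproduces Whiteman's computation and is unnecessary given the lemmas already in hand.
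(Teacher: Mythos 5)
Your argument is correct. Note, though, that the paper does not prove Lemma \ref{lemma 3.6} at all: it is simply recalled with a citation to \cite{whiteman62}. What you do instead is derive it as the $d=6$ specialization of the paper's own Lemmas \ref{W3} and \ref{W0}, using the equivalences $(1)\Leftrightarrow(2)$ there together with the bookkeeping $d/2=3$, $d^2=36$, the integrality of $\eta$ (which holds since $6\mid n_1-1$ and $6\mid n_2-1$), and the fact that $-1\in\mathbb{Z}_n^{\ast}$ lies in exactly one class $W_i$. That is a legitimate and complete deduction within the paper's logical structure, and it has the advantage of making the lemma an explicit corollary of results already on the page rather than an external import. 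The only caveat, which you yourself acknowledge, is that it is not more self-contained at bottom: the $(1)\Leftrightarrow(2)$ step of Lemmas \ref{W3} and \ref{W0} is itself justified in the paper only by citing (2.3) of \cite{whiteman62}, so your route ultimately rests on the same source the paper cites directly; a genuinely independent proof would require the direct computation with $-1=g^{e/2}u^{i}$ that you sketch at the end.
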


\section{Autocorrelation values}\label{aucorrelation}
In this section, we will calculate the autocorrelation values of two-prime WGCS-I.
Let the symbols be the same as before. The periodic autocorrelation function of the binary sequence $\lambda^\infty$ of period $n$ is defined by
\begin{align}
 C_\lambda(w)=\frac{1}{n}\sum\limits_{i=0}^{n-1}(-1)^{\lambda_{i+w}+\lambda_i},\nonumber
\end{align}
where $0 \leqslant w \leqslant n-1.$
We define,\\
$d_\lambda(i,j;w)=|C_i\cap (C_j+w)|,$ \ \ $0 < w\leqslant n-1,\ i,j=0,1.$

\begin{lemma}\label{val d1}
 \cite{auto2} Let ${\lambda_i}$ be the characteristic sequence of $C_1 \subseteq \mathbb{Z}_n.$ Then we have
\begin{align}
 C_\lambda(w)=1-4\frac{d_\lambda(1,0;w)}{n},\nonumber
\end{align}
where $0 < w \leqslant n-1.$
\end{lemma}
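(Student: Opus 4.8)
The plan is purely combinatorial: expand the definition $C_\lambda(w)=\tfrac1n\sum_{i=0}^{n-1}(-1)^{\lambda_{i+w}+\lambda_i}$ and sort the $n$ summands according to the pair $(\lambda_i,\lambda_{i+w})\in\{0,1\}^2$. Since $\lambda_i\in\{0,1\}$, the summand $(-1)^{\lambda_{i+w}+\lambda_i}$ equals $+1$ when $\lambda_i=\lambda_{i+w}$ and $-1$ otherwise, so writing $N_{ab}=|\{\,i\in\mathbb{Z}_n:\lambda_i=a,\ \lambda_{i+w}=b\,\}|$ for $a,b\in\{0,1\}$ one gets
\[
\sum_{i=0}^{n-1}(-1)^{\lambda_{i+w}+\lambda_i}=N_{00}+N_{11}-N_{01}-N_{10}.
\]
The four index sets are pairwise disjoint with union $\mathbb{Z}_n$, hence $N_{00}+N_{01}+N_{10}+N_{11}=n$; eliminating $N_{00}+N_{11}$ gives $\sum_{i=0}^{n-1}(-1)^{\lambda_{i+w}+\lambda_i}=n-2(N_{01}+N_{10})$.

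Next I would show $N_{01}=N_{10}=d_\lambda(1,0;w)$. Recall $\lambda_i=0\iff i\in C_0$ and $\lambda_i=1\iff i\in C_1$, and that $i\mapsto i+w$ is a bijection of $\mathbb{Z}_n$. Then $N_{01}+N_{11}=|\{i:\lambda_{i+w}=1\}|=|\{i:i+w\in C_1\}|=|C_1|$ and $N_{10}+N_{11}=|\{i:\lambda_i=1\}|=|C_1|$, so $N_{01}=N_{10}$. To identify $N_{01}$ with $d_\lambda(1,0;w)=|C_1\cap(C_0+w)|$, substitute $x=i+w$: the conditions $\lambda_i=0$, $\lambda_{i+w}=1$ become $x-w\in C_0$, $x\in C_1$, i.e. $x\in C_1\cap(C_0+w)$, and since the substitution is bijective, $N_{01}=|C_1\cap(C_0+w)|=d_\lambda(1,0;w)$.

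Combining the two identities, $\sum_{i=0}^{n-1}(-1)^{\lambda_{i+w}+\lambda_i}=n-4\,d_\lambda(1,0;w)$, and dividing by $n$ yields $C_\lambda(w)=1-4\,d_\lambda(1,0;w)/n$, as claimed. The argument is elementary bookkeeping and I do not expect a genuine obstacle; the only point demanding a little care is the index substitution $x=i+w$ (and the bijectivity used to get $N_{01}=N_{10}$), which rests solely on the fact that $C_0\cup C_1=\mathbb{Z}_n$ and $C_0\cap C_1=\emptyset$.
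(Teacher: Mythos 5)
Your proof is correct: the partition of the $n$ indices by the pair $(\lambda_i,\lambda_{i+w})$, the identity $N_{01}=N_{10}$ obtained by comparing $|\{i:\lambda_i=1\}|$ with $|\{i:\lambda_{i+w}=1\}|$, and the identification $N_{01}=|C_1\cap(C_0+w)|=d_\lambda(1,0;w)$ via the shift $x=i+w$ together give exactly $C_\lambda(w)=1-4d_\lambda(1,0;w)/n$, using only that $C_0$ and $C_1$ partition $\mathbb{Z}_n$. The paper itself gives no proof of this lemma (it is quoted from the cited reference on order-2 sequences), and your argument is the standard counting proof one would expect there, so there is nothing further to reconcile.
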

The following formula will be needed in the sequel:
\begin{align}
&d_\lambda(1,0;w)=|C_1\cap (C_0+w)| \nonumber \\
&~~~~~~~~~~~~~~~~~=|(P\cup W_1 \cup W_3\cup W_5)\cap ((R\cup Q\cup W_0\cup W_2\cup W_4)+w)| \nonumber \\
&=|(W_1\cup W_3\cup W_5)\cap ((W_0\cup W_2\cup W_4)+w)|+|P\cap ((W_0\cup W_2\cup W_4)+w)| \nonumber \\
&+|(W_1\cup W_3\cup W_5)\cap ((R\cup Q)+w)|+|P\cap ((R \cup Q)+w)|.\label{ds}
\end{align}
To calculate the value of $d_\lambda(1,0;w)$, we need the following lemmas.

\begin{lemma}\label{val}
 For each $w\in {\mathbb{Z}^*_n}$, we have\\
\rm{(i)} If $(n_1-1)(n_2-1)/36$ is even, then
\begin{align}
|(W_1\cup W_3\cup W_5)\cap ((W_0\cup W_2\cup W_4)+w)|=\left\{ 
 \begin{array}{ll}
  (3M-1)/2,\ \ if\  w\in W_0\cup W_2\cup W_4\nonumber\\
  (3M+1)/2,\ \ if\  w\in W_1\cup W_3\cup W_5\nonumber.\\
 \end{array}
 \right.
\end{align}
\rm{(ii)} If $(n_1-1)(n_2-1)/36$ is odd, then
\begin{align}
|(W_1\cup W_3\cup W_5)\cap ((W_0\cup W_2\cup W_4)+w)|=\left\{ 
 \begin{array}{ll}
  3M/2,\ \ if\  w\in {\mathbb{Z}^*_n}. \nonumber\\
 \end{array}
 \right.
\end{align}

\end{lemma}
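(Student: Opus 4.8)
The plan is to reduce the quantity $|(W_1\cup W_3\cup W_5)\cap((W_0\cup W_2\cup W_4)+w)|$ to a sum of cyclotomic numbers $(i,j)_6$ and then evaluate that sum with Lemma~\ref{odd&even}. Expanding the intersection of unions,
\[
|(W_1\cup W_3\cup W_5)\cap((W_0\cup W_2\cup W_4)+w)|=\sum_{i\in\{1,3,5\}}\ \sum_{j\in\{0,2,4\}}|W_i\cap(W_j+w)|.
\]
Fix $k\in\{0,1,\dots,5\}$ with $w\in W_k$. Since $w$ is a unit, multiplication by $w^{-1}$ is a bijection of $\Z_n$ fixing $0$, $P$ and $Q$ and permuting the classes $W_\ell$; using $W_\ell=u^\ell W_0$ with $W_0=\langle g\rangle$ one gets $w^{-1}W_\ell=W_{\ell-k}$ (indices mod $6$), whence $w^{-1}(W_j+w)=W_{j-k}+1$ and therefore $|W_i\cap(W_j+w)|=|(W_{j-k}+1)\cap W_{i-k}|=(j-k,\,i-k)_6$. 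Thus the double sum equals $\sum_{i\in\{1,3,5\}}\sum_{j\in\{0,2,4\}}(j-k,\,i-k)_6$, and reindexing by $j'=j-k$, $i'=i-k$ shows it is $\Sigma_{1}:=\sum_{j'\in\{0,2,4\}}\sum_{i'\in\{1,3,5\}}(j',i')_6$ when $k$ is even and $\Sigma_{2}:=\sum_{j'\in\{1,3,5\}}\sum_{i'\in\{0,2,4\}}(j',i')_6$ when $k$ is odd. In particular the value depends on $w$ only through whether $w\in W_0\cup W_2\cup W_4$ or $w\in W_1\cup W_3\cup W_5$, which is precisely the case division in the statement.

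It remains to evaluate the four numbers $\Sigma_1,\Sigma_2$ in the two cases $(n_1-1)(n_2-1)/36$ odd and even, and here I would simply substitute the ten expressions of Lemma~\ref{odd&even}. Each of $\Sigma_1$ and $\Sigma_2$ is a sum of nine cyclotomic numbers, each contributing $\tfrac{1}{72}\cdot 12M$, so the $M$-part is always $\tfrac{9\cdot 12M}{72}=\tfrac{3M}{2}$. One then checks two things: (a) in each of the four sums all the terms involving the auxiliary integers $a,b,c,d,x,y$ of Lemma~\ref{odd&even} cancel; (b) the constant terms add up as follows — for $(n_1-1)(n_2-1)/36$ odd, both $\Sigma_1$ and $\Sigma_2$ have constant part $\tfrac{1}{72}(3\cdot 8+6\cdot(-4))=0$, giving $\Sigma_1=\Sigma_2=3M/2$; for $(n_1-1)(n_2-1)/36$ even, the nine constants making up $\Sigma_1$ are all $-4$, so its constant part is $-36/72$ and $\Sigma_1=(3M-1)/2$, while the nine constants of $\Sigma_2$ sum to $+36$, so $\Sigma_2=(3M+1)/2$. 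Matching $\Sigma_1$ with $w\in W_0\cup W_2\cup W_4$ and $\Sigma_2$ with $w\in W_1\cup W_3\cup W_5$ then yields exactly the two formulas claimed.

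The computation in the second paragraph is entirely routine once each of the eighteen pairs $(j',i')$ has been correctly located among the ten equivalence classes of cyclotomic numbers listed in Lemma~\ref{odd&even}; the one point requiring care is the reduction in the first paragraph, namely justifying $w^{-1}W_\ell=W_{\ell-k}$ from the coset description $W_\ell=u^\ell\langle g\rangle$ and the fact that scaling by a unit does not change cardinalities of intersections. As a sanity check one may observe that $\gcd(n_1-1,n_2-1)=6$ forces $a_1,a_2$ to be coprime when $n_i=6a_i+1$, so $\eta=(n_1-1)(n_2-1)/36=a_1a_2$ and $M\equiv a_1+a_2\pmod 2$ always have opposite parities; hence $(3M\pm1)/2\in\Z$ exactly when $\eta$ is even and $3M/2\in\Z$ exactly when $\eta$ is odd, which is consistent with the two cases of the lemma.
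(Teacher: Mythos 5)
Your proposal is correct and follows essentially the same route as the paper: both decompose the quantity into the nine terms $|W_i\cap(W_j+w)|$, identify each with an order-6 cyclotomic number depending on the class of $w$, and then sum using the formulas of Lemma \ref{odd&even}, with the stated cancellations and constant sums ($0$, $-36$, $+36$) checking out. The only difference is cosmetic: the paper imports the identification $|W_i\cap(W_j+w)|=(j-k,i-k)_6$ as a table cited from the literature, whereas you rederive it directly by multiplying by $w^{-1}$, which is exactly the content of Lemma \ref{mod d}.
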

\begin{proof}
We know that 
$|(W_1\cup W_3\cup W_5)\cap ((W_0\cup W_2\cup W_4)+w)|=|W_1\cap (W_0+w)|+|W_3\cap (W_0+w)|+|W_5\cap (W_0+w)|+|W_1\cap (W_2+w)|+|W_3\cap (W_2+w)|+|W_5\cap (W_2+w)|+|W_1\cap (W_4+w)|+|W_3\cap (W_4+w)|+|W_5\cap (W_4+w)|$.\\
From Lemma 8 in \cite{Auto13}, we have the following results:\\
\rm{(a)}\ \ \ $
|W_1\cap (W_0+w)|=\left\{
\begin{array}{llll}
(0,1)_6,\ \ if\ \ w\in W_0\nonumber\\
(5,0)_6,\ \ if\ \ w\in W_1\nonumber \\
(4,5)_6,\ \ if\ \ w\in W_2\nonumber\\
(3,4)_6,\ \ if\ \ w\in W_3\nonumber\\
(2,3)_6,\ \ if\ \ w\in W_4\nonumber\\
(1,2)_6,\ \ if\ \ w\in W_5.\nonumber\\
\end{array}
\right.
$
\rm{(b)}\ \ \ $
|W_3\cap (W_0+w)|=\left\{
\begin{array}{llll}
(0,3)_6,\ \ if\ \ w\in W_0\nonumber\\
(5,2)_6,\ \ if\ \ w\in W_1\nonumber \\
(4,1)_6,\ \ if\ \ w\in W_2\nonumber\\
(3,0)_6,\ \ if\ \ w\in W_3\nonumber\\
(2,5)_6,\ \ if\ \ w\in W_4\nonumber\\
(1,4)_6,\ \ if\ \ w\in W_5.\nonumber\\
\end{array}
\right.\\
$
(c)\ \ \ $
|W_5\cap (W_0+w)|=\left\{
\begin{array}{llll}
(0,5)_6,\ \ if\ \ w\in W_0\nonumber\\
(5,4)_6,\ \ if\ \ w\in W_1\nonumber \\
(4,3)_6,\ \ if\ \ w\in W_2\nonumber\\
(3,2)_6,\ \ if\ \ w\in W_3\nonumber\\
(2,1)_6,\ \ if\ \ w\in W_4\nonumber\\
(1,0)_6,\ \ if\ \ w\in W_5.\nonumber\\
\end{array}
\right.
$
(d)\ \ \ $
|W_1\cap (W_2+w)|=\left\{
\begin{array}{llll}
(2,1)_6,\ \ if\ \ w\in W_0\nonumber\\
(1,0)_6,\ \ if\ \ w\in W_1\nonumber \\
(0,5)_6,\ \ if\ \ w\in W_2\nonumber\\
(5,4)_6,\ \ if\ \ w\in W_3\nonumber\\
(4,3)_6,\ \ if\ \ w\in W_4\nonumber\\
(3,2)_6,\ \ if\ \ w\in W_5.\nonumber\\
\end{array}
\right.\\
$
(e)\ \ \ $
|W_3\cap (W_2+w)|=\left\{
\begin{array}{llll}
(2,3)_6,\ \ if\ \ w\in W_0\nonumber\\
(1,2)_6,\ \ if\ \ w\in W_1\nonumber \\
(0,1)_6,\ \ if\ \ w\in W_2\nonumber\\
(5,0)_6,\ \ if\ \ w\in W_3\nonumber\\
(4,5)_6,\ \ if\ \ w\in W_4\nonumber\\
(3,4)_6,\ \ if\ \ w\in W_5.\nonumber\\
\end{array}
\right.
$
(f)\ \ \ $
|W_5\cap (W_2+w)|=\left\{
\begin{array}{llll}
(2,5)_6,\ \ if\ \ w\in W_0\nonumber\\
(1,4)_6,\ \ if\ \ w\in W_1\nonumber \\
(0,3)_6,\ \ if\ \ w\in W_2\nonumber\\
(5,2)_6,\ \ if\ \ w\in W_3\nonumber\\
(4,1)_6,\ \ if\ \ w\in W_4\nonumber\\
(3,0)_6,\ \ if\ \ w\in W_5.\nonumber\\
\end{array}
\right.\\
$
(g)\ \ \ $
|W_1\cap (W_4+w)|=\left\{
\begin{array}{llll}
(4,1)_6,\ \ if\ \ w\in W_0\nonumber\\
(3,0)_6,\ \ if\ \ w\in W_1\nonumber \\
(2,5)_6,\ \ if\ \ w\in W_2\nonumber\\
(1,4)_6,\ \ if\ \ w\in W_3\nonumber\\
(0,3)_6,\ \ if\ \ w\in W_4\nonumber\\
(5,2)_6,\ \ if\ \ w\in W_5.\nonumber\\
\end{array}
\right.
$
(h)\ \ \ $
|W_3\cap (W_4+w)|=\left\{
\begin{array}{llll}
(4,3)_6,\ \ if\ \ w\in W_0\nonumber\\
(3,2)_6,\ \ if\ \ w\in W_1\nonumber \\
(2,1)_6,\ \ if\ \ w\in W_2\nonumber\\
(1,0)_6,\ \ if\ \ w\in W_3\nonumber\\
(0,5)_6,\ \ if\ \ w\in W_4\nonumber\\
(5,4)_6,\ \ if\ \ w\in W_5.\nonumber\\
\end{array}
\right.\\
$
(i)\ \ \ $
|W_5\cap (W_4+w)|=\left\{
\begin{array}{llll}
(4,5)_6,\ \ if\ \ w\in W_0\nonumber\\
(3,4)_6,\ \ if\ \ w\in W_1\nonumber \\
(2,3)_6,\ \ if\ \ w\in W_2\nonumber\\
(1,2)_6,\ \ if\ \ w\in W_3\nonumber\\
(0,1)_6,\ \ if\ \ w\in W_4\nonumber\\
(5,0)_6,\ \ if\ \ w\in W_5.\nonumber
\end{array}
\right.
$\\
From the above discussion and using Lemma \ref{odd&even}, we get the value of $|(W_1\cup W_3\cup W_5)\cap ((W_0\cup W_2\cup W_4)+w)|$.
\end{proof}

\begin{lemma}\label{val p} 
Let the notations be same as before.\\
{\rm{(i)}} If $(n_1-1)(n_2-1)/36$ is even, then
\begin{align}
 |P\cap ((W_0\cup W_2\cup W_4)+w)|=\left\{ 
 \begin{array}{llll}
  0,\ \ \ \ \ \ \ \ \ \ \ \ \ \ \ \ \ \ \ if\ \ w\in P\nonumber\\
  (n_2-1)/2,\ \ \ \ \ \ \   if\ \ w\in Q\cup W_0\cup W_2\cup W_4\nonumber\\
  (n_2-1)/2-1,\ \ if\ \ w\in W_1\cup W_3\cup W_5.\nonumber
 \end{array}
 \right.
\end{align}
{\rm{(ii)}} If $(n_1-1)(n_2-1)/36$ is odd, then
\begin{align}
|P\cap ((W_0\cup W_2\cup W_4)+w)|=\left\{ 
 \begin{array}{ll}
  0,\ \ \ \ \ \ \ \ \ \ \ \ \ \ \ \ \ \ \ if\ \ w\in P\nonumber\\
  (n_2-1)/2,\ \ \ \ \ \ \ if\ \ w\in Q\cup W_1\cup W_3\cup W_5\nonumber\\
  (n_2-1)/2-1,\ \ if\ \ w\in  W_0\cup W_2\cup W_4.\nonumber\\
  \end{array}
 \right.
\end{align}
\end{lemma}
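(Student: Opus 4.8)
The plan is to transport the entire count through the Chinese Remainder isomorphism $\mathbb{Z}_n\cong\mathbb{Z}_{n_1}\times\mathbb{Z}_{n_2}$. Under it the sets $\{0\}$, $P$, $Q$, $\mathbb{Z}_n^{*}$ correspond respectively to $(0,0)$, $\{0\}\times\mathbb{Z}_{n_2}^{*}$, $\mathbb{Z}_{n_1}^{*}\times\{0\}$, $\mathbb{Z}_{n_1}^{*}\times\mathbb{Z}_{n_2}^{*}$. Letting $\mathrm{ind}_1,\mathrm{ind}_2$ denote the discrete logarithms to the common primitive root $g$ modulo $n_1$ and $n_2$, the defining relations $u\equiv g\ (\mathrm{mod}\ n_1)$ and $u\equiv 1\ (\mathrm{mod}\ n_2)$, together with a CRT solvability argument for the exponent $s$ over $\{0,\dots,e-1\}$, give the membership criterion: for $(a,b)\in\mathbb{Z}_{n_1}^{*}\times\mathbb{Z}_{n_2}^{*}$ one has $(a,b)\in W_i$ iff $\mathrm{ind}_1(a)-\mathrm{ind}_2(b)\equiv i\ (\mathrm{mod}\ 6)$; in particular $(a,b)\in W_0\cup W_2\cup W_4$ iff $\mathrm{ind}_1(a)-\mathrm{ind}_2(b)$ is even. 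Throughout I count the $x\in P$, i.e. $x\leftrightarrow(0,b)$ with $b\in\mathbb{Z}_{n_2}^{*}$, for which $x-w$ lands in $W_0\cup W_2\cup W_4$.

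If $w\in P$ then $x-w\leftrightarrow(0,b-b')$ with $b,b'\in\mathbb{Z}_{n_2}^{*}$, which is either $0$ or lies in $P$, hence never a unit, so the count is $0$ in both (i) and (ii). If $w\in Q$, write $w\leftrightarrow(a,0)$ with $a\neq0$; then every $x-w\leftrightarrow(-a,b)$, $b\in\mathbb{Z}_{n_2}^{*}$, is a unit, and it lies in $W_0\cup W_2\cup W_4$ iff $\mathrm{ind}_2(b)\equiv\mathrm{ind}_1(-a)\ (\mathrm{mod}\ 2)$. Since $\mathrm{ind}_2$ is a bijection of $\mathbb{Z}_{n_2}^{*}$ onto $\mathbb{Z}/(n_2-1)\mathbb{Z}$ and $n_2-1$ is even, exactly $(n_2-1)/2$ such $b$ exist, which settles the value $(n_2-1)/2$ on the $Q$-line in both parts.

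The substantial case is $w\in\mathbb{Z}_n^{*}$, say $w\leftrightarrow(a,b_0)$ with $a,b_0\neq0$ and $w\in W_i$, so $\mathrm{ind}_1(a)-\mathrm{ind}_2(b_0)\equiv i\ (\mathrm{mod}\ 6)$. Then $x-w\leftrightarrow(-a,\,b-b_0)$ is a unit exactly when $b\neq b_0$ (the value $b=b_0$ throws $x-w$ into $Q$, contributing nothing). Setting $c=b-b_0$, as $b$ runs over $\mathbb{Z}_{n_2}^{*}\setminus\{b_0\}$ the variable $c$ runs over $\mathbb{Z}_{n_2}^{*}\setminus\{-b_0\}$, and we must count $c$ with $\mathrm{ind}_2(c)\equiv\mathrm{ind}_1(-a)\ (\mathrm{mod}\ 2)$. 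There are $(n_2-1)/2$ such $c$ in all of $\mathbb{Z}_{n_2}^{*}$, so the answer is $(n_2-1)/2$ if the single removed value $c=-b_0$ fails the congruence and $(n_2-1)/2-1$ if it satisfies it.

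The crux, and the place where Lemma \ref{lemma 3.6} enters, is deciding which alternative occurs. Using $\mathrm{ind}_1(-1)=(n_1-1)/2$, $\mathrm{ind}_2(-1)=(n_2-1)/2$ and $\mathrm{ind}_1(a)-\mathrm{ind}_2(b_0)\equiv i\ (\mathrm{mod}\ 2)$, one finds $\mathrm{ind}_1(-a)-\mathrm{ind}_2(-b_0)\equiv\frac{n_1-1}{2}-\frac{n_2-1}{2}+i\ (\mathrm{mod}\ 2)$. Writing $n_1-1=6f$, $n_2-1=6f'$ with $\mathrm{gcd}(f,f')=1$ and $\eta=(n_1-1)(n_2-1)/36=ff'$, we have $\frac{n_1-1}{2}-\frac{n_2-1}{2}=3(f-f')\equiv f+f'\ (\mathrm{mod}\ 2)$, which is odd precisely when $\eta$ is even (equivalently, by Lemma \ref{lemma 3.6}, when $-1\in W_3$) and even precisely when $\eta$ is odd. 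Hence $c=-b_0$ satisfies the congruence, so we subtract $1$, exactly when $i$ is odd in the $\eta$-even case and exactly when $i$ is even in the $\eta$-odd case; since $W_0\cup W_2\cup W_4$ (resp. $W_1\cup W_3\cup W_5$) is the union of the $W_i$ with $i$ even (resp. odd), this reproduces the two case splits in (i) and (ii). I expect the only delicate bookkeeping to be this last parity comparison — correctly pairing the parity of $\frac{n_1-1}{2}-\frac{n_2-1}{2}$ with the parity of $\eta$ and with the index of $w$; everything else is a routine CRT count. One could alternatively expand $|P\cap((W_0\cup W_2\cup W_4)+w)|=\sum_{i\ \mathrm{even}}|P\cap(W_i+w)|$ and invoke the cyclotomic-type identities used in the proof of Lemma \ref{val}, but the route above is self-contained.
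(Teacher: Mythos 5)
Your proof is correct, but it follows a genuinely different route from the paper's. The paper's own proof is a short bookkeeping argument: it expands $|P\cap((W_0\cup W_2\cup W_4)+w)|=\sum_{i\in\{0,2,4\}}|P\cap(W_i+w)|$, writes $|P\cap(W_i+w)|=|(P\cup R)\cap(W_i+w)|-|R\cap(W_i+w)|$ with $R=\{0\}$, and quotes Lemma 3 and Corollary 1 of \cite{Auto13} for the two counts — namely $|(P\cup R)\cap(W_i+w)|$ equals $0$ for $w\in P\cup R$ and $\frac{n_2-1}{6}$ otherwise, while $|R\cap(W_i+w)|=1$ exactly when $w\in W_i$ in the $\eta$-odd case and $w\in W_{i+3}$ in the $\eta$-even case; summing over $i\in\{0,2,4\}$ then yields the stated case split. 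You instead work inside $\mathbb{Z}_{n_1}\times\mathbb{Z}_{n_2}$ with the discrete-log membership criterion $(a,b)\in W_i$ iff $\mathrm{ind}_1(a)-\mathrm{ind}_2(b)\equiv i\ (\mathrm{mod}\ 6)$ and count directly; in doing so you re-derive, via $\mathrm{ind}_j(-1)=(n_j-1)/2$ and the parity of $3(f-f')\equiv f+f'$, exactly the parity fact the paper imports from \cite{Auto13} (equivalently Lemma \ref{lemma 3.6}). What the paper's route buys is brevity, at the cost of resting entirely on external counts (and on the symbol $R$ that this paper never defines); what your route buys is a self-contained argument that makes transparent where the dichotomy on $\eta=(n_1-1)(n_2-1)/36$ enters — through whether the single excluded value $c=-b_0$ lies in an even- or odd-indexed class. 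All of your individual steps check out (the CRT identification of $P$, $Q$, $\mathbb{Z}_n^*$, the unit/non-unit analysis for $b=b_0$, the shift from $b$ to $c=b-b_0$ ranging over $\mathbb{Z}_{n_2}^*\setminus\{-b_0\}$, and the final parity comparison), and the alternative you mention in your last sentence is essentially the paper's actual proof.
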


\begin{proof}
 We have $|P\cap ((W_0\cup W_2\cup W_4)+w)|= |P\cap (W_0+w)|+|P\cap (W_2+w)|+|P\cap (W_4+w)|$ and $|(P\cup R)\cap (W_i+w)|= |P\cap (W_i+w)|+|R\cap (W_i+w)|.$ By Lemma 3 and Corollary 1 of \cite{Auto13}, we have 
 \begin{align}
 |(P\cup R)\cap (W_i+w)|=\left\{ 
 \begin{array}{lll}
 0,\ \ \ \ \ \ \ \ if\ \ w \in P\cup R\nonumber\\
 \frac{n_2-1}{6},\ \ otherwise.
  \end{array}
  \right.
  \end{align}
  and
  \begin{align}
  |R\cap (W_i+w)|=\left\{
  \begin{array}{lll}
   1,\ \ \ \ if \ \ w \in W_i\ \ \ \ \mathrm{and}\ \ (n_1-1)(n_2-1)/36 \ \  \mathrm{is\ \ odd} \nonumber\\
   1,\ \ \ \ if \ \ w \in W_{i+3}\  \mathrm{and}\ \ (n_1-1)(n_2-1)/36\ \ \mathrm{is\ \ even} \nonumber\\
   0,\ \ \ otherwise. 
   \end{array}
\right.
  \end{align}
  
  Clearly, the Lemma follows from the above discussion.
\end{proof}

\begin{lemma}\label{1}
 Let $w\in \mathbb{Z}_n$, then we have
 \begin{align}
 |(W_1\cup W_3\cup W_5)\cap ((R\cup Q)+w)|=\left\{
 \begin{array}{ll}
  0,\ \ \ \ \ \ \ if \ \ w\in Q\cup R\\
  \frac{n_1-1}{2},\ \ otherwise.\nonumber
 \end{array}
 \right.
\end{align}
\end{lemma}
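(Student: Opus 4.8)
The plan is to pass to $\mathbb{Z}_n\cong\mathbb{Z}_{n_1}\times\mathbb{Z}_{n_2}$ via the Chinese Remainder Theorem and recognize every set in the statement as a product set. Note first that $R=\{0\}$ (the expansion~(\ref{ds}) writes $C_0=R\cup Q\cup W_0\cup W_2\cup W_4$, while by definition $C_0=\{0\}\cup Q\cup W_0\cup W_2\cup W_4$), so $R\cup Q=\{kn_2\bmod n:0\le k\le n_1-1\}$ corresponds under CRT to $\mathbb{Z}_{n_1}\times\{0\}$, i.e. the residues $\equiv 0\ (\mathrm{mod}\ n_2)$. Hence $(R\cup Q)+w$ corresponds to $\mathbb{Z}_{n_1}\times\{w\bmod n_2\}$, the set of all $x\in\mathbb{Z}_n$ with $x\equiv w\ (\mathrm{mod}\ n_2)$, and therefore
\[
(W_1\cup W_3\cup W_5)\cap\bigl((R\cup Q)+w\bigr)=\{\,x\in W_1\cup W_3\cup W_5:\ x\equiv w\ (\mathrm{mod}\ n_2)\,\}.
\]

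If $w\in Q\cup R$, then $w\equiv 0\ (\mathrm{mod}\ n_2)$ while every element of $W_1\cup W_3\cup W_5\subseteq\mathbb{Z}_n^{\ast}$ is coprime to $n_2$; so the right-hand set is empty and the count is $0$. This handles the first case (and also covers $w=0$).

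Now assume $w\not\equiv 0\ (\mathrm{mod}\ n_2)$. From $u\equiv g\ (\mathrm{mod}\ n_1)$ and $u\equiv 1\ (\mathrm{mod}\ n_2)$ we get $g^su^i\leftrightarrow(g^{s+i}\bmod n_1,\ g^s\bmod n_2)$, hence $W_i=\{(g^{s+i}\bmod n_1,\ g^s\bmod n_2):0\le s<e\}$ with $e=(n_1-1)(n_2-1)/6$. Fix $s_0$ with $0\le s_0<n_2-1$ and $g^{s_0}\equiv w\ (\mathrm{mod}\ n_2)$. The members of $W_i$ with second coordinate $w\bmod n_2$ are exactly those with $s\equiv s_0\ (\mathrm{mod}\ n_2-1)$, of which there are $e/(n_2-1)=(n_1-1)/6$ in the range $[0,e)$. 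As $s$ runs through them, $s+i\bmod(n_1-1)$ runs through the coset $(s_0+i)+\langle n_2-1\rangle$ of the subgroup $\langle n_2-1\rangle=\langle 6\rangle$ of $\mathbb{Z}/(n_1-1)$, which has order $(n_1-1)/6$ since $\gcd(n_1-1,n_2-1)=6$; so the first coordinates $g^{s+i}\bmod n_1$ sweep out, bijectively, the whole order-$6$ cyclotomic coset $g^{\,s_0+i}\langle g^6\rangle$ of $\mathbb{Z}_{n_1}^{\ast}$. Thus $|\{x\in W_i:x\equiv w\ (\mathrm{mod}\ n_2)\}|=(n_1-1)/6$ for each $i$, and for $i\in\{1,3,5\}$ the three cosets $g^{\,s_0+1}\langle g^6\rangle,\ g^{\,s_0+3}\langle g^6\rangle,\ g^{\,s_0+5}\langle g^6\rangle$ are pairwise disjoint (their labels are distinct modulo $6$), so the total is $3\cdot(n_1-1)/6=(n_1-1)/2$.

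The only delicate step — the one I expect to require the most care when written out — is the coset count above: one must check that the $(n_1-1)/6$ admissible exponents $s$ map \emph{bijectively} onto one full order-$6$ cyclotomic coset modulo $n_1$, neither collapsing nor overshooting, and this is precisely where the hypothesis $\gcd(n_1-1,n_2-1)=6$ enters essentially; the rest is bookkeeping. Alternatively, one could argue in the style of Lemmas~\ref{val}--\ref{val p} by splitting on $w\in P$ versus $w\in\mathbb{Z}_n^{\ast}$, expanding $|(W_1\cup W_3\cup W_5)\cap(Q+w)|$ into the three terms $|W_j\cap(Q+w)|$, $j\in\{1,3,5\}$, and observing that the single point $0\in R$ exactly compensates the element lost from $Q+w$ when $w\in\mathbb{Z}_n^{\ast}$; but the CRT argument above is shorter and uniform in $w$.
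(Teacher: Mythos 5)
Your proof is correct, but it takes a genuinely different route from the paper. The paper disposes of this lemma in two lines: it quotes Lemma 4 of \cite{auto2}, which gives $|W_i\cap((R\cup Q)+w)|=0$ for $w\in Q\cup R$ and $\frac{n_1-1}{6}$ otherwise, and then sums over the three pairwise disjoint classes $W_1,W_3,W_5$ to obtain $\frac{n_1-1}{2}$. You instead prove the per-class count from scratch: after correctly inferring $R=\{0\}$ by comparing Eq.~(\ref{ds}) with the definition of $C_0$ (the paper never defines $R$ explicitly), you pass to $\mathbb{Z}_{n_1}\times\mathbb{Z}_{n_2}$ via the Chinese Remainder Theorem, observe that $(R\cup Q)+w$ is exactly the fibre $\{x\in\mathbb{Z}_n:x\equiv w\ (\mathrm{mod}\ n_2)\}$, and count the elements of $W_i$ in that fibre through the parametrization $g^su^i\leftrightarrow(g^{s+i}\bmod n_1,\ g^s\bmod n_2)$, using that $(n_2-1)\mid e$ and $e/(n_2-1)=(n_1-1)/6$. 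That count is sound, since $|W_i|=e$ makes $s\mapsto g^su^i$ injective on $[0,e)$, so the admissible exponents yield distinct elements; your further analysis of the first coordinates sweeping a coset of $\langle g^6\rangle$ is correct (and does use $\gcd(n_1-1,n_2-1)=6$ exactly where you say), though it is not strictly needed, because the disjointness of $W_1,W_3,W_5$ inside $\mathbb{Z}_n$ already lets you add the three per-class counts. What your version buys is a self-contained argument that treats $w\in P$ and $w\in\mathbb{Z}_n^{\ast}$ uniformly; what the paper's version buys is brevity, at the cost of resting on a lemma imported from a reference rather than proved in situ.
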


\begin{proof}
 By Lemma 4 of \cite{auto2}, we have
 \begin{align}
 |W_i\cap ((R\cup Q)+w)|=\left\{
 \begin{array}{ll}
  0,\ \ \ \ \ \ \ if \ \ w\in Q\cup R\\
  \frac{n_1-1}{6},\ \ otherwise.\nonumber
 \end{array}
 \right.
 \end{align}
This implies the lemma.
 \end{proof}

We recall the following Lemma from \cite{auto2}.

\begin{lemma}\label{2}
\cite[Lemma 8]{auto2} If $w\in \mathbb{Z}_n,$ then we have\\
\begin{align}
 |P\cap ((Q\cup R)+w)|=\left\{
 \begin{array}{lll}
 1,\ \ if \ \ w\in P\\
 0,\ \ if \ \ w\in Q\\
 1,\ \ if \ \ w\in \mathbb{Z}^*_n.\nonumber
 \end{array}
 \right.
\end{align}
\end{lemma}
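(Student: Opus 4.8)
\textbf{Proof proposal for Lemma \ref{2}.}

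The plan is to count $|P\cap((Q\cup R)+w)|$ by splitting into the three cases for $w$ dictated by the partition $\mathbb{Z}_n=\{0\}\cup P\cup Q\cup\mathbb{Z}_n^\ast$ (writing $R=\{0\}$, so that $P\cup Q\cup R\cup\mathbb{Z}_n^\ast=\mathbb{Z}_n$), and in each case to reduce the count to a pair of elementary modular conditions via the Chinese Remainder Theorem. Recall $P=\{kn_1:1\le k\le n_2-1\}$ consists exactly of the nonzero residues divisible by $n_1$ but not by $n_2$, and $Q=\{kn_2:1\le k\le n_1-1\}$ the nonzero residues divisible by $n_2$ but not by $n_1$; an element lies in $P$ iff it is $\equiv 0\ (\mathrm{mod}\ n_1)$ and $\not\equiv 0\ (\mathrm{mod}\ n_2)$. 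So $x\in P\cap((Q\cup R)+w)$ means $x\equiv 0\ (\mathrm{mod}\ n_1)$, $x\not\equiv 0\ (\mathrm{mod}\ n_2)$, and $x-w\in Q\cup R$, i.e. $x-w\equiv 0\ (\mathrm{mod}\ n_2)$ (which automatically covers both $x-w\in Q$ and $x-w=0$, the latter only when $w\equiv 0\ (\mathrm{mod}\ n_2)$ too).

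First I would handle $w\in P$: then $w\equiv 0\ (\mathrm{mod}\ n_1)$ and $w\not\equiv 0\ (\mathrm{mod}\ n_2)$. The conditions on $x$ become $x\equiv 0\ (\mathrm{mod}\ n_1)$ and $x\equiv w\ (\mathrm{mod}\ n_2)$; by CRT there is a unique such $x\in\mathbb{Z}_n$, and since $w\not\equiv 0\ (\mathrm{mod}\ n_2)$ we automatically get $x\not\equiv 0\ (\mathrm{mod}\ n_2)$, so $x\in P$ and the count is $1$. Next, for $w\in Q$: then $w\equiv 0\ (\mathrm{mod}\ n_2)$, so the condition $x\equiv w\ (\mathrm{mod}\ n_2)$ forces $x\equiv 0\ (\mathrm{mod}\ n_2)$, contradicting $x\not\equiv 0\ (\mathrm{mod}\ n_2)$; hence the count is $0$. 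Finally, for $w\in\mathbb{Z}_n^\ast$ (note the statement's ``$\mathbb{Z}_n^\ast$'' should be read as all remaining $w$, namely $w\in\mathbb{Z}_n^\ast\cup\{0\}$, or one checks $w=0$ separately — when $w=0$ the set is $P\cap(Q\cup R)=\emptyset$, so really the interesting subcase is $w$ a unit): then $w\not\equiv 0\ (\mathrm{mod}\ n_1)$ and $w\not\equiv 0\ (\mathrm{mod}\ n_2)$, the pair $x\equiv 0\ (\mathrm{mod}\ n_1)$, $x\equiv w\ (\mathrm{mod}\ n_2)$ again has a unique CRT solution, and $w\not\equiv 0\ (\mathrm{mod}\ n_2)$ guarantees $x\not\equiv 0\ (\mathrm{mod}\ n_2)$, so $x\in P$ and the count is $1$.

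Since this is quoted verbatim as \cite[Lemma 8]{auto2}, I would in practice just cite that reference; the only thing to be careful about is the bookkeeping at $w=0$ and the precise meaning of the third case, which is a cosmetic point rather than a genuine obstacle. The main (and only) substantive step is the CRT reduction in the first and third cases — recognizing that ``divisible by $n_1$, not by $n_2$'' plus ``congruent to $w$ mod $n_2$'' pins down a unique residue, and that the non-divisibility of $w$ by $n_2$ is exactly what keeps that residue out of $\{0\}\cup Q$ and inside $P$.
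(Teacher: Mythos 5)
Your argument is correct. The paper itself gives no proof of this lemma; it is quoted verbatim from the cited reference, so there is nothing internal to compare against. Your CRT reduction is sound: the key observation that $Q\cup R$ is exactly the set of residues $\equiv 0 \ (\mathrm{mod}\ n_2)$ (since a residue divisible by both $n_1$ and $n_2$ must be $0$ modulo $n=n_1n_2$) turns membership in $P\cap((Q\cup R)+w)$ into the system $x\equiv 0\ (\mathrm{mod}\ n_1)$, $x\equiv w\ (\mathrm{mod}\ n_2)$, $x\not\equiv 0\ (\mathrm{mod}\ n_2)$, and your case analysis on $w$ then gives exactly the stated counts; your side remark about $w=0$ being omitted from the case list (and harmless) is a fair reading of the statement.
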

 
The following theorems compute the autocorrelation values of two-prime WGCS-I.
\begin{theorem}\label{even}
 Let $(n_1-1)(n_2-1)/36$ be even. Then\\
 \begin{align}
  C_\lambda(w)=\left\{
  \begin{array}{lll}
  \frac{n_2-n_1-3}{n},\ \ if\ \ w\in P\\
  \frac{n_1-n_2+1}{n},\ \ if \ \ w\in Q\\
  -\frac{1}{n},\ \ \ \ \ \ \ \  if \ \ w\in \mathbb{Z}^*_n.\nonumber
  \end{array}
  \right.
 \end{align}
\end{theorem}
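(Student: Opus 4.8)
The plan is to apply Lemma~\ref{val d1}, which reduces the computation of $C_\lambda(w)$ to that of the count $d_\lambda(1,0;w)$, and then to evaluate $d_\lambda(1,0;w)$ for each of the three possibilities $w\in P$, $w\in Q$, $w\in\mathbb{Z}_n^*$ (these exhaust $\mathbb{Z}_n\setminus\{0\}$) by means of the four-term splitting recorded in Eq.~(\ref{ds}). For $w\in\mathbb{Z}_n^*$ the four summands will be supplied directly: the first by Lemma~\ref{val}(i), the second by Lemma~\ref{val p}(i), the third by Lemma~\ref{1}, and the fourth by Lemma~\ref{2}. For $w\in P$ or $w\in Q$ the last three summands will again be read off from Lemmas~\ref{val p}(i), \ref{1} and \ref{2}, but the first summand then lies outside the range of Lemma~\ref{val} (which assumes $w\in\mathbb{Z}_n^*$) and will instead be extracted from Lemma~\ref{val d}.

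First I would treat $w\in\mathbb{Z}_n^*$. Since $\mathbb{Z}_n^*=\bigcup_{i=0}^{5}W_i$, I would split into the sub-cases $w\in W_0\cup W_2\cup W_4$ and $w\in W_1\cup W_3\cup W_5$. In the first sub-case the four contributions to Eq.~(\ref{ds}) are $(3M-1)/2$, $(n_2-1)/2$, $(n_1-1)/2$ and $1$; in the second they are $(3M+1)/2$, $(n_2-1)/2-1$, $(n_1-1)/2$ and $1$; in both the sum collapses to $(3M+n_1+n_2-1)/2$. Substituting $M=\frac16\bigl((n_1-2)(n_2-2)-1\bigr)$ and simplifying should give $d_\lambda(1,0;w)=(n_1n_2+1)/4=(n+1)/4$; this is an integer because the hypothesis that $(n_1-1)(n_2-1)/36$ is even forces $n_1n_2\equiv 7\ (\mathrm{mod}\ 12)$ by Lemma~\ref{W3} with $d=6$. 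Then Lemma~\ref{val d1} gives $C_\lambda(w)=1-\frac{4}{n}\cdot\frac{n+1}{4}=-\frac1n$.

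Next I would take $w\in P$, the case $w\in Q$ being entirely analogous. For every pair $i\in\{1,3,5\}$ and $j\in\{0,2,4\}$ we have $i\neq j$, so by the first branch of Lemma~\ref{val d} (applied with $t\in P\cup Q$ and $d=6$) each of the nine quantities $|W_i\cap(W_j+w)|$ equals $(n_1-1)(n_2-1)/36$, whence the first summand in Eq.~(\ref{ds}) equals $(n_1-1)(n_2-1)/4$. Adding the remaining three summands---which are $0$, $(n_1-1)/2$ and $1$ for $w\in P$, and $(n_2-1)/2$, $0$ and $0$ for $w\in Q$---gives $d_\lambda(1,0;w)=(n+n_1-n_2+3)/4$ for $w\in P$ and $d_\lambda(1,0;w)=(n+n_2-n_1-1)/4$ for $w\in Q$. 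A final application of Lemma~\ref{val d1} converts these into $C_\lambda(w)=(n_2-n_1-3)/n$ and $C_\lambda(w)=(n_1-n_2+1)/n$ respectively, as claimed.

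I expect the only real obstacle to be organisational rather than conceptual: in each case and sub-case one must select the correct branch of each of Lemmas~\ref{val}--\ref{2}, remember to use Lemma~\ref{val d} (not Lemma~\ref{val}) for the first summand once $w\in P\cup Q$, and push through the elementary simplification of $M$ along with the parity and congruence checks that keep the intermediate counts integral.
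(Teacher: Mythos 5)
Your proposal is correct and follows essentially the same route as the paper: both apply Lemma~\ref{val d1} to the four-term decomposition in Eq.~(\ref{ds}), read off the summands from Lemmas~\ref{val}, \ref{val p}, \ref{1} and \ref{2} (with Lemma~\ref{val d} supplying the nine cyclotomic intersection counts when $w\in P\cup Q$), and simplify to $d_\lambda(1,0;w)=\frac{n+n_1-n_2+3}{4}$, $\frac{n+n_2-n_1-1}{4}$, $\frac{n+1}{4}$ in the three cases. Your explicit appeal to Lemma~\ref{val d} for $w\in P\cup Q$ and the integrality remark via Lemma~\ref{W3} are only slightly more detailed bookkeeping than the paper's presentation, not a different argument.
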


\begin{proof}
 Substituting the values from Lemma \ref{val}-\ref{2} in Eq.(\ref{ds}), we obtain 
 \begin{align}
d_\lambda(1,0;w) =\left\{
 \begin{array}{lll}
  9\frac{(n_1-1)(n_2-1)}{36}+0+\frac{n_1-1}{2}+1,\ \ \ \ \ \ \ \ \ \ \ \ \ \ if\ \ w\in P \nonumber\\
  9\frac{(n_1-1)(n_2-1)}{36}+\frac{n_2-1}{2}+0+0,\ \ \ \ \ \ \ \ \ \ \ \ \ \ if\ \ w\in Q \nonumber\\
   \frac{(n_1-2)(n_2-2)-3}{4}+\frac{n_2-1}{2}+\frac{n_1-1}{2}+1,\ \ \ \ \ \ \ if\ \ w\in W_0\cup W_2\cup W_4\nonumber\\
 \frac{(n_1-2)(n_2-2)+1}{4}+\frac{n_2-1}{2}-1+\frac{n_1-1}{2}+1,\ \ if\ \ w\in W_1\cup W_3\cup W_5.\nonumber
 \end{array}
 \right.\\
 =\left\{
 \begin{array}{lll}
  \frac{n_1n_2+n_1-n_2+3}{4},\ \ if\ \ w\in P\nonumber\\
  \frac{n_1n_2-n_1+n_2-1}{4},\ \ if\ \ w\in Q\nonumber\\
   \frac{n_1n_2+1}{4},\ \ \ \ \ \ \ \ \ \ \   if\ \ w\in \mathbb{Z}_n.\nonumber\\
   \end{array}
 \right.\ \ \ \ \ \ \ \ \ \ \ \ \ \ \ \ \ \ \ \ \ \ \ \ \ \ \ \ \ \ \ \ \ \ \ \ \ \ \ \ \ \ \ \ \ \ \ 
\end{align}
We get the value of $C_\lambda(w)$ by putting the value of $d_\lambda(1,0;w)$ in Lemma \ref{val d1}.
\end{proof}

\begin{theorem}\label{odd}
 Let $(n_1-1)(n_2-1)/36$ be odd. Then
 \begin{align}
  C_\lambda(w)=\left\{
  \begin{array}{lll}
  \frac{n_2-n_1-3}{n},\ \ if\ \ w\in P\\
  \frac{n_1+1-n_2}{n},\ \ if \ \ w\in Q\\
  \frac{1}{n},\ \ \ \ \ \ \ \ \ \    if \ \ w\in W_0\cup W_2\cup W_4\\ 
  -\frac{3}{n},\ \ \ \ \ \ \ \    if\ \ w\in W_1\cup W_3\cup W_5.\nonumber
  \end{array}
  \right.
 \end{align}
\end{theorem}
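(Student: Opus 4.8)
The plan is to mirror the proof of Theorem~\ref{even}, feeding the odd-case auxiliary estimates into the four-term decomposition Eq.~(\ref{ds}) of $d_\lambda(1,0;w)$ and tracking, piece by piece over $P$, $Q$, $W_0\cup W_2\cup W_4$, $W_1\cup W_3\cup W_5$, which branch of each lemma is in force. For the first summand of Eq.~(\ref{ds}) I use Lemma~\ref{val}(ii), which on $\mathbb{Z}_n^*$ gives the constant $3M/2=\frac{1}{4}\big((n_1-2)(n_2-2)-1\big)$; for the second I use Lemma~\ref{val p}(ii); for the third Lemma~\ref{1}; for the fourth Lemma~\ref{2}. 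The case $w\in P\cup Q$ is not covered by Lemma~\ref{val}, so there I compute the first summand directly: it is a sum of nine terms $|W_i\cap(W_j+w)|$ with $i\in\{1,3,5\}$ and $j\in\{0,2,4\}$, all with $i\neq j$, and since $w\in P\cup Q$ each equals $(n_1-1)(n_2-1)/d^2=(n_1-1)(n_2-1)/36$ by Lemma~\ref{val d}; hence this summand equals $9(n_1-1)(n_2-1)/36$, just as in the proof of Theorem~\ref{even}.

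Executing the four cases, I expect
\[
d_\lambda(1,0;w)=
\begin{cases}
\tfrac14\big(n_1n_2+n_1-n_2+3\big), & w\in P,\\
\tfrac14\big(n_1n_2-n_1+n_2-1\big), & w\in Q,\\
\tfrac14\big(n_1n_2-1\big), & w\in W_0\cup W_2\cup W_4,\\
\tfrac14\big(n_1n_2+3\big), & w\in W_1\cup W_3\cup W_5,
\end{cases}
\]
where the gap between the last two lines is produced entirely by the ``$\tfrac{n_2-1}{2}-1$ versus $\tfrac{n_2-1}{2}$'' alternative in Lemma~\ref{val p}(ii), which itself reflects $-1\in W_0$ in the odd case (Lemma~\ref{lemma 3.6}, equivalently Lemma~\ref{odd&even}). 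Substituting these values into $C_\lambda(w)=1-4\,d_\lambda(1,0;w)/n$ from Lemma~\ref{val d1} and simplifying with $n=n_1n_2$ then gives $\tfrac{n_2-n_1-3}{n}$, $\tfrac{n_1+1-n_2}{n}$, $\tfrac1n$, $-\tfrac3n$ on the four pieces, which is the assertion.

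This is essentially a bookkeeping exercise: I must pair $w\in P$ with the vanishing branch of Lemma~\ref{val p}, the nonvanishing branch of Lemma~\ref{1}, and the ``$w\in P$'' branch of Lemma~\ref{2}; pair $w\in Q$ with the branches of Lemmas~\ref{1} and~\ref{2} that vanish; and use Lemma~\ref{odd&even} to decide which of $W_0\cup W_2\cup W_4$ and $W_1\cup W_3\cup W_5$ the relevant shifts land in. I do not foresee a genuine obstacle beyond guarding against an off-by-one or a sign error in the four arithmetic simplifications at the end.
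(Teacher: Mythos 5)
Your proposal is correct and follows essentially the same route as the paper, which proves Theorem~\ref{odd} exactly as it proves Theorem~\ref{even}: substitute the odd-case branches of Lemmas~\ref{val}--\ref{2} (and Lemma~\ref{val d} for $w\in P\cup Q$) into Eq.~(\ref{ds}) and then apply Lemma~\ref{val d1}. Your intermediate values of $d_\lambda(1,0;w)$ and the final simplifications all check out.
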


\begin{proof}
 The proof is similar to the prove of Theorem \ref{even}.
\end{proof}

By Theorem \ref{even}, the autocorrelation values $C_\lambda(w)$ is four-valued if $\frac{(n_1-1)(n_2-1)}{36}$ is even and by Theorem \ref{odd}, $C_\lambda(w)$ is five-valued if $\frac{(n_1-1)(n_2-1)}{36}$ is odd for any $|n_2-n_1|$ multiple of 6.

Now we employ the sequence $\lambda^\infty$ (defined in  Eq.(\ref{si})) to construct cyclic codes over $\mathrm{GF}(q)$.

\section{A class of cyclic codes over $\mathrm{GF}(q)$ defined by two-prime WGCS-I}\label{subsection 3.3} 

We have $\mathrm{gcd}(n,q)=1$. Let $m$ be the order of $q$ modulo $n$. Then the field $\mathrm{GF}(q^{m})$ has a primitive $nth$ root of unity $\beta$. We define
\begin{align}
\Lambda(x)=\sum\limits_{i\in C_{1}}x^{i}=\left(\sum\limits_{i\in P}+\sum\limits_{i\in W_{1}}+\sum\limits_{i\in W_{3}}+\sum\limits_{i\in W_{5}}\right)x^{i}\in \mathrm{GF}(q)[x],\label{lambda}
\end{align}
Our main aim in this section is to find the generator polynomial
\begin{align}
 g_\lambda(x)=\frac{x^n-1}{\mathrm{gcd}(x^n-1,\Lambda(x))}\nonumber
\end{align}
of the cyclic code $C_\lambda$ defined by the sequence $\lambda^{\infty}.$ 
To compute the parameters of the cyclic code $C_\lambda$ defined by the sequence $\lambda^{\infty}$, we need to compute $\mathrm{gcd}(x^{n}-1,\Lambda(x))$. Since $\beta$ is a primitive $n$th root of unity, we need only to find such $t$'s that $\Lambda(\beta^{t})=0$, where $0\leq t\leq n-1$. To this end, we need number of auxiliary results.
We have 
\begin{align}
0=\beta^{n}-1=(\beta^{n_1})^{n_2}-1=(\beta^{n_1}-1)(1 + \beta^{n_1} + \beta^{2n_1} + \cdots + \beta^{(n_2-1)n_1}).\nonumber
\end{align}
It follows that
\begin{align}
 \beta^{n_1}+\beta^{2n_1}+\cdots+\beta^{(n_2-1)n_1}=-1, i.e., \sum\limits_{i\in P} \beta^{i} = -1.\label{P}
\end{align}
By symmetry we get
\begin{align}
\beta^{n_2} + \beta^{2n_2} + \cdots + \beta^{(n_1-1)n_2} = -1, i.e., \sum\limits_{i\in Q}\beta^{i}=-1.\label{Q}
\end{align}

\begin{lemma}\label{lemma 3.4}
Let the symbols be same as before. For $0\leq j\leq5$, we have
$$
\sum\limits_{i\in W_{j}}\beta^{it}=\left\{
\begin{array}{ll}
-\frac{n_{1}-1}{6}\ (\mathrm{mod}\ p),\ \mathrm{if}\ t\in P\\
-\frac{n_{2}-1}{6}\ (\mathrm{mod}\ p),\ \mathrm{if}\ t\in Q.
\end{array}
\right.
$$
\end{lemma}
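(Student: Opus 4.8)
The plan is to exploit the multiplicative structure of the Whiteman classes together with the two relations \eqref{P} and \eqref{Q}. Fix $t\in P$, say $t\equiv 0\ (\mathrm{mod}\ n_1)$ and $t\not\equiv 0\ (\mathrm{mod}\ n_2)$. For $i\in W_j\subseteq\mathbb{Z}_n^{\ast}$, the product $it\pmod n$ is divisible by $n_1$ but not by $n_2$, so $it\pmod n$ lies in the set $P$. Thus the map $i\mapsto it\pmod n$ sends $W_j$ into $P$, and I would first show this map is exactly $|W_j|/(n_2-1)\cdot(n_2-1)$-to-... more precisely, that each element of $P$ is hit the same number of times, namely $|W_j|/(n_2-1) = e/(n_2-1) = (n_1-1)/d$ times (here $d=6$, $|W_j|=e=(n_1-1)(n_2-1)/6$). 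The fibre-counting is the step that needs care: since $t=kn_1$ with $\gcd(k,n_2)=1$, reduction mod $n_1$ kills $t$ while mod $n_2$ it acts as multiplication by the unit $k$; so $it\equiv i't\pmod n$ iff $i\equiv i'\pmod{n_2}$, and one checks that $W_j$ meets each residue class mod $n_2$ (among the invertible ones) in equally many points, giving a uniform fibre size $(n_1-1)/6$ over all of $P$.

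Granting this, we get
\begin{align}
\sum_{i\in W_j}\beta^{it}=\frac{n_1-1}{6}\sum_{a\in P}\beta^{a}=-\frac{n_1-1}{6},\nonumber
\end{align}
using \eqref{P} for the last equality; reducing mod $p$ (the characteristic of $\mathrm{GF}(q)$) gives the first case. The case $t\in Q$ is symmetric: now $t$ is divisible by $n_2$ but not $n_1$, the map $i\mapsto it\pmod n$ sends $W_j$ onto $Q$ with every fibre of size $(n_2-1)/6$, and \eqref{Q} yields $\sum_{i\in W_j}\beta^{it}=-\tfrac{n_2-1}{6}$.

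The main obstacle is the uniformity of the fibres, i.e. establishing that multiplication by $t$ carries $W_j$ onto $P$ (resp. $Q$) as an exactly $(n_1-1)/6$-to-one (resp. $(n_2-1)/6$-to-one) map. I expect to handle this via the Chinese Remainder decomposition $\mathbb{Z}_n^{\ast}\cong\mathbb{Z}_{n_1}^{\ast}\times\mathbb{Z}_{n_2}^{\ast}$: under this isomorphism $W_j$ corresponds to a union of cosets that projects \emph{surjectively} and \emph{uniformly} onto $\mathbb{Z}_{n_2}^{\ast}$ (because $u\equiv 1\pmod{n_2}$, so the index $i$ does not affect the second coordinate, and $g$ is a primitive root mod $n_2$), and each element of $\mathbb{Z}_{n_2}^{\ast}$ is covered $(n_1-1)/6$ times. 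Multiplication by $t=kn_1$ then factors through this projection followed by the bijection $x\mapsto kx$ on $\mathbb{Z}_{n_2}^{\ast}$, and the image $P$ is precisely $\{(0,y):y\in\mathbb{Z}_{n_2}^{\ast}\}$ under CRT. Once this bookkeeping is set up the result is immediate; alternatively one can simply cite the analogous computation in \cite{ding12} or \cite{white04} for orders $2$ and $4$, which goes through verbatim for $d=6$.
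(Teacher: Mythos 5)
Your proposal is correct and follows essentially the same route as the paper: both arguments rest on the facts that $\beta^{it}$ depends only on $i$ modulo one of the two primes when $t\in P$ or $t\in Q$, that $W_j$ reduces modulo that prime onto the nonzero residues with uniform multiplicity $(n_1-1)/6$ resp.\ $(n_2-1)/6$ (a consequence of $u\equiv g\ (\mathrm{mod}\ n_1)$, $u\equiv 1\ (\mathrm{mod}\ n_2)$ and $g$ being a common primitive root), and then on Eqs.~(\ref{P}) and (\ref{Q}). The only differences are cosmetic: you phrase the uniform-fibre count via the CRT decomposition and work the $t\in P$ case in detail, while the paper enumerates $W_j$ modulo $n_1$ explicitly and works the $t\in Q$ case, invoking symmetry for the other.
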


\begin{proof}
Suppose that $t\in Q$. Since $g$ is a common primitive roots of $n_1$ and $n_2$ and the order of $g$ modulo $n$ is $e$, by the definition of $u$, we have
\begin{align}
W_{j}\ \mathrm{mod}\ n_{1}&=\{g^{s}u^{j}\ \mathrm{mod}\ n_{1}:\ s=0,1,2,\cdots,e-1\}\nonumber\\
&=\{g^{s+j}\ \mathrm{mod}\ n_{1}:\ s=0,1,2,\cdots,e-1\}\nonumber\\
&=\frac{n_{2}-1}{6}\ast\{1,2,\cdots,n_{1}-1\},\nonumber
\end{align}
where $\frac{n_{2}-1}{6}$ denotes the multiplicity of each element in the set $\{1,2,\cdots,n_{1}-1\}$.
 We can write $g^sx^j$ in the form 
\begin{align}
&1+k_{11}n_1,1+k_{12}n_1,\cdots,1+k_{1(n_2-1)/6}n_1,\nonumber \\
&2+k_{21}n_1,2+k_{22}n_1,\cdots,2+k_{2(n_2-1)/6}n_1,\nonumber \\
\vdots \nonumber \\
&n_1-1+k_{(n_1-1)1}n_1,n_1-1+k_{(n_1-1)2}n_1,\cdots,n_1-1+k_{(n_1-1)(n_2-1)/6}n_1.\label{PQ}
 \end{align}
where $k_{li}$ is an positive integer, $1\leq l\leq n_1-1$ and $1\leq i\leq (n_2-1)/6$.
Since $s$ ranges over $\{0, 1, \cdots\ , e-1\},$ we divides the set $W_j$ into $(n_2-1)/6$ subsets each of which contains ${n_1 -1}$
consecutive integers, i.e., $ \ g^{s+j}\ \mathrm{mod}\ n_{1}$ takes on each element of $\{1,2,\cdots,n_{1}-1\}$ exactly $\frac{n_{2}-1}{6}$
times. From Eq.(\ref{PQ}), it follows that 
if $t\in Q,$ we have $\beta^{(m + k_{li}n_1)t}=\beta^{mt}$, where $1\leq m \leq n_1-1.$  
 It follows from Eq.(\ref{Q}) that
$$\sum\limits_{i\in W_{j}}\beta^{it}=\left(\frac{n_{2}-1}{6}\right)\sum\limits_{j\in Q}\beta^{j}=-\frac{n_{2}-1}{6}\ (\mathrm{mod}\ p).$$
For $t\in P$, we can get the result by similar argument.
\end{proof}

\begin{lemma}\label{mod d}
 For any $r\in W_{i}$, we have $rW_{j}=W_{(i+j) (\mathrm{mod}\ d)}$, where $rW_{j}=\{rt\ |\ t \in W_{j}\}$.
\end{lemma}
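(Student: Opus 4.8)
The plan is to recognize the classes $W_i$ as the cosets of a single multiplicative subgroup of $\mathbb{Z}_n^{\ast}$, after which the identity $rW_j = W_{(i+j)\bmod d}$ becomes a statement about coset arithmetic in the quotient group $\mathbb{Z}_n^{\ast}/W_0$.

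First I would note that $W_0 = \{g^s \bmod n : s = 0,1,\dots,e-1\}$ is precisely the cyclic subgroup $\langle g\rangle$ of $\mathbb{Z}_n^{\ast}$, since $e = \mathrm{ord}_n(g)$; in particular $W_0$ is a group under multiplication with $|W_0| = e = (n_1-1)(n_2-1)/d$. Directly from the definition, $W_i = \{g^s u^i \bmod n : s = 0,\dots,e-1\} = u^i W_0$, and cancelling $u^i$ shows the map $s \mapsto g^s u^i$ is injective modulo $e$, so $|W_i| = e$ for every $i$.

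Next I would fix the index of $W_0$ and the structure of the quotient. Because $\mathbb{Z}_n^{\ast} = \bigcup_{i=0}^{d-1} W_i$ is a disjoint union (recalled in the preliminaries) and $|\mathbb{Z}_n^{\ast}| = \phi(n) = (n_1-1)(n_2-1) = de$, the $d$ sets $W_0, W_1, \dots, W_{d-1}$ are exactly the distinct cosets of $W_0$ in $\mathbb{Z}_n^{\ast}$. Hence $[\mathbb{Z}_n^{\ast}:W_0] = d$ and $\mathbb{Z}_n^{\ast}/W_0 = \{W_0, uW_0, \dots, u^{d-1}W_0\}$ is a cyclic group of order $d$ generated by the coset $uW_0$. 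In particular $u^d \in W_0$, so that $u^k W_0 = u^{k\bmod d} W_0 = W_{k\bmod d}$ for every nonnegative integer $k$.

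Finally, given $r \in W_i$, write $r \equiv g^{s_0}u^i \pmod n$ for some $s_0$. Using $g^{s_0} \in W_0$ and $W_0 W_0 = W_0$,
\[ rW_j = g^{s_0}u^i\,(u^j W_0) = u^{i+j}\,(g^{s_0}W_0) = u^{i+j} W_0 = W_{(i+j)\bmod d}, \]
which is the claim. The only step requiring genuine care is identifying the $W_i$ with the full set of cosets of $W_0$: once the count $[\mathbb{Z}_n^{\ast}:W_0] = d$ is in hand, the relation $u^d W_0 = W_0$ follows and the rest is routine. Since everything reduces to elementary group theory together with the Whiteman decomposition of $\mathbb{Z}_n^{\ast}$ already recorded, I do not anticipate a serious obstacle.
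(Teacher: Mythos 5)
Your proof is correct and follows essentially the same route as the paper: both arguments rest on writing $W_i=u^iW_0$ with $W_0=\langle g\rangle$ and on the key fact $u^d\in W_0$, after which $rW_j=W_{(i+j)\bmod d}$ is immediate. The only difference is that the paper simply asserts $u^d=g^{\upsilon}$ for some $0\le \upsilon\le e-1$, whereas you derive it from the partition $\mathbb{Z}_n^{\ast}=\bigcup_{i=0}^{d-1}W_i$ together with the count $\phi(n)=de$ (so the $W_i$ are exactly the cosets of $W_0$ and the quotient is cyclic of order $d$), which makes your write-up slightly more self-contained.
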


\begin{proof}
We have $ W_{i} =\{g^{s}u^{i}:s=0,1,2,\cdots,e-1\},i=0,1,\cdots,d-1 $ and let $r = g^{s_1}u^{i}\in W_{i}.$ 
Then $rW_{j} =g^{s_1}u^{i}\{u^{j},g^{1}u^{j},\cdots,g^{e-1}u^{j}\}=\{g^{s_1}u^{i+j},g^{s_1+1}u^{i+j},\cdots,g^{s_1+e-1}u^{i+j}\}.$
Since $u\in \mathbb{Z}_{n}^{\ast}$, there must exist an integer $\upsilon$ with $0\leqslant \upsilon \leqslant e-1$ such that $u^d=g^\upsilon$, therefore, we must have $rW_{j}=W_{(i+j)(\mathrm{mod}\ d)}.$
\end{proof}

\begin{lemma}\label{value}
Let $D_0=W_0\cup W_2\cup W_4$ and $D_1=W_1\cup W_3\cup W_5$ and rest of the symbols be same as before. For all $t\in \mathbb{Z}_{n}$ we have
$$
\Lambda(\beta^{t})=\left\{
\begin{array}{llllll}
-\frac{n_{1}+1}{2}\ (\mathrm{mod}\ p),\ \mathrm{if}\ t\in P\\
\ \ \frac{n_{2}-1}{2}\ (\mathrm{mod}\ p),\ \mathrm{if}\ t\in Q\\
\ \ \Lambda(\beta),\ \ \ \ \ \ \ \ \ \ \ \ \mathrm{if}\ t\in D_{0}\\
-(\Lambda(\beta)+1),\ \ \ \ \ \mathrm{if}\ t\in D_{1}.
\end{array}
\right.
$$
\end{lemma}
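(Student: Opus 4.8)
The plan is to use the decomposition $C_1=P\cup D_1$ (with $D_1=W_1\cup W_3\cup W_5$), so that for every $t$ we may write $\Lambda(\beta^t)=\Sigma_P(t)+\Sigma_{D_1}(t)$, where $\Sigma_P(t)=\sum_{i\in P}\beta^{it}$ and $\Sigma_{D_1}(t)=\sum_{i\in D_1}\beta^{it}$. I would evaluate these two sums separately on each of the four pieces $P$, $Q$, $D_0$, $D_1$ of $\mathbb{Z}_n$ and then add.

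For $\Sigma_P$: writing a typical $i\in P$ as $i=an_1$ with $1\le a\le n_2-1$, we have $\beta^{it}=\beta^{n_1(at)}$, which depends only on $at\bmod n_2$. If $t\in Q$ then $n_2\mid t$, so every term equals $1$ and $\Sigma_P(t)=|P|=n_2-1\pmod p$. If $t\notin Q\cup\{0\}$ then, since $n_2$ is prime, $\gcd(t,n_2)=1$, hence $a\mapsto at\bmod n_2$ is a permutation of $\{1,\dots,n_2-1\}$, and Eq.(\ref{P}) gives $\Sigma_P(t)=\sum_{m=1}^{n_2-1}\beta^{mn_1}=-1$; in particular $\Sigma_P(t)=-1$ for every $t\in P\cup D_0\cup D_1$, and $\Sigma_P(1)=\sum_{i\in P}\beta^i=-1$.

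For $\Sigma_{D_1}$: when $t\in P$ or $t\in Q$, I would apply Lemma \ref{lemma 3.4} to each of $W_1,W_3,W_5$ and add, obtaining $\Sigma_{D_1}(t)=-\frac{n_1-1}{2}\pmod p$ for $t\in P$ and $\Sigma_{D_1}(t)=-\frac{n_2-1}{2}\pmod p$ for $t\in Q$. When $t\in\mathbb{Z}_n^*$, write $t\in W_k$; by Lemma \ref{mod d}, $tW_j=W_{(k+j)\bmod 6}$, so multiplication by $t$ permutes the family $\{W_1,W_3,W_5\}$ among itself when $k$ is even (that is, $t\in D_0$), and carries it bijectively onto $\{W_0,W_2,W_4\}$ when $k$ is odd (that is, $t\in D_1$). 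Hence $\Sigma_{D_1}(t)=\sum_{i\in D_1}\beta^i$ for $t\in D_0$ and $\Sigma_{D_1}(t)=\sum_{i\in D_0}\beta^i$ for $t\in D_1$.

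Combining the two parts: for $t\in P$, $\Lambda(\beta^t)=-1-\frac{n_1-1}{2}=-\frac{n_1+1}{2}$, and for $t\in Q$, $\Lambda(\beta^t)=(n_2-1)-\frac{n_2-1}{2}=\frac{n_2-1}{2}$. For the invertible cases I would use the identity $\Lambda(\beta)=\Sigma_P(1)+\sum_{i\in D_1}\beta^i=-1+\sum_{i\in D_1}\beta^i$ together with $\sum_{i\in D_0}\beta^i+\sum_{i\in D_1}\beta^i=1$, the latter obtained from $0=\sum_{j=0}^{n-1}\beta^j=1+\sum_{i\in P}\beta^i+\sum_{i\in Q}\beta^i+\sum_{i\in D_0}\beta^i+\sum_{i\in D_1}\beta^i$ and Eqs.(\ref{P}) and (\ref{Q}). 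Then $\Lambda(\beta^t)=-1+\sum_{i\in D_1}\beta^i=\Lambda(\beta)$ for $t\in D_0$, and $\Lambda(\beta^t)=-1+\sum_{i\in D_0}\beta^i=-\sum_{i\in D_1}\beta^i=-(\Lambda(\beta)+1)$ for $t\in D_1$. The main obstacle is purely bookkeeping: making the nested reduction modulo $n_1$ and then modulo $n_2$ precise in $\Sigma_P$, and tracking how $t\in W_k$ shifts cyclotomic indices so that the dichotomy ``$tD_1=D_1$ for $t\in D_0$ versus $tD_1=D_0$ for $t\in D_1$'' drops out cleanly; the four final evaluations are then one-line computations.
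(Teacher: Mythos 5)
Your proposal is correct and follows essentially the same route as the paper: it splits $\Lambda(\beta^t)$ into the $P$-part and the $W_1\cup W_3\cup W_5$-part, handles $t\in P$ and $t\in Q$ via Lemma \ref{lemma 3.4} and Eq.(\ref{P}), and handles $t\in\mathbb{Z}_n^*$ via Lemma \ref{mod d} together with the identity $\sum_{i\in\cup_j W_j}\beta^i=1$, exactly as in the paper's proof. The only cosmetic difference is that you argue the shift action uniformly by the parity of $k$ with $t\in W_k$, where the paper writes out the six cases $W_0,W_2,W_4$ and $W_1,W_3,W_5$ separately.
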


\begin{proof}
Since $\mathrm{gcd}(n_{1},n_{2})=1$, we have $tP=P$ if $t\in P$. By Eqs.(\ref{lambda}), (\ref{P}) and Lemma \ref{lemma 3.4}, we get 
\begin{align}
\Lambda(\beta^{t})=\sum\limits_{i\in C_{1}}\beta^{ti}&=\left(\sum\limits_{i\in P}+\sum\limits_{i\in W_{1}}+\sum\limits_{i\in W_{3}}+\sum\limits_{i\in W_{5}}\right)\beta^{ti}\nonumber\\
&=(-1\ \mathrm{mod}\ p)  -\left(\frac{n_{1}-1}{6}\ \mathrm{mod}\ p\right)-\left(\frac{n_{1}-1}{6}\ \mathrm{mod}\ p\right)-\left(\frac{n_{1}-1}{6}\ \mathrm{mod}\ p\right)\nonumber\\
&=-\frac{n_{1}+1}{2}\ \mathrm{mod}\ p.\nonumber
\end{align}
If $t\in Q$, then $tP=0$. By Eqs.(\ref{lambda}), (\ref{P}) and Lemma \ref{lemma 3.4}, we get  
\begin{align}
\Lambda(\beta^{t})=\sum\limits_{i\in C_{1}}\beta^{ti}&=\left(\sum\limits_{i\in P}+\sum\limits_{i\in W_{1}}+\sum\limits_{i\in W_{3}}+\sum\limits_{i\in W_{5}}\right)\beta^{ti}\nonumber\\
&=(n_{2}-1 \ \mathrm{mod}\ p)-\left(\frac{n_{2}-1}{6}\ \mathrm{mod}\ p\right)-\left(\frac{n_{2}-1}{6}\ \mathrm{mod}\ p\right)-\left(\frac{n_{2}-1}{6}\ \mathrm{mod}\ p\right)\nonumber\\
&=\frac{n_{2}-1}{2}\ \mathrm{mod}\ p.\nonumber
\end{align}  
If $t\in D_0$, we have three cases:\\
Case I: Let $t\in W_0$, then by Lemma \ref{mod d}, we have $tW_i=W_i.$ Since $\mathrm{gcd}(t,n_2)=1,$ we have $tP=P$ if $t\in W_0.$ Hence
\begin{align}
\Lambda(\beta^{t})=\sum\limits_{i\in C_{1}}\beta^{ti}&=\left(\sum\limits_{i\in P}+\sum\limits_{i\in W_{1}}+\sum\limits_{i\in W_{3}}+\sum\limits_{i\in W_{5}}\right)\beta^{ti} \nonumber \\
&=\left(\sum\limits_{i\in P}+\sum\limits_{i\in W_{1}}+\sum\limits_{i\in W_{3}}+\sum\limits_{i\in W_{5}}\right)\beta^{i} \nonumber \\
&=\Lambda(\beta). \nonumber
\end{align}
Case II: Let $t\in W_2$, then by Lemma \ref{mod d}, we have $tW_i=W_{(i+2)(\mathrm{mod}\ 6)}$ for $0 \leqslant i \leqslant 5$. Since $\mathrm{gcd}(t,n_2)=1,$ we have $tP=P$ if $t\in W_2$. Hence
\begin{align}
\Lambda(\beta^{t})=\sum\limits_{i\in C_{1}}\beta^{ti}&=\left(\sum\limits_{i\in P}+\sum\limits_{i\in W_{1}}+\sum\limits_{i\in W_{3}}+\sum\limits_{i\in W_{5}}\right)\beta^{ti}\nonumber\\
&=\left(\sum\limits_{i\in P}+\sum\limits_{i\in W_{3}}+\sum\limits_{i\in W_{5}}+\sum\limits_{i\in W_{1}}\right)\beta^{i}\nonumber\\
&=\Lambda(\beta). \nonumber
\end{align}
Case III: Let $t\in W_4$, then by Lemma \ref{mod d}, we have $tW_i=W_{(i+4)(\mathrm{mod}\ 6)}$ for $0 \leqslant i \leqslant 5$. Since $\mathrm{gcd}(t,n_2)=1$, we have $tP=P$ if $t\in W_4$. Hence
\begin{align}
\Lambda(\beta^{t})=\sum\limits_{i\in C_{1}}\beta^{ti}&=\left(\sum\limits_{i\in P}+\sum\limits_{i\in W_{1}}+\sum\limits_{i\in W_{3}}+\sum\limits_{i\in W_{5}}\right)\beta^{ti}\nonumber\\
&=\left(\sum\limits_{i\in P}+\sum\limits_{i\in W_{5}}+\sum\limits_{i\in W_{1}}+\sum\limits_{i\in W_{3}}\right)\beta^{i}\nonumber\\
&=\Lambda(\beta). \nonumber
\end{align}
Similarly, if $t\in D_1$, we have three cases:\\
Case I: Let $t\in W_1$ then by Lemma \ref{mod d}, we have $tW_i=W_{i+1\ (\mathrm{mod}\ 6)}.$ Since $\mathrm{gcd}(t,n_2)=1$, we have $tP=P$ if $t\in W_1$.
We have $\beta^n-1=(\beta-1)(\sum\limits_{i=0}^{n-1}\beta^i)=0$ and $\beta-1\neq0,$ this give $\sum\limits_{i=0}^{n-1}\beta^i=0.$ Therefore, $\sum\limits_{i=0}^{n-1}\beta^i=1+\sum\limits_{i\in P}\beta^i+\sum\limits_{i\in Q}\beta^i+\sum\limits_{i\in\bigcup\limits_{j=0}^{5} W_{j}}\beta^{i}=0.$ 
From Eqs.(\ref{P}) and (\ref{Q}), we get
\begin{equation}
\sum\limits_{i\in\bigcup\limits_{j=0}^{5} W_{j}}\beta^{i}=1.\label{sum1}
\end{equation}
Hence
\begin{align}
\Lambda(\beta^{t})=\sum\limits_{i\in C_{1}}\beta^{ti}&=\left(\sum\limits_{i\in P}+\sum\limits_{i\in W_{1}}+\sum\limits_{i\in W_{3}}+\sum\limits_{i\in W_{5}}\right)\beta^{ti}\nonumber\\
&=\left(\sum\limits_{i\in P}+\sum\limits_{i\in W_{2}}+\sum\limits_{i\in W_{4}}+\sum\limits_{i\in W_{0}}\right)\beta^{i}\nonumber\\
&=\left(\sum\limits_{i\in P}-\sum\limits_{i\in W_{1}}-\sum\limits_{i\in W_{3}}-\sum\limits_{i\in W_{5}}\right)\beta^{i}+1\nonumber\\
&=\left(-\sum\limits_{i\in P}-\sum\limits_{i\in W_{1}}-\sum\limits_{i\in W_{3}}-\sum\limits_{i\in W_{5}}\right)\beta^{i}+2\sum\limits_{i\in P}\beta^{i}+1\nonumber
\end{align}
From Eq.(\ref{P}), we know that $\sum\limits_{i\in P}\beta^i=-1$ and by the definition of $\Lambda(\beta)$, we have $\Lambda(\beta^t)=-(\Lambda(\beta)+1)$.
Similarly, we can prove other two cases namely, Case II : $t \in W_3$ and Case III : $t \in W_5.$ 
\end{proof}

\begin{lemma}\label{lemma 3.7}
 If $q\in D_0,$ we have $\Lambda(\beta) \in \mathrm{GF}(q)$ and $(\Lambda(\beta))^q=\Lambda(\beta)$.
 If $q\in D_1,$ we have  $\Lambda(\beta)^q=-(\Lambda(\beta)+1)$.
\end{lemma}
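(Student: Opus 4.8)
The plan is to use the $q$-power Frobenius map $\varphi\colon x\mapsto x^q$ on $\mathrm{GF}(q^m)$, whose fixed subfield is exactly $\mathrm{GF}(q)$, together with the evaluation formula already established in Lemma \ref{value}. The key preliminary observation is that $\varphi$ is a field automorphism fixing $\mathrm{GF}(q)$ pointwise, and the coefficients of $\Lambda(x)$ lie in $\mathrm{GF}(q)$ (they are $0$ or $1$); hence raising an evaluation of $\Lambda$ to the $q$-th power commutes with shifting the argument by a power of $q$. Concretely I would record
\begin{align}
\Lambda(\beta)^{q}=\Big(\sum\limits_{i\in C_{1}}\beta^{i}\Big)^{q}=\sum\limits_{i\in C_{1}}\beta^{iq}=\sum\limits_{i\in C_{1}}(\beta^{q})^{i}=\Lambda(\beta^{q}).\nonumber
\end{align}

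Next I would note that since $\mathrm{gcd}(n,q)=1$, the residue $q\bmod n$ lies in $\mathbb{Z}_n^{\ast}=D_0\cup D_1$, and that $\beta^{q}=\beta^{q\bmod n}$ because $\beta$ has order $n$. Thus the hypothesis ``$q\in D_0$'' (resp. ``$q\in D_1$'') is to be read modulo $n$, and it says precisely that the argument $t=q$ falls into the third (resp. fourth) branch of Lemma \ref{value}. In the case $q\in D_0$, Lemma \ref{value} gives $\Lambda(\beta^{q})=\Lambda(\beta)$, so the display above yields $\Lambda(\beta)^{q}=\Lambda(\beta)$; since the fixed field of $\varphi$ in $\mathrm{GF}(q^m)$ is $\mathrm{GF}(q)$, this forces $\Lambda(\beta)\in\mathrm{GF}(q)$, and the relation $\Lambda(\beta)^{q}=\Lambda(\beta)$ is then exactly the second assertion. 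In the case $q\in D_1$, Lemma \ref{value} gives $\Lambda(\beta^{q})=-(\Lambda(\beta)+1)$, and combining with the display above yields $\Lambda(\beta)^{q}=-(\Lambda(\beta)+1)$, as required.

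This argument is short and essentially self-contained given Lemma \ref{value}. The only point that needs genuine care — the ``main obstacle'', such as it is — is the justification of $\Lambda(\beta)^{q}=\Lambda(\beta^{q})$: one must be explicit that this uses both the additivity and multiplicativity of the Frobenius in characteristic $p$ and the fact that $\Lambda$ has coefficients in $\mathrm{GF}(q)$ (which are fixed by $\varphi$), together with the interpretation of $q$ modulo $n$ so that Lemma \ref{value} is actually applicable to $t=q$.
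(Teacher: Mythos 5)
Your proposal is correct and follows essentially the same route as the paper: both use the Frobenius identity $(\Lambda(\beta))^{q}=\Lambda(\beta^{q})$ (valid since $\Lambda$ has coefficients $0,1\in\mathrm{GF}(q)$) and then apply Lemma \ref{value} with $t=q \bmod n$ lying in $D_0$ or $D_1$. Your write-up merely spells out the Frobenius step and the reduction of $q$ modulo $n$ more explicitly than the paper does.
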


\begin{proof} 
We have $\mathrm{gcd}(n,q)=1,\mathrm{i.e.,}\ q \in \mathbb{Z}^*_n, $ then $q\in\bigcup\limits_{i=1}^{5}W_i=D_0\cup D_1$. If $q\in D_{0}$, by Lemma \ref{value}, we have $(\Lambda(\beta))^{q}=\Lambda(\beta^{q})=\Lambda(\beta)$. So, $\Lambda(\beta) \in \mathrm{GF}(q).$ Similarly, if
 $q \in D_1$, from Lemma \ref{value}, the result follows.
\end{proof}

\begin{lemma}\label{imp}
 If $n_1n_2\equiv 1 \ (\mathrm{mod} \ 12 )$, we have 
 \begin{equation}
  \Lambda(\beta)(\Lambda(\beta)+1)=\frac{n-1}{4}.\nonumber\\
 \end{equation}
If $n_1n_2\equiv 7 \ (\mathrm{mod} \ 12 )$, we have 
 \begin{equation}
  \Lambda(\beta)(\Lambda(\beta)+1)=-\frac{n+1}{4}.\nonumber\\
 \end{equation}
 \end{lemma}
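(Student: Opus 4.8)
The plan is to recognise $\Lambda(\beta)(\Lambda(\beta)+1)$, up to an affine change, as a product of two quadratic Gauss sums over $\mathrm{GF}(n_1)$ and $\mathrm{GF}(n_2)$, and then invoke the classical evaluation $g^2=(-1)^{(p-1)/2}p$ of the quadratic Gauss sum modulo a prime $p$ (valid in $\mathrm{GF}(q^m)$ since $\gcd(q,n)=1$). First I would set $A=\sum_{i\in D_0}\beta^i$ and $B=\sum_{i\in D_1}\beta^i$ with $D_0=W_0\cup W_2\cup W_4$, $D_1=W_1\cup W_3\cup W_5$ as in Lemma~\ref{value}. By Eq.(\ref{P}) we have $\Lambda(\beta)=-1+B$, so $\Lambda(\beta)+1=B$; and combining Eq.(\ref{sum1}) with Eqs.(\ref{P}),(\ref{Q}) gives $A+B=\sum_{i\in\mathbb{Z}_n^{*}}\beta^i=1$. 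Hence
\[
\Lambda(\beta)(\Lambda(\beta)+1)=B(B-1)=-AB=\frac{(A-B)^2-(A+B)^2}{4}=\frac{(A-B)^2-1}{4},
\]
so it suffices to show $(A-B)^2=n$ when $n_1n_2\equiv1\ (\mathrm{mod}\ 12)$ and $(A-B)^2=-n$ when $n_1n_2\equiv7\ (\mathrm{mod}\ 12)$.

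Next I would identify $A-B$ as a Gauss sum. Since $e$ is the order of $g$ modulo $n$, Whiteman's theorem shows $\langle g\rangle$ has index $d=6$ in $\mathbb{Z}_n^{*}$ and $\mathbb{Z}_n^{*}/\langle g\rangle$ is cyclic of order $6$ generated by the image of $u$; therefore $D_0=\langle g\rangle\cup\langle g\rangle u^2\cup\langle g\rangle u^4$ is a subgroup of index $2$ with nontrivial coset $D_1$, and the map $\chi\colon\mathbb{Z}_n^{*}\to\{\pm1\}$ that is $1$ on $D_0$ and $-1$ on $D_1$ is a multiplicative character, giving $A-B=\sum_{x\in\mathbb{Z}_n^{*}}\chi(x)\beta^x$. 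Under the CRT isomorphism $\mathbb{Z}_n^{*}\cong\mathbb{Z}_{n_1}^{*}\times\mathbb{Z}_{n_2}^{*}$ write $\chi=\psi_1\psi_2$. Since $u\in W_1\subseteq D_1$, Eq.(\ref{defu}) gives $-1=\chi(u)=\psi_1(g\bmod n_1)$, and as $g$ is a primitive root modulo $n_1$ this forces $\psi_1$ to be the Legendre symbol $\left(\frac{\cdot}{n_1}\right)$; then $g\in W_0\subseteq D_0$ gives $1=\chi(g)=\psi_1(g)\psi_2(g)$, so $\psi_2(g\bmod n_2)=-1$ and likewise $\psi_2=\left(\frac{\cdot}{n_2}\right)$. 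Thus $\chi(x)=\left(\frac{x}{n_1}\right)\left(\frac{x}{n_2}\right)$.

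Now let $e_1,e_2$ be the CRT idempotents ($e_j\equiv1\ (\mathrm{mod}\ n_j)$, $e_j\equiv0\ (\mathrm{mod}\ n_{3-j})$), so $e_1+e_2\equiv1\ (\mathrm{mod}\ n)$, and put $\gamma_j=\beta^{e_j}$, a primitive $n_j$-th root of unity in $\mathrm{GF}(q^m)$. Then $\beta^x=\gamma_1^{x}\gamma_2^{x}$, where $\gamma_1^{x}$ depends only on $x\bmod n_1$ and $\gamma_2^{x}$ only on $x\bmod n_2$, so splitting the sum by CRT gives
\[
A-B=\left(\sum_{y\in\mathbb{Z}_{n_1}^{*}}\left(\frac{y}{n_1}\right)\gamma_1^{y}\right)\left(\sum_{z\in\mathbb{Z}_{n_2}^{*}}\left(\frac{z}{n_2}\right)\gamma_2^{z}\right)=g_1g_2 ,
\]
a product of quadratic Gauss sums. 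The standard identity yields $g_j^{2}=\left(\frac{-1}{n_j}\right)n_j=(-1)^{(n_j-1)/2}n_j$ in $\mathrm{GF}(q^m)$, hence $(A-B)^2=(-1)^{(n_1-1)/2+(n_2-1)/2}\,n$. Finally, since $n_j\equiv1\ (\mathrm{mod}\ 6)$, writing $n_j=6k_j+1$ gives $(n_j-1)/2=3k_j$, which has the parity of $k_j$, and $n_j\equiv1\ (\mathrm{mod}\ 12)\iff k_j$ even. Checking the possibilities ($n_j\in\{1,7\}\bmod 12$) shows $k_1+k_2$ is even precisely when $n_1n_2\equiv1\ (\mathrm{mod}\ 12)$ and odd precisely when $n_1n_2\equiv7\ (\mathrm{mod}\ 12)$; substituting $(A-B)^2=n$, resp.\ $-n$, into the first display gives $\frac{n-1}{4}$, resp.\ $-\frac{n+1}{4}$.

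The step needing the most care is the middle one: correctly pinning down $\chi$ as the product of the two Legendre symbols and handling the CRT bookkeeping so that $A-B$ genuinely factors into quadratic Gauss sums to which $g^2=(-1)^{(p-1)/2}p$ applies. An alternative, entirely elementary route would expand $AB=\sum_{i\in\{0,2,4\}}\sum_{j\in\{1,3,5\}}\bigl(\sum_{a\in W_i}\beta^{a}\bigr)\bigl(\sum_{b\in W_j}\beta^{b}\bigr)$, reduce each inner product via Lemma~\ref{mod d} and Lemma~\ref{lemma 3.6}, and collect the result in terms of the cyclotomic numbers of Lemma~\ref{odd&even}; this works but is computationally heavy, whereas the Gauss-sum identity handles all the arithmetic in a single line.
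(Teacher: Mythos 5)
Your route is genuinely different from the paper's. The paper proves Lemma~\ref{imp} by brute force: it expands $\Lambda(\beta)(\Lambda(\beta)+1)$ into the nine double sums $\sum_{i\in W_a}\sum_{j\in W_b}\beta^{i+j}$, rewrites $\beta^{i+j}$ as $\beta^{i-j'}$ using Lemma~\ref{lemma 3.6} (whether $-1\in W_0$ or $W_3$), and then evaluates everything through the order-6 cyclotomic numbers of Lemma~\ref{odd&even} together with Lemma~\ref{val d}. Your reduction $\Lambda(\beta)(\Lambda(\beta)+1)=-AB$ with $A+B=1$ (via Eqs.(\ref{P}) and (\ref{sum1})), the identification of the index-2 character $\chi$ with $\left(\frac{\cdot}{n_1}\right)\left(\frac{\cdot}{n_2}\right)$ from $\chi(g)=1$, $\chi(u)=-1$ and Eq.(\ref{defu}), the CRT factorization $A-B=g_1g_2$, and the parity bookkeeping matching $n_1n_2\equiv 1,7\ (\mathrm{mod}\ 12)$ are all correct; this replaces the paper's cyclotomic-number computation by the single evaluation $g_j^2=(-1)^{(n_j-1)/2}n_j$, which is a real gain in transparency.

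There is, however, a genuine gap: you carry the whole argument out inside $\mathrm{GF}(q^m)$, and two steps collapse when $p=2$. The identity $-AB=\frac{(A-B)^2-(A+B)^2}{4}$ divides by $4$, which is $0$ in characteristic $2$ (there $(A-B)^2=(A+B)^2$ and knowing $(A-B)^2$ cannot recover $AB$), and the quadratic Gauss sum evaluation becomes vacuous because $-1=+1$ makes $\chi$ trivial as a field-valued map, so $g_j=1$ and $g_j^2=1$ carries no information. This case cannot be discarded: the lemma is precisely what decides whether $\Lambda(\beta)\in\{0,-1\}$ in Theorem~\ref{gen}(2), and it is applied with $q=2$ in the first corollary and Example~\ref{ex1}. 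The fix is to run your computation in characteristic zero: work in $\mathbb{Z}[\zeta]\subset\mathbb{C}$ with $\zeta$ a primitive complex $n$-th root of unity, where the Gauss-sum argument and the division by $4$ are legitimate and yield the exact integer identity $\Lambda(\zeta)(\Lambda(\zeta)+1)=\frac{n-1}{4}$ resp.\ $-\frac{n+1}{4}$; then reduce modulo a prime ideal of $\mathbb{Z}[\zeta]$ above $p$. The reduction sends $\zeta$ to some primitive $n$-th root of unity $\beta^t$, $t\in\mathbb{Z}_n^*$, and since by (the $D_0$, $D_1$ cases of) Lemma~\ref{value} the quantity $\Lambda(\beta^t)(\Lambda(\beta^t)+1)$ does not depend on $t\in\mathbb{Z}_n^*$, the identity descends to the given $\beta$. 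With that amendment your proof is complete; for odd $p$ it is fine exactly as written.
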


\begin{proof} 
 We have 
 \begin{align}
 \Lambda(\beta)=-1+\sum\limits_{i\in W_{1}}\beta^{i}+\sum\limits_{i\in W_{3}}\beta^{i}+\sum\limits_{i\in W_{5}}\beta^{i},\nonumber
\end{align}
and
\begin{align}
\Lambda(\beta)(\Lambda(\beta)+1)=-\left(\sum\limits_{i\in W_{1}}\beta^{i}+\sum\limits_{i\in W_{3}}\beta^{i}+\sum\limits_{i\in W_{5}}\beta^{i}\right)+\sum\limits_{i\in W_{1}}\sum\limits_{j\in W_{1}}\beta^{i+j}+\sum\limits_{i\in W_{3}}\sum\limits_{j\in W_{3}}\beta^{i+j}+\sum\limits_{i\in W_{5}}\sum\limits_{j\in W_{5}}\beta^{i+j}\nonumber\\
+ 2\sum\limits_{i\in W_{1}}\sum\limits_{j\in W_{3}}\beta^{i+j}+ 2\sum\limits_{i\in W_{3}}\sum\limits_{j\in W_{5}}\beta^{i+j}+ 2\sum\limits_{i\in W_{5}}\sum\limits_{j\in W_{1}}\beta^{i+j}.\label{sumo}  \ \ \ \ \ \ \ \ \ \ \ \ \ \ \ \ \ \ \ \ \ \ \ \ \ \ \ \ \ \ \ \ \ \ \  
\end{align}
Let $n_1n_2\equiv 1 \ (\mathrm{mod} \ 12 )$ from Lemma \ref{W0}, $-1\in W_0$ and from Lemma \ref{mod d}, $-W_j=\{ -t : t\in W_j\}=W_j.$ 
\begin{align}
&\ \ \ \ \  \sum\limits_{i\in W_{1}}\sum\limits_{j\in W_{1}}\beta^{i+j}=\sum\limits_{i\in W_{1}}\sum\limits_{j\in W_{1}}\beta^{i-j}\nonumber  \\
&\ \ \ \ \ \ =|W_1|+\sum\limits_{r\in P\cup Q}d(1,1;r)\beta^{r}+(1,1)_6\sum\limits_{i\in W_{0}}\beta^{i}+(0,0)_6\sum\limits_{i\in W_{1}}\beta^{i}+(5,5)_6\sum\limits_{i\in W_{2}}\beta^{i}+(4,4)_6\sum\limits_{i\in W_{3}}\beta^{i}\nonumber \\
&\ \ \ \ \ \ \ \ \ +(3,3)_6\sum\limits_{i\in W_{4}}\beta^{i}+(2,2)_6\sum\limits_{i\in W_{5}}\beta^{i},\label{odd1}
\end{align}
\begin{align}
&\ \ \ \ \  \sum\limits_{i\in W_{3}}\sum\limits_{j\in W_{3}}\beta^{i+j}=\sum\limits_{i\in W_{3}}\sum\limits_{j\in W_{3}}\beta^{i-j}\nonumber \\
&\ \ \ \ \ \ =|W_3|+\sum\limits_{r\in P\cup Q}d(3,3;r)\beta^{r}+(3,3)_6\sum\limits_{i\in W_{0}}\beta^{i}+(2,2)_6\sum\limits_{i\in W_{1}}\beta^{i}+(1,1)_6\sum\limits_{i\in W_{2}}\beta^{i}+(0,0)_6\sum\limits_{i\in W_{3}}\beta^{i}\nonumber \\
&\ \ \ \ \ \ \ \ \ +(5,5)_6\sum\limits_{i\in W_{4}}\beta^{i}+(4,4)_6\sum\limits_{i\in W_{5}}\beta^{i},\label{odd2}
\end{align}
\begin{align}
&\ \ \ \ \  \sum\limits_{i\in W_{5}}\sum\limits_{j\in W_{5}}\beta^{i+j}=\sum\limits_{i\in W_{5}}\sum\limits_{j\in W_{5}}\beta^{i-j}\nonumber \\
&\ \ \ \ \ \ =|W_5|+\sum\limits_{r\in P\cup Q}d(5,5;r)\beta^{r}+(5,5)_6\sum\limits_{i\in W_{0}}\beta^{i}+(4,4)_6\sum\limits_{i\in W_{1}}\beta^{i}+(3,3)_6\sum\limits_{i\in W_{2}}\beta^{i}+(2,2)_6\sum\limits_{i\in W_{3}}\beta^{i}\nonumber \\
&\ \ \ \ \ \ \ \ \ +(1,1)_6\sum\limits_{i\in W_{4}}\beta^{i}+(0,0)_6\sum\limits_{i\in W_{5}}\beta^{i},\label{odd3}
\end{align}
\begin{align}
&2 \sum\limits_{i\in W_{1}}\sum\limits_{j\in W_{3}}\beta^{i+j}=2\sum\limits_{i\in W_{1}}\sum\limits_{j\in W_{3}}\beta^{i-j}\nonumber \\
&= 2\ \left(\sum\limits_{r\in P\cup Q}d(3,1;r)\beta^{r}+(3,1)_6\sum\limits_{i\in W_{0}}\beta^{i}+(2,0)_6\sum\limits_{i\in W_{1}}\beta^{i}+(1,5)_6\sum\limits_{i\in W_{2}}\beta^{i}+(0,4)_6\sum\limits_{i\in W_{3}}\beta^{i}\right.\nonumber \\
&\ \ \ \left.+(5,3)_6\sum\limits_{i\in W_{4}}\beta^{i}+(4,2)_6\sum\limits_{i\in W_{5}}\beta^{i} \right) ,\label{odd4}
\end{align}
\begin{align}
&2 \sum\limits_{i\in W_{3}}\sum\limits_{j\in W_{5}}\beta^{i+j}=2\sum\limits_{i\in W_{3}}\sum\limits_{j\in W_{5}}\beta^{i-j}\nonumber \\
&=2 \left(\sum\limits_{r\in P\cup Q}d(5,3;r)\beta^{r}+(5,3)_6\sum\limits_{i\in W_{0}}\beta^{i}+(4,2)_6\sum\limits_{i\in W_{1}}\beta^{i}+(3,1)_6\sum\limits_{i\in W_{2}}\beta^{i}+(2,0)_6\sum\limits_{i\in W_{3}}\beta^{i}\right.\nonumber \\
&\ \ \ \left.+(1,5)_6\sum\limits_{i\in W_{4}}\beta^{i}+(0,4)_6\sum\limits_{i\in W_{5}}\beta^{i} \right) ,\label{odd5}
\end{align}
\begin{align}
&2 \sum\limits_{i\in W_{5}}\sum\limits_{j\in W_{1}}\beta^{i+j}=2\sum\limits_{i\in W_{5}}\sum\limits_{j\in W_{1}}\beta^{i-j}\nonumber \\
&= 2 \left(\sum\limits_{r\in P\cup Q}d(1,5;r)\beta^{r}+(1,5)_6\sum\limits_{i\in W_{0}}\beta^{i}+(0,4)_6\sum\limits_{i\in W_{1}}\beta^{i}+(5,3)_6\sum\limits_{i\in W_{2}}\beta^{i}+(4,2)_6\sum\limits_{i\in W_{3}}\beta^{i}\right.\nonumber \\
&\ \ \ \left.+(3,1)_6\sum\limits_{i\in W_{4}}\beta^{i}+(2,0)_6\sum\limits_{i\in W_{5}}\beta^{i}\right),\label{odd6}
\end{align}
Substituting the values of Eqs.(\ref{odd1})-(\ref{odd6}) into Eq.(\ref{sumo}) and then from Lemma \ref{val d} and \ref{odd&even}, and Eq.(\ref{sum1}), we get
\begin{align}
\Lambda(\beta)(\Lambda(\beta)+1)=-\left(\sum\limits_{i\in W_{1}}\beta^{i}+\sum\limits_{i\in W_{3}}\beta^{i}+\sum\limits_{i\in W_{5}}\beta^{i}\right) + \left(\frac{3M}{2}\right)\sum\limits_{i\in W_0}\beta^i\nonumber
+ \left(\frac{3M}{2}+1\right)\sum\limits_{i\in W_1}\beta^i \ \ \ \ \ \ \ \ \ \ \ \ \ \ \ \ \ \ \ \ \ \ \ \ \ \ \ \ \ \ \ \ \  \ \ \ \ \ \\ \nonumber 
  +  \left(\frac{3M}{2}\right)\sum\limits_{i\in W_2}\beta^i + \left(\frac{3M}{2}+1\right)\sum\limits_{i\in W_3}\beta^i  + \left(\frac{3M}{2}\right)\sum\limits_{i\in W_4}\beta^i + \left(\frac{3M}{2}+1\right)\sum\limits_{i\in W_5}\beta^i \ \ \ \ \ \ \ \ \ \ \ \ \ \ \ \ \ \ \ \ \ \ \ \ \ \ \ \ \ \ \\ \nonumber
-12\frac{(n_1-1)(n_2-1)}{36}-3\frac{(n_1-1)(n_2-7)}{36} -3\frac{(n_1-7)(n_2-1)}{36}+3\frac{(n_1-1)(n_2-1)}{6}\ \ \ \ \ \ \ \ \ \ \ \ \ \ \ \ \ \ \ \ \ \ \ \ \  \\ 
=\frac{n-1}{4}.\ \ \ \ \ \ \ \ \ \ \ \ \ \ \ \ \ \ \ \ \ \ \ \ \ \ \ \ \ \ \ \ \ \ \ \ \ \ \ \ \ \ \ \ \ \ \ \ \ \ \ \ \ \ \ \ \ \ \ \ \ \ \ \ \ \ \ \ \ \ \ \ \ \ \ \ \ \ \ \ \ \ \ \ \ \ \ \ \ \ \ \ \ \ \ \ \ \ \ \ \ \ \ \ \ \ \ \ \ \ \ \ \ \  \ \ \nonumber
\end{align}
It completes the first part of the Lemma.\\
Now suppose that  $n_1n_2\equiv 7 \ (\mathrm{mod} \ 12 ).$ By Lemma \ref{W3}, $-1\in W_3$ and from Lemma \ref{mod d}, $-W_j=\{ -t : t\in W_j\}=W_{(j+3)  (\mathrm{mod}\ 6)}.$ 
\begin{align}
&\ \ \ \ \  \sum\limits_{i\in W_{1}}\sum\limits_{j\in W_{1}}\beta^{i+j}=\sum\limits_{i\in W_{1}}\sum\limits_{j\in W_{4}}\beta^{i-j}\nonumber  \\
&\ \ \ \ \ \ =\sum\limits_{r\in P\cup Q}d(4,1;r)\beta^{r}+(4,1)_6\sum\limits_{i\in W_{0}}\beta^{i}+(3,0)_6\sum\limits_{i\in W_{1}}\beta^{i}+(2,5)_6\sum\limits_{i\in W_{2}}\beta^{i}+(1,4)_6\sum\limits_{i\in W_{3}}\beta^{i}\nonumber \\
&\ \ \ \ \ \ \ \ \ +(0,3)_6\sum\limits_{i\in W_{4}}\beta^{i}+(5,2)_6\sum\limits_{i\in W_{5}}\beta^{i},\label{even1}
\end{align}
\begin{align}
&\ \ \ \ \  \sum\limits_{i\in W_{3}}\sum\limits_{j\in W_{3}}\beta^{i+j}=\sum\limits_{i\in W_{3}}\sum\limits_{j\in W_{0}}\beta^{i-j}\nonumber \\
&\ \ \ \ \ \ =\sum\limits_{r\in P\cup Q}d(0,3;r)\beta^{r}+(0,3)_6\sum\limits_{i\in W_{0}}\beta^{i}+(5,2)_6\sum\limits_{i\in W_{1}}\beta^{i}+(4,1)_6\sum\limits_{i\in W_{2}}\beta^{i}+(3,0)_6\sum\limits_{i\in W_{3}}\beta^{i}\nonumber \\
&\ \ \ \ \ \ \ \ \ +(2,5)_6\sum\limits_{i\in W_{4}}\beta^{i}+(1,4)_6\sum\limits_{i\in W_{5}}\beta^{i},\label{even2}
\end{align}
\begin{align}
&\ \ \ \ \  \sum\limits_{i\in W_{5}}\sum\limits_{j\in W_{5}}\beta^{i+j}=\sum\limits_{i\in W_{5}}\sum\limits_{j\in W_{2}}\beta^{i-j}\nonumber \\
&\ \ \ \ \ \ =\sum\limits_{r\in P\cup Q}d(2,5;r)\beta^{r}+(2,5)_6\sum\limits_{i\in W_{0}}\beta^{i}+(1,4)_6\sum\limits_{i\in W_{1}}\beta^{i}+(0,3)_6\sum\limits_{i\in W_{2}}\beta^{i}+(5,2)_6\sum\limits_{i\in W_{3}}\beta^{i}\nonumber \\
&\ \ \ \ \ \ \ \ \ +(4,1)_6\sum\limits_{i\in W_{4}}\beta^{i}+(3,0)_6\sum\limits_{i\in W_{5}}\beta^{i},\label{even3}
\end{align}
\begin{align}
&2 \sum\limits_{i\in W_{1}}\sum\limits_{j\in W_{3}}\beta^{i+j}=2\sum\limits_{i\in W_{1}}\sum\limits_{j\in W_{0}}\beta^{i-j}\nonumber \\
&= 2\ \left(\sum\limits_{r\in P\cup Q}d(0,1;r)\beta^{r}+(0,1)_6\sum\limits_{i\in W_{0}}\beta^{i}+(5,0)_6\sum\limits_{i\in W_{1}}\beta^{i}+(4,5)_6\sum\limits_{i\in W_{2}}\beta^{i}+(3,4)_6\sum\limits_{i\in W_{3}}\beta^{i}\right.\nonumber \\
&\ \ \ \left.+(2,3)_6\sum\limits_{i\in W_{4}}\beta^{i}+(1,2)_6\sum\limits_{i\in W_{5}}\beta^{i} \right),\label{even4}
\end{align}
\begin{align}
&2 \sum\limits_{i\in W_{3}}\sum\limits_{j\in W_{5}}\beta^{i+j}=2\sum\limits_{i\in W_{3}}\sum\limits_{j\in W_{2}}\beta^{i-j}\nonumber \\
&=2 \left(\sum\limits_{r\in P\cup Q}d(2,3;r)\beta^{r}+(2,3)_6\sum\limits_{i\in W_{0}}\beta^{i}+(1,2)_6\sum\limits_{i\in W_{1}}\beta^{i}+(0,1)_6\sum\limits_{i\in W_{2}}\beta^{i}+(5,0)_6\sum\limits_{i\in W_{3}}\beta^{i}\right.\nonumber \\
&\ \ \ \left.+(4,5)_6\sum\limits_{i\in W_{4}}\beta^{i}+(3,4)_6\sum\limits_{i\in W_{5}}\beta^{i} \right),\label{even5}
\end{align}
\begin{align}
&2 \sum\limits_{i\in W_{5}}\sum\limits_{j\in W_{1}}\beta^{i+j}=2\sum\limits_{i\in W_{5}}\sum\limits_{j\in W_{4}}\beta^{i-j}\nonumber \\
&= 2 \left(\sum\limits_{r\in P\cup Q}d(4,5;r)\beta^{r}+(4,5)_6\sum\limits_{i\in W_{0}}\beta^{i}+(3,4)_6\sum\limits_{i\in W_{1}}\beta^{i}+(2,3)_6\sum\limits_{i\in W_{2}}\beta^{i}+(1,2)_6\sum\limits_{i\in W_{3}}\beta^{i}\right.\nonumber \\
&\ \ \ \left.+(0,1)_6\sum\limits_{i\in W_{4}}\beta^{i}+(5,0)_6\sum\limits_{i\in W_{5}}\beta^{i}\right),\label{even6}
\end{align}
Substituting the value from Eqs.(\ref{even1}) - (\ref{even6}) into Eq.(\ref{sumo}) and then from Lemma \ref{val d} and \ref{odd&even}, and Eq.(\ref{sum1}), we get
\begin{align}
\Lambda(\beta)(\Lambda(\beta)+1)=-\left(\sum\limits_{i\in W_{1}}\beta^{i}+\sum\limits_{i\in W_{3}}\beta^{i}+\sum\limits_{i\in W_{5}}\beta^{i}\right) + \left(\frac{3M}{2}-\frac{1}{2}\right)\sum\limits_{i\in W_0}\beta^i\nonumber
+ \left(\frac{3M}{2}+\frac{1}{2}\right)\sum\limits_{i\in W_1}\beta^i \ \ \ \ \ \ \ \ \ \ \ \ \ \ \ \ \ \ \ \ \ \ \ \ \ \ \ \ \ \ \ \ \  \ \ \ \ \ \\ \nonumber 
  +  \left(\frac{3M}{2}-\frac{1}{2}\right)\sum\limits_{i\in W_2}\beta^i + \left(\frac{3M}{2}+\frac{1}{2}\right)\sum\limits_{i\in W_3}\beta^i  + \left(\frac{3M}{2}-\frac{1}{2}\right)\sum\limits_{i\in W_4}\beta^i \ \ \ \ \ \ \ \ \ \ \ \ \ \  \ \ \ \ \ \ \  \ \ \ \ \ \ \ \ \ \ \ \ \ \ \ \ \ \ \ \ \ \ \ \ \ \ \ \ \ \ \ \\ \nonumber
 + \left(\frac{3M}{2}+\frac{1}{2}\right)\sum\limits_{i\in W_5}\beta^i - 18\frac{(n_1-1)(n_2-1)}{36} \ \ \ \ \ \ \ \ \ \ \ \ \  \  \ \ \ \ \ \ \ \ \ \ \ \ \ \ \ \ \ \ \ \ \ \ \ \ \ \ \ \  \ \ \ \ \ \ \ \ \ \ \ \  \ \ \ \ \ \ \ \ \ \ \ \ \ \ \ \ \ \ \ \ \ \ \ \ \ \ \ \  \\ 
=-\frac{n+1}{4}.\ \ \ \ \ \ \ \ \ \ \ \ \ \ \ \ \ \ \ \ \ \ \ \ \ \ \ \ \ \ \ \ \ \ \ \ \ \ \ \ \ \ \ \ \ \ \ \ \ \ \ \ \ \ \ \ \ \ \ \ \ \ \ \ \ \ \ \ \ \ \ \ \ \ \ \ \ \ \ \ \ \ \ \ \ \ \ \ \ \ \ \ \ \ \ \ \ \ \ \ \ \ \ \ \ \ \ \ \ \ \ \ \ \ \ \ \ \ \ \ \ \ \ \ \  \ \ \nonumber
\end{align}
This completes the proof of the Lemma. 
\end{proof}

Note that 
\begin{align}
\Lambda(1)=\frac{(n_1+1)(n_2-1)}{2}\ (\mathrm{mod}\ p).\label{lam1}
\end{align}

It is elementary to prove the following Lemma:
\begin{lemma}\label{quad}
 If $p$ is an odd prime, then
 \begin{align}
  \left(\frac{2}{p}\right)=\left\{
  \begin{array}{ll}
  1, \ \ \ \ if\ \ p\equiv 1 \ (\mathrm{mod}\ 24)\ or\ p\equiv 7 \ (\mathrm{mod}\ 24)\\
 -1, \ \ if\ \ p\equiv 13 \ (\mathrm{mod}\ 24)\ or\ p\equiv 19 \ (\mathrm{mod}\ 24).\nonumber
  \end{array}
\right.
 \end{align}
\end{lemma}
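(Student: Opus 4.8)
The plan is to reduce the claim to the classical \emph{second supplement} to the law of quadratic reciprocity, namely that for an odd prime $p$ one has $\left(\frac{2}{p}\right)=(-1)^{(p^{2}-1)/8}$, which equals $1$ precisely when $p\equiv\pm1\ (\mathrm{mod}\ 8)$ and equals $-1$ precisely when $p\equiv\pm3\ (\mathrm{mod}\ 8)$. Since the value of the Legendre symbol $\left(\frac{2}{p}\right)$ depends only on the residue of $p$ modulo $8$, it suffices to record what each of the four residue classes modulo $24$ appearing in the statement becomes modulo $8$.

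First I would note the reductions $1\equiv1$, $7\equiv-1$, $13\equiv5\equiv-3$ and $19\equiv3$ modulo $8$. Hence $p\equiv1$ or $7\ (\mathrm{mod}\ 24)$ forces $p\equiv\pm1\ (\mathrm{mod}\ 8)$, while $p\equiv13$ or $19\ (\mathrm{mod}\ 24)$ forces $p\equiv\pm3\ (\mathrm{mod}\ 8)$. Applying the second supplement in each case gives $\left(\frac{2}{p}\right)=1$ in the former and $\left(\frac{2}{p}\right)=-1$ in the latter, which is exactly the asserted dichotomy.

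If a self-contained argument is preferred, I would instead deduce the second supplement directly from Gauss's lemma: writing $\left(\frac{2}{p}\right)=(-1)^{\mu}$, where $\mu$ is the number of elements of $\{2,4,\dots,p-1\}$ exceeding $p/2$, a short lattice-point count gives $\mu=\frac{p-1}{2}-\lfloor p/4\rfloor$, and evaluating its parity in the four residue classes of $p$ modulo $8$ reproduces $(-1)^{(p^{2}-1)/8}$. There is essentially no obstacle here; the only point requiring a little care is the bookkeeping of the congruence reductions $24\to8$, which is routine. Since the lemma is invoked later only for the primes $p$ that actually arise as the characteristic in the divisibility arguments, no residue classes modulo $24$ other than $1,7,13,19$ need to be treated.
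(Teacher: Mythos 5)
Your argument is correct: the lemma is just the second supplement $\left(\frac{2}{p}\right)=(-1)^{(p^{2}-1)/8}$ read off on the residue classes $1,7,13,19 \pmod{24}$, which reduce to $\pm1,\pm3 \pmod 8$ exactly as you state. The paper gives no proof at all (it merely calls the lemma elementary), and your reduction to the second supplement is precisely the standard verification that is intended.
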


\begin{lemma}\label{D0}
  If $n\equiv 7\ (\mathrm{mod}\ 12)$ and $\frac{n+1}{4}\equiv 0 \ (\mathrm{mod}\ p)$ or $n\equiv 1\ (\mathrm{mod}\ 12)$ and $\frac{n-1}{4}\equiv 0\ (\mathrm{mod}\ p),$ then $q\ (\mathrm{mod}\ n)\in D_0.$
 \end{lemma}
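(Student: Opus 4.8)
The plan is to turn the congruence hypotheses into the single algebraic identity $\Lambda(\beta)\bigl(\Lambda(\beta)+1\bigr)=0$ in $\mathrm{GF}(q^m)$ and then exclude the remaining alternative $q\bmod n\in D_1$. First I would split into the two cases of the statement. Since $n=n_1n_2$, the condition $n\equiv 7\ (\mathrm{mod}\ 12)$ (resp. $n\equiv 1\ (\mathrm{mod}\ 12)$) is precisely the hypothesis of the second (resp. first) part of Lemma~\ref{imp}, and also guarantees that $\frac{n+1}{4}$ (resp. $\frac{n-1}{4}$) is an integer. In the first case Lemma~\ref{imp} gives $\Lambda(\beta)\bigl(\Lambda(\beta)+1\bigr)=-\frac{n+1}{4}$; as $\mathrm{GF}(q^m)$ has characteristic $p$ and $p\mid\frac{n+1}{4}$ by hypothesis, the right-hand side is $0$. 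In the second case Lemma~\ref{imp} gives $\Lambda(\beta)\bigl(\Lambda(\beta)+1\bigr)=\frac{n-1}{4}$, which is likewise $0$ in $\mathrm{GF}(q^m)$ since $p\mid\frac{n-1}{4}$. Thus $\Lambda(\beta)\bigl(\Lambda(\beta)+1\bigr)=0$ in both cases.

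Next, because $\mathrm{GF}(q^m)$ is a field, this forces $\Lambda(\beta)\in\{0,-1\}$; in particular $\Lambda(\beta)$ lies in the prime subfield $\mathrm{GF}(p)\subseteq\mathrm{GF}(q)$, so $\Lambda(\beta)^q=\Lambda(\beta)$. On the other hand $\gcd(n,q)=1$ gives $q\bmod n\in\mathbb{Z}_n^{\ast}=D_0\cup D_1$, so it is enough to rule out $q\bmod n\in D_1$. Suppose it were: by Lemma~\ref{lemma 3.7} we would have $\Lambda(\beta)^q=-\bigl(\Lambda(\beta)+1\bigr)$, and comparing with $\Lambda(\beta)^q=\Lambda(\beta)$ yields $2\Lambda(\beta)+1=0$. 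Substituting the two possibilities $\Lambda(\beta)=0$ and $\Lambda(\beta)=-1$ gives $1=0$ and $-1=0$ respectively in $\mathrm{GF}(q)$, both impossible. Hence $q\bmod n\in D_0$.

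I do not anticipate a genuine obstacle once Lemmas~\ref{imp} and~\ref{lemma 3.7} are in hand: the argument is a short two-case computation followed by the Frobenius fixed-point characterization of $\mathrm{GF}(q)$ inside $\mathrm{GF}(q^m)$. The only points requiring a little care are the passage from the integer congruences modulo $p$ to equalities in the extension field $\mathrm{GF}(q^m)$, and checking that neither root of $X(X+1)=0$ can also satisfy $2X+1=0$ in characteristic $p$; the latter holds for every prime $p$, so no separate case on $p$ is needed.
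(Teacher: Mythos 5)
Your proof is correct, but it takes a genuinely different route from the paper's. The paper argues number-theoretically: since $D_0$ is a multiplicative subgroup of $\mathbb{Z}_n^{\ast}$ and $q$ is a power of $p$, it suffices to show $p\in D_0$; assuming $p\in D_1$ forces $\left(\frac{p}{n_1}\right)\left(\frac{p}{n_2}\right)=-1$, while the hypothesis gives $n\equiv \mp 1 \pmod{p}$ and quadratic reciprocity (with a separate case $p=2$ handled through the residues of $n_1,n_2$ modulo $24$ and Lemma \ref{quad}) makes that product of Legendre symbols equal to $+1$, a contradiction. You instead exploit the machinery already established: Lemma \ref{imp} converts the divisibility hypothesis into $\Lambda(\beta)\bigl(\Lambda(\beta)+1\bigr)=0$, so $\Lambda(\beta)\in\{0,-1\}$ lies in the prime field and is fixed by the Frobenius $x\mapsto x^q$, and Lemma \ref{lemma 3.7} then excludes $q\in D_1$, since $\Lambda(\beta)=-\bigl(\Lambda(\beta)+1\bigr)$ is impossible in any characteristic (your check that this fails for both roots, including in characteristic $2$, is the right point of care). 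There is no circularity, because Lemmas \ref{imp} and \ref{lemma 3.7} are established before and independently of Lemma \ref{D0}. Your argument is uniform in $p$, needs no quadratic reciprocity and no separate $p=2$ case, and in fact shows directly that $q\bmod n\in D_0$ rather than passing through $p$; what the paper's approach buys in exchange is a self-contained arithmetic criterion on $p,n_1,n_2$ that does not lean on the heavy cyclotomic-number computation behind Lemma \ref{imp}.
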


 \begin{proof} 
 First, we prove that if $n\equiv 7\ (\mathrm{mod}\ 12)$ and $\frac{n+1}{4}\equiv 0 \ (\mathrm{mod}\ p)$, then $q \ (\mathrm{mod}\ n)\in D_0.$
 Clearly, $D_0$ is a subgroup of $\mathbb{Z}_n^*$. Hence, $D_0$ is a multiplicative group. Since $q$ is a power of $p$, it is sufficient to prove that $p \in D_0$.
 Suppose on the contrary that $p\in D_1.$ We first consider the case for $p=2.$ Let $2 \in D_1$. By the definition 
  of Whiteman's generalized cyclotomic classes, $2=u^sg^i, 0\leqslant i\leqslant e-1 $ and $s$ is odd. From (\ref{defu}), we have
  \begin{align}
   2\equiv g^{s+i} \ (\mathrm{mod}\ n_1)\ \ \ \ \ \mathrm{and}\ \ \ \ \ 2 \equiv g^{i} \ (\mathrm{mod}\ n_2).\nonumber
  \end{align}
  Therefore, $2$ must be a quadratic residue (non residue, respectively) modulo $n_1$ if it is a quadratic non residue (residue, respectively)
  modulo $n_2$.\\
  For $p=2$, if $\frac{n+1}{4}\equiv 0\ (\mathrm{mod}\ p)$ then $8$ divides $n_1n_2+1.$ Since 
  $\mathrm{gcd} (n_1-1,n_2-1)=6\ ,$ it is easy to check that we have only the following four possibilities:\\
  \begin{align}
  \left\{
\begin{array}{ll}
 n_1 \equiv 1 \ (\mathrm{mod}\ 24)\\
 n_2 \equiv 7 \ (\mathrm{mod}\ 24),
\end{array}
\right.
 \left\{
\begin{array}{ll}
 n_1 \equiv 7\ (\mathrm{mod}\ 24)\\
 n_2 \equiv 1 \ (\mathrm{mod}\ 24),
\end{array}
\right.
\left\{
\begin{array}{ll}
 n_1 \equiv 13 \ (\mathrm{mod}\ 24)\\
 n_2 \equiv 19 \ (\mathrm{mod}\ 24),
\end{array}
 \right.
 \left\{
\begin{array}{ll}
 n_1 \equiv 19 \ (\mathrm{mod}\ 24)\\
 n_2 \equiv 13 \ (\mathrm{mod}\ 24).\nonumber
\end{array}
\right.
\end{align}
By Lemma \ref{quad}, it follows that none of the above four possibilities are possible. This gives a contradiction therefore $2\in D_0.$\\
Let $p$ be an odd prime. Suppose on the contrary that $p\in D_1.$ By the definition, $p=u^sg^i, 0\leqslant i \leqslant e-1$ and $s$ is odd.
We have
\begin{align}
   p \equiv g^{s+i} \ (\mathrm{mod}\ n_1)\ \ \ \ \ \  \mathrm{and}\ \ \ \ \ \  p \equiv g^{i} \ (\mathrm{mod}\ n_2).\nonumber
  \end{align}
  Since $s$ is odd, then we must have
  \begin{align}
   \left(\frac{p}{n_1}\right)\left(\frac{p}{n_2}\right)=-1, \label{-1}
  \end{align}
where $(-)$ is the Legendre symbol.
If $n\equiv 7 \ (\mathrm{mod}\ 12)$, by Lemma \ref{W3}, $(n_1+n_2)/2$ is even. If $\frac{n+1}{4} \equiv 0\ (\mathrm{mod}\ p),$ then $n=n_1n_2\equiv -1\ (\mathrm{mod}\ p).$   
 By the Law of Quadratic Reciprocity,
  \begin{align}
  \left(\frac{p}{n_i}\right) = \left( -1 \right)^{\left(\frac{p-1}{2}\right)\left(\frac{n_i-1}{2}\right)}\left(\frac{n_i}{p}\right)\ \  \mathrm{for}\ \ i=1,2,\nonumber
  \end{align}
  and
  \begin{align}
   \left(\frac{-1}{p}\right)=\left(-1\right)^{\frac{p-1}{2}}.\nonumber
  \end{align}
  It follows that
  \begin{align}
    \left(\frac{p}{n_1}\right)\left(\frac{p}{n_2}\right)=\left(-1\right)^{\left(\frac{p-1}{2}\right)\left(\frac{n_1+n_2-2}{2}\right)}\left(\frac{n_1}{p}\right)\left(\frac{n_2}{p}\right)\nonumber\\
    =\left(-1\right)^{\left(\frac{p-1}{2}\right)\left(\frac{n_1+n_2-2}{2}\right)}\left(\frac{n_1n_2}{p}\right)\ \ \ \ \ \  \nonumber\\
    =\left(-1\right)^{\left(\frac{p-1}{2}\right)\left(\frac{n_1+n_2-2}{2}\right)}\left(\frac{-1}{p}\right)\ \ \ \ \  \ \ \  \nonumber\\
    =1.\ \ \ \ \ \ \ \ \ \ \ \ \ \ \ \ \ \ \ \ \ \ \ \ \  \ \ \ \ \ \ \ \ \ \ \ \ \  \ \ \ \nonumber
  \end{align}
  This is contrary to Eq.(\ref{-1}). Thus, $p\in D_0$. Similarly, we prove that if $n\equiv 1\ (\mathrm{mod}\ 12)$ and $\frac{n-1}{4}\equiv 0\ (\mathrm{mod}\ p),$ then $q\ (\mathrm{mod}\ n)\in D_0.$
  \end{proof}

We need to discuss the factorization of $x^n-1$ over $\mathrm{GF}(q).$
Let $\beta$ be the same as before. Define for each $i$; $0 \leq i \leq 5 $,
\begin{align}
\omega_i(x)=\prod\limits_{j\in W_i}(x-\beta^{j}),\nonumber
\end{align}
where $W_i$ denote the Whiteman's cyclotomic classes of order 6.
Among the $nth$ roots of unity $\beta^{i}$, where $0 \leq i \leq n-1$, the $n_2$ elements $\beta^{i}, i \in P \cup \{0\},$ are $n_2$th roots of unity and the $n_1$ elements $\beta^{i}, i \in Q \cup \{0\},$ are $n_1$th roots of unity. Hence,\\
$$x^{n_2}-1=\prod\limits_{i\in P \cup \{0\}}(x-\beta^{i})$$ $\ \mathrm{and}$ $$x^{n_1}-1=\prod\limits_{i\in Q \cup \{0\}}(x-\beta^{i}).$$ \\
Then we have
$ x^n-1 = \prod\limits_{i=0}^{n-1}(x-\beta^{i}) = \frac{(x^{n_1}-1)(x^{n_2}-1)}{x-1}\omega(x),$ where $\omega(x) = \prod\limits_{i=0}^{5} \omega_i(x).$ Also, it can be written as $ x^n-1=\frac{(x^{n_1}-1)(x^{n_2}-1)}{x-1}d_0(x)d_1(x)$, where $d_0(x)=\prod\limits_{i\in D_0}(x-\beta^{i})$ and $d_1(x)=\prod\limits_{i\in D_1}(x-\beta^{i}).$
It is straightforward to prove that if $q \in D_0 $, then $d_i(x) \in \mathrm{GF}(q)$  for all $i .$ 

Now we are ready to compute the generator polynomial and the linear complexity of the sequence $\lambda^\infty$ (defined in  Eq.(\ref{si})). For this,
 let $\bigtriangleup_{1}=\frac{n_{1}+1}{2}\ (\mathrm{mod}\ p)$,\  $\bigtriangleup_{2}=\frac{n_{2}-1}{2}\ (\mathrm{mod}\ p)$ and 
$\bigtriangleup=\frac{(n_{1}+1)(n_{2}-1)}{2}\ (\mathrm{mod}\ p)$. We have the following theorem.
\begin{theorem}\label{gen}
 {\rm{(1)}} When $n\equiv 7\ (\mathrm{mod}\ 12)$ and $\frac{n+1}{4} \notequiv 0 \ (\mathrm{mod}\ p)$ or $n\equiv 1\ (\mathrm{mod}\ 12)$ and $\frac{n-1}{4} \notequiv 0 \ (\mathrm{mod}\ p),$ then the generator polynomial $g_\lambda(x)$ and the linear span $L_\lambda$ of the sequence $\lambda^\infty$ {\rm{(}}defined in  Eq.{\rm{(\ref{si}))}} are given by
 $$
g_\lambda(x)=\left\{
\begin{array}{llll}
x^{n}-1,\ \ \ \ \ \ \ \ \ \ \ \mathrm{if}\ \bigtriangleup_{1}\neq0,\ \bigtriangleup_{2}\neq0,\ \bigtriangleup\neq0\\
\frac{x^{n}-1}{x-1},\ \ \ \ \ \ \ \ \ \ \ \ \ \mathrm{if}\ \bigtriangleup_{1}\neq0,\ \bigtriangleup_{2}\neq0,\ \bigtriangleup=0\\
\frac{x^{n}-1}{x^{n_{2}-1}},\ \ \ \ \ \ \ \ \ \ \ \  \mathrm{if}\ \bigtriangleup_{1}=0,\ \bigtriangleup_{2}\neq0\\
\frac{x^{n}-1}{x^{n_{1}-1}},\ \ \ \ \ \ \ \ \ \ \ \ \mathrm{if}\ \bigtriangleup_{1}\neq0,\ \bigtriangleup_{2}=0\\
\frac{(x^{n}-1)(x-1)}{(x^{n_{1}-1})(x^{n_{2}}-1)},\ \ \mathrm{if}\ \bigtriangleup_{1}=\bigtriangleup_{2}=0.
\end{array}
\right.
$$
and
 $$
L_\lambda(x)=\left\{
\begin{array}{llll}
n,\ \ \ \ \ \ \ \ \ \ \ \ \ \ \ \ \ \ \ \ \ \mathrm{if}\ \bigtriangleup_{1}\neq0,\ \bigtriangleup_{2}\neq0,\ \bigtriangleup\neq0\\
n-1,\ \ \ \ \ \ \ \ \ \ \ \ \ \ \ \  \mathrm{if}\ \bigtriangleup_{1}\neq0,\ \bigtriangleup_{2}\neq0,\ \bigtriangleup=0\\
n-n_{2},\ \ \ \ \ \ \ \ \ \ \ \ \ \ \ \mathrm{if}\ \bigtriangleup_{1}=0,\ \bigtriangleup_{2}\neq0\\
n-n_{1},\ \ \ \ \ \ \ \ \ \ \ \ \ \ \ \mathrm{if}\ \bigtriangleup_{1}\neq0,\ \bigtriangleup_{2}=0\\
n-(n_1+n_2-1),\ \mathrm{if}\ \bigtriangleup_{1}=\bigtriangleup_{2}=0.
\end{array}
\right.
$$
In this case, the cyclic code $C_\lambda$ over $\mathrm{GF}(q)$ defined by the two-prime WGCS-I of order 6 has generator polynomial $g_\lambda(x)$ as above and parameters $[n,k,d]$, where the dimension $k=n-\mathrm{deg}(g_\lambda(x))$. 
{\rm{(2)}} When $n\equiv 7\ (\mathrm{mod}\ 12)$ and $\frac{n+1}{4} \equiv 0 \ (\mathrm{mod}\ p)$ or $n\equiv 1\ (\mathrm{mod}\ 12)$ and $\frac{n-1}{4} \equiv 0 \ (\mathrm{mod}\ p),$ then the generator polynomial $g_\lambda(x)$ and the linear span $L_\lambda$ of the sequence $\lambda^\infty$ are given by
$$
g_\lambda(x)=\left\{
\begin{array}{llll}
\frac{x^{n}-1}{d_0(x)},\ \ \ \ \ \ \ \ \ \ \ \ \ \ \ \ \ \ \mathrm{if}\ \bigtriangleup_{1}\neq0,\ \bigtriangleup_{2}\neq0,\ \bigtriangleup\neq0,\ \Lambda(\beta)=0\\
\frac{x^{n}-1}{d_1(x)},\ \ \ \ \ \ \ \ \ \ \ \ \ \ \ \ \ \ \mathrm{if}\ \bigtriangleup_{1}\neq0,\ \bigtriangleup_{2}\neq0,\ \bigtriangleup\neq0,\ \Lambda(\beta)=-1\\
\frac{x^{n}-1}{(x-1)d_0(x)},\ \ \ \ \ \ \ \ \ \ \ \ \mathrm{if}\ \bigtriangleup_{1}\neq0,\ \bigtriangleup_{2}\neq0,\ \bigtriangleup=0,\ \Lambda(\beta)=0\\
\frac{x^{n}-1}{(x-1)d_1(x)},\ \ \ \ \ \ \ \ \ \ \ \ \mathrm{if}\ \bigtriangleup_{1}\neq0,\ \bigtriangleup_{2}\neq0,\ \bigtriangleup=0,\ \Lambda(\beta)=-1\\
\frac{x^{n}-1}{x^{n_{2}-1}d_0(x)},\ \ \ \ \ \ \ \ \ \ \ \ \mathrm{if}\ \bigtriangleup_{1}=0,\ \bigtriangleup_{2}\neq0,\ \Lambda(\beta)=0\\
\frac{x^{n}-1}{x^{n_{2}-1}d_1(x)},\ \ \ \ \ \ \ \ \ \ \ \ \mathrm{if}\ \bigtriangleup_{1}=0,\ \bigtriangleup_{2}\neq0,\ \Lambda(\beta)=-1\\
\frac{x^{n}-1}{x^{n_{1}-1}d_0(x)},\ \ \ \ \ \ \ \ \ \ \ \ \mathrm{if}\ \bigtriangleup_{1}\neq0,\ \bigtriangleup_{2}=0,\ \Lambda(\beta)=0\\
\frac{x^{n}-1}{x^{n_{1}-1}d_1(x)},\ \ \ \ \ \ \ \ \ \ \ \ \mathrm{if}\ \bigtriangleup_{1}\neq0,\ \bigtriangleup_{2}=0,\ \Lambda(\beta)=-1\\
\frac{(x^{n}-1)(x-1)}{(x^{n_{1}-1})(x^{n_{2}}-1)d_0(x)},\ \mathrm{if}\ \bigtriangleup_{1}=\bigtriangleup_{2}=0,\ \Lambda(\beta)=0\\
\frac{(x^{n}-1)(x-1)}{(x^{n_{1}-1})(x^{n_{2}}-1)d_1(x)},\ \mathrm{if}\ \bigtriangleup_{1}=\bigtriangleup_{2}=0,\ \Lambda(\beta)=-1.
\end{array}
\right.
$$
and
$$
L_\lambda(x)=\left\{
\begin{array}{llll}
n-\frac{(n_1-1)(n_2-1)}{2},\ \ \ \ \ \mathrm{if}\ \bigtriangleup_{1}\neq0,\ \bigtriangleup_{2}\neq0,\ \bigtriangleup\neq0, \Lambda(\beta)=0\ \mathrm{or}\ \Lambda(\beta)=-1\\

n-\frac{({n_1}-1)({n_2}-1)+2}{2},\ \   \mathrm{if}\ \bigtriangleup_{1}\neq0,\ \bigtriangleup_{2}\neq0,\ \bigtriangleup=0,\Lambda(\beta)=0 \ \mathrm{or}\ \Lambda(\beta)=-1\\

n-\frac{({n_1}+1)({n_2}-1)+2}{2},\ \  \mathrm{if}\ \bigtriangleup_{1}=0,\ \bigtriangleup_{2}\neq0,\Lambda(\beta)=0\ \mathrm{or}\ \Lambda(\beta)=-1\\

n-\frac{({n_1}-1)({n_2}+1)+2}{2},\ \   \mathrm{if}\ \bigtriangleup_{1}\neq0,\ \bigtriangleup_{2}=0,\Lambda(\beta)=0\ \mathrm{or}\ \Lambda(\beta)=-1\\

n-\frac{({n_1}+1)({n_2}+1)-2}{2},\ \ \mathrm{if}\ \bigtriangleup_{1}=\bigtriangleup_{2}=0,\Lambda(\beta)=0\ \mathrm{or}\ \Lambda(\beta)=-1.\\

\end{array}
\right.
$$
In this case, the cyclic code $C_\lambda$ over $GF(q)$ defined by the two-prime WGCS-I of order 6  has generator polynomial $g_\lambda(x)$ as above and parameters $[n,k,d]$, where the dimension $k=n-\mathrm{deg}(g_\lambda(x))$.
\end{theorem}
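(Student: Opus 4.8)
The plan is to apply Lemma~\ref{spanmin} with $S^n(x)=\Lambda(x)$: the minimal polynomial of $\lambda^\infty$, hence the generator polynomial $g_\lambda(x)$ of $C_\lambda$, equals $(x^n-1)/\gcd(x^n-1,\Lambda(x))$, and $L_\lambda=n-\deg\gcd(x^n-1,\Lambda(x))$. Since $\beta$ is a primitive $n$th root of unity in $\mathrm{GF}(q^m)$ we have $x^n-1=\prod_{t=0}^{n-1}(x-\beta^t)$, so $\gcd(x^n-1,\Lambda(x))=\prod_{t:\,\Lambda(\beta^t)=0}(x-\beta^t)$, and the whole computation reduces to deciding, for each residue $t$, whether $\Lambda(\beta^t)=0$.

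From Lemma~\ref{value} together with Eq.~(\ref{lam1}), the index set $\{t:\Lambda(\beta^t)=0\}$ is governed by four independent switches: $0$ is included iff $\bigtriangleup=0$; $P$ is included iff $\bigtriangleup_1=0$ (as $\Lambda(\beta^t)=-\bigtriangleup_1$ on $P$); $Q$ is included iff $\bigtriangleup_2=0$; $D_0=W_0\cup W_2\cup W_4$ is included iff $\Lambda(\beta)=0$, and $D_1=W_1\cup W_3\cup W_5$ iff $\Lambda(\beta)=-1$. Writing $d_i(x)=\prod_{t\in D_i}(x-\beta^t)$ and using $\prod_{t\in P}(x-\beta^t)=\frac{x^{n_2}-1}{x-1}$, $\prod_{t\in Q}(x-\beta^t)=\frac{x^{n_1}-1}{x-1}$, together with the elementary implications $\bigtriangleup_1=0\Rightarrow\bigtriangleup=0$ and $\bigtriangleup_2=0\Rightarrow\bigtriangleup=0$ (since $\bigtriangleup\equiv\bigtriangleup_1(n_2-1)\equiv(n_1+1)\bigtriangleup_2\pmod{p}$), one checks that precisely the combinations listed in the theorem can occur, and that $\deg d_0=\deg d_1=(n_1-1)(n_2-1)/2$.

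For part~(1), the hypothesis is exactly the assertion that $\frac{n-1}{4}$ (when $n\equiv1\pmod{12}$) or $\frac{n+1}{4}$ (when $n\equiv7\pmod{12}$) is nonzero modulo $p$; by Lemma~\ref{imp} this forces $\Lambda(\beta)(\Lambda(\beta)+1)\neq0$ in $\mathrm{GF}(q^m)$, so $\Lambda(\beta)\notin\{0,-1\}$ and neither $d_0$ nor $d_1$ divides $\gcd(x^n-1,\Lambda(x))$. The gcd is then assembled only from $x-1$, $\frac{x^{n_1}-1}{x-1}$, $\frac{x^{n_2}-1}{x-1}$ according to the values of $\bigtriangleup,\bigtriangleup_1,\bigtriangleup_2$; a short case check (for example $\bigtriangleup_1=0$, $\bigtriangleup_2\neq0$ forces $\bigtriangleup=0$, so the gcd is $(x-1)\cdot\frac{x^{n_2}-1}{x-1}=x^{n_2}-1$, giving $g_\lambda(x)=\frac{x^n-1}{x^{n_2}-1}$) produces the five displayed forms of $g_\lambda(x)$ (the denominators $x^{n_i-1}$ in those formulas are to be read as $x^{n_i}-1$), their degrees give $L_\lambda$, and every such polynomial obviously lies in $\mathrm{GF}(q)[x]$.

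For part~(2) the hypothesis makes $\frac{n-1}{4}$ or $\frac{n+1}{4}$ vanish modulo $p$, so by Lemma~\ref{imp} $\Lambda(\beta)(\Lambda(\beta)+1)=0$ in the field $\mathrm{GF}(q^m)$, and since $0\neq-1$ exactly one of $\Lambda(\beta)=0$, $\Lambda(\beta)=-1$ holds; correspondingly exactly one of $d_0(x)$, $d_1(x)$ divides the gcd. Moreover Lemma~\ref{D0} gives $q\bmod n\in D_0$, so by the remark preceding the theorem (which rests on Lemma~\ref{lemma 3.7}) both $d_0(x)$ and $d_1(x)$ lie in $\mathrm{GF}(q)[x]$, hence $g_\lambda(x)\in\mathrm{GF}(q)[x]$ in all ten sub-cases; combining the surviving $d_i$ with the appropriate product of $x-1$, $\frac{x^{n_1}-1}{x-1}$, $\frac{x^{n_2}-1}{x-1}$ and reading off degrees yields the stated $g_\lambda(x)$ and $L_\lambda$, and in either part $k=n-\deg g_\lambda(x)$ by the definition of $C_\lambda$. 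Apart from this bookkeeping, the one genuinely delicate point is rationality — that $g_\lambda(x)$ and the $d_i(x)$ have coefficients in $\mathrm{GF}(q)$ and not merely in $\mathrm{GF}(q^m)$ — and in part~(2) this is exactly where Lemmas~\ref{D0} and~\ref{lemma 3.7} enter; everything else is a finite, routine verification.
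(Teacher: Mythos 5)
Your proposal is correct and follows essentially the same route as the paper: it applies Lemma~\ref{spanmin}, reads off the zero set $\{t:\Lambda(\beta^t)=0\}$ from Lemma~\ref{value} and Eq.~(\ref{lam1}), uses Lemma~\ref{imp} to decide whether $\Lambda(\beta)\in\{0,-1\}$, and invokes Lemma~\ref{D0} with Lemma~\ref{lemma 3.7} for the rationality of $d_0(x),d_1(x)$ in part~(2). You merely spell out more of the bookkeeping (the gcd as a product over zeros, the implications $\bigtriangleup_i=0\Rightarrow\bigtriangleup=0$, the degree counts, and the typo $x^{n_i-1}$ for $x^{n_i}-1$) than the paper's terse proof does.
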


\begin{proof} 
 \rm{(1)} When $n\equiv 7\ (\mathrm{mod}\ 12)$ and $\frac{n+1}{4} \notequiv 0 \ (\mathrm{mod}\ p)$ or $n\equiv 1\ (\mathrm{mod}\ 12)$ and $\frac{n-1}{4} \notequiv 0 \ (\mathrm{mod}\ p),$ then by Lemma \ref{imp}, we have $\Lambda(\beta)\neq 0,-1.$ Therefore, from Lemma \ref{value}, $\Lambda(\beta^t)=0$ only when $t$ is in $P$ or $Q$ or both. So, the conclusion on the generator polynomial $g_\lambda(x)$ of cyclic code $C_\lambda$ over $\mathrm{GF}(q)$ defined by the sequence $\lambda^\infty$ follows from Eq.(\ref{lam1}) and Lemma \ref{value}. The linear span of the sequence $\lambda^{\infty}$ is equal to $\mathrm{deg}(g_\lambda(x))$.\\
\rm{(2)} When $n\equiv 7\ (\mathrm{mod}\ 12)$ and $\frac{n+1}{4} \equiv 0 \ (\mathrm{mod}\ p)$ or $n\equiv 1\ (\mathrm{mod}\ 12)$ and $\frac{n-1}{4} \equiv 0 \ (\mathrm{mod}\ p),$ then by Lemma \ref{imp}, we have $\Lambda(\beta)\in \{0,-1\}$ and $d_i(x)\in \mathrm{GF}(q)[x]$ for each $i$ if $q \in D_0$. So, the conclusion on the generator polynomial $g_\lambda(x)$ of cyclic code $C_\lambda$ over $\mathrm{GF}(q)$ defined by the sequence $\lambda^\infty$ follows from Eq.(\ref{lam1}), Lemma \ref{D0},\ref{lemma 3.7} and \ref{value}. The linear span of the sequence $\lambda^{\infty}$ is equal to $\mathrm{deg}(g_\lambda(x))$.
\end{proof}

The following corollaries follows from Theorem \ref{gen}, Lemma \ref{imp} and \ref{D0} and give the conclusions on the linear span and generator polynomial of the sequence $\lambda^\infty$ (defined  in Eq.(\ref{si})).

\begin{corollary}
 Let $q=2.$ We have the following conclusions:\\
 {\rm{(1)}} If $n_1\equiv 13\ (\mathrm{mod}\ 24)$ and $n_2\equiv 7\ (\mathrm{mod}\ 24)$ or $n_1\equiv 1\ (\mathrm{mod}\ 24)$ and $n_2\equiv 19\ (\mathrm{mod}\ 24),$ we have
 \begin{align}
  g_\lambda(x) = \frac{x^n-1}{x-1}\ \ \mathrm{and} \ \ \ L_\lambda=n-1.\nonumber
 \end{align}
In this case, the cyclic code $C_\lambda$  over $\mathrm{GF}(q)$ defined by the sequence $\lambda^{\infty}$ has parameters $[n,1,n-1]$ and generator polynomial $g_\lambda(x)$ as above.\\
 {\rm{(2)}} If $n_1\equiv 7\ (\mathrm{mod}\ 24)$ and $n_2\equiv 19\ (\mathrm{mod}\ 24)$ or $n_1\equiv 19\ (\mathrm{mod}\ 24)$ and $n_2\equiv 7\ (\mathrm{mod}\ 24),$ we have
 \begin{align}
  g_\lambda(x) = \frac{x^n-1}{x^{n_2}-1}\ \ \mathrm{and} \ \ \ L_\lambda=n-n_2.\nonumber
 \end{align}
In this case, the cyclic code $C_\lambda$  over $\mathrm{GF}(q)$ defined by the sequence $\lambda^{\infty}$ has parameters $[n,n_2,n_1]$  $($From Theorem $\ref{notinw0}, d=n_1)$  and generator polynomial $g_\lambda(x)$ as above.\\ 
 {\rm{(3)}} If $n_1\equiv 7\ (\mathrm{mod}\ 24)$ and $n_2\equiv 13\ (\mathrm{mod}\ 24)$ or $n_1\equiv 19\ (\mathrm{mod}\ 24)$ and $n_2\equiv 1\ (\mathrm{mod}\ 24),$ we have
 \begin{align}
  g_\lambda(x) = \frac{(x^n-1)(x-1)}{(x^{n_1}-1(x^{n_2}-1)}\ \ \ \mathrm{and} \ \ L_\lambda=n-(n_1+n_2-1).\nonumber
 \end{align}
In this case, the cyclic code $C_\lambda$  over $\mathrm{GF}(q)$ defined by the sequence $\lambda^{\infty}$ has parameters $[n,n_1+n_2-1,d]$ and generator polynomial $g_\lambda(x)$ as above.\\ 
{\rm{(4)}} If $n_1\equiv 1\ (\mathrm{mod}\ 24)$ and $n_2\equiv 7\ (\mathrm{mod}\ 24)$ or $n_1\equiv 13\ (\mathrm{mod}\ 24)$ and $n_2\equiv 19\ (\mathrm{mod}\ 24),$ we have
 \begin{align}
  g_\lambda(x) =\left\{
  \begin{array}{ll}
  \frac{(x^n-1)}{(x-1)d_0(x)},\ \ \ if\ \ \Lambda(\beta)=0\\
  \frac{(x^n-1)}{(x-1)d_1(x)},\ \ \ if\ \ \Lambda(\beta)=1\nonumber
  \end{array}
  \right.
  \ \ \mathrm{and}\ \ \ L_\lambda=n-\frac{(n_1-1)(n_2-1)+2}{2}.
 \end{align}
In this case, the cyclic code $C_\lambda$  over $\mathrm{GF}(q)$ defined by the sequence $\lambda^{\infty}$ has parameters $[n,\frac{(n_1-1)(n_2-1)+2}{2},d]$ and generator polynomial $g_\lambda(x)$ as above.\\
{\rm{(5)}} If $n_1\equiv 7\ (\mathrm{mod}\ 24)$ and $n_2\equiv 7\ (\mathrm{mod}\ 24)$ or $n_1\equiv 19\ (\mathrm{mod}\ 24)$ and $n_2\equiv 19\ (\mathrm{mod}\ 24),$ we have
 \begin{align}
  g_\lambda(x) = \left\{
  \begin{array}{ll}
  \frac{(x^n-1)}{(x^{n_2}-1)d_0(x)},\ \ \ if\ \ \Lambda(\beta)=0\\
  \frac{(x^n-1)}{(x^{n_2}-1)d_1(x)},\ \ \ if\ \ \Lambda(\beta)=1\nonumber
  \end{array}
  \right.
  \ \ \mathrm{and}\ \ \ L_\lambda=n-\frac{(n_1+1)(n_2-1)+2}{2}.
 \end{align}
In this case, the cyclic code $C_\lambda$  over $\mathrm{GF}(q)$ defined by the sequence $\lambda^{\infty}$ has parameters $[n,\frac{(n_1+1)(n_2-1)+2}{2},d]$ and generator polynomial $g_\lambda(x)$ as above. \\
{\rm{(6)}} If $n_1\equiv 7\ (\mathrm{mod}\ 24)$ and $n_2\equiv 1\ (\mathrm{mod}\ 24)$ or $n_1\equiv 19\ (\mathrm{mod}\ 24)$ and $n_2\equiv 13\ (\mathrm{mod}\ 24),$ we have
 \begin{align}
g_\lambda(x) =\left\{
\begin{array}{ll}
\frac{(x^n-1)(x-1)}{(x^{n_1}-1)(x^{n_2}-1)d_0(x)},\ \ \ if\ \ \Lambda(\beta)=0\\
\frac{(x^n-1)(x-1)}{(x^{n_1}-1)(x^{n_2}-1)d_1(x)},\ \ \ if\ \ \Lambda(\beta)=1\nonumber 
\end{array}
\right.
\ \ \mathrm{and}\ \ \ L_\lambda=n-\frac{(n_1+1)(n_2+1)-2}{2}.
 \end{align}
In this case, the cyclic code $C_\lambda$  over $\mathrm{GF}(q)$ defined by the sequence $\lambda^{\infty}$ has parameters $[n,\frac{(n_1+1)(n_2+1)-2}{2},d]$ and generator polynomial $g_\lambda(x)$ as above. 
\end{corollary}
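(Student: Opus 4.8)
Throughout put $q=2$, so $p=2$, and keep the quantities $\bigtriangleup_{1}=\frac{n_{1}+1}{2}\ (\mathrm{mod}\ p)$, $\bigtriangleup_{2}=\frac{n_{2}-1}{2}\ (\mathrm{mod}\ p)$ and $\bigtriangleup=\frac{(n_{1}+1)(n_{2}-1)}{2}\ (\mathrm{mod}\ p)$ of Theorem \ref{gen}. The plan is to treat the corollary as a pure specialisation of Theorem \ref{gen}: for each of the twelve residue pairs listed in (1)--(6) I would determine which line of Theorem \ref{gen}(1) or (2) applies, and then $g_\lambda(x)$ and $L_\lambda$ are simply copied from that line, the dimension being $k=n-\mathrm{deg}(g_\lambda(x))=n-L_\lambda$; the minimum-distance entries come from the explicit description of the code in case (1) and from Theorem \ref{notinw0} (proved in the next section) in the remaining cases. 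So everything reduces to arithmetic modulo $24$.

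First I would set up the $p=2$ dictionary. Since $n_1$ and $n_2$ are odd, $(n_1+1)(n_2-1)$ is divisible by $4$, so $\bigtriangleup=0$ in every case; reducing mod $4$ gives $\bigtriangleup_1=0\iff n_1\equiv 3\ (\mathrm{mod}\ 4)$ and $\bigtriangleup_2=0\iff n_2\equiv 1\ (\mathrm{mod}\ 4)$. Moreover $\frac{n+1}{4}\equiv 0\ (\mathrm{mod}\ 2)\iff n\equiv 7\ (\mathrm{mod}\ 8)$ and $\frac{n-1}{4}\equiv 0\ (\mathrm{mod}\ 2)\iff n\equiv 1\ (\mathrm{mod}\ 8)$. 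Since $\mathrm{gcd}(n_1-1,n_2-1)=6$ forces $n=n_1n_2\equiv 1$ or $7\ (\mathrm{mod}\ 12)$ by Lemmas \ref{W0} and \ref{W3}, the branch of Theorem \ref{gen} is fixed by $n\ (\mathrm{mod}\ 24)$: Theorem \ref{gen}(2) applies exactly when $n\equiv 1$ or $7\ (\mathrm{mod}\ 24)$, and Theorem \ref{gen}(1) when $n\equiv 13$ or $19\ (\mathrm{mod}\ 24)$.

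Next I would dispatch the six cases. For each listed pair $(n_1,n_2)\ (\mathrm{mod}\ 24)$ one multiplies out $n\ (\mathrm{mod}\ 24)$ and reads off $n_1,n_2\ (\mathrm{mod}\ 4)$; the dictionary then selects a single line of Theorem \ref{gen}. For instance $n_1\equiv 13,\ n_2\equiv 7$ gives $n\equiv 19\ (\mathrm{mod}\ 24)$, so branch (1), together with $\bigtriangleup_1\neq0,\ \bigtriangleup_2\neq0,\ \bigtriangleup=0$, which is the second line of Theorem \ref{gen}(1): $g_\lambda(x)=\frac{x^n-1}{x-1}$ and $L_\lambda=n-1$; one checks that the other subcase of (1), $n_1\equiv 1,\ n_2\equiv 19$, yields the same $n\ (\mathrm{mod}\ 24)$ and the same $(\bigtriangleup_1,\bigtriangleup_2)$, so the common conclusion is justified. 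Cases (2) and (3) are the same, landing in the third line ($\bigtriangleup_1=0,\ \bigtriangleup_2\neq0$) and the fifth line ($\bigtriangleup_1=\bigtriangleup_2=0$) of Theorem \ref{gen}(1). For cases (4), (5), (6) one gets $n\equiv 7,\,1,\,7\ (\mathrm{mod}\ 24)$ respectively, hence branch (2), and the generator polynomial involves $d_0(x)$ or $d_1(x)$; to make sense of this over $\mathrm{GF}(2)$ one needs $q\ (\mathrm{mod}\ n)\in D_0$, which is exactly Lemma \ref{D0}, whose hypothesis holds precisely because $n\equiv 1$ or $7\ (\mathrm{mod}\ 24)$ there. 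By Lemma \ref{imp}, $\Lambda(\beta)(\Lambda(\beta)+1)$ equals $\frac{n-1}{4}$ or $-\frac{n+1}{4}$ reduced mod $2$, which is $0$ in these three cases, so $\Lambda(\beta)\in\{0,-1\}=\{0,1\}$ in $\mathrm{GF}(2^m)$; the two possibilities give the two sub-cases of the stated $g_\lambda(x)$, and again $L_\lambda=\mathrm{deg}(g_\lambda(x))$, $k=n-L_\lambda$.

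I do not expect a genuine obstacle here: Theorems \ref{gen}, \ref{notinw0} and Lemmas \ref{imp}, \ref{D0} carry all the weight, and what remains is a routine finite check. The only point needing care is the bookkeeping --- computing $n\ (\mathrm{mod}\ 24)$ (to fix the branch) and $n_i\ (\mathrm{mod}\ 4)$ (to fix $\bigtriangleup_i$) correctly for each of the twelve pairs so as to hit the right line of Theorem \ref{gen}, and verifying that Lemma \ref{D0}'s hypothesis is satisfied in exactly cases (4)--(6). It is worth noting in passing that the configuration $\bigtriangleup_1\neq0,\ \bigtriangleup_2=0$, that is $n_1\equiv n_2\equiv 1\ (\mathrm{mod}\ 4)$ --- namely $(n_1,n_2)\equiv(1,1),(1,13),(13,1),(13,13)\ (\mathrm{mod}\ 24)$ --- does not occur among the twelve listed pairs, which is why the corollary records six cases rather than all sixteen.
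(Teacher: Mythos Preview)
Your proposal is correct and matches the paper's approach exactly: the paper states the corollary without a separate proof, indicating only that it follows from Theorem \ref{gen}, Lemma \ref{imp} and Lemma \ref{D0}. Your write-up simply makes this specialisation explicit, correctly translating the $\bigtriangleup_1,\bigtriangleup_2,\bigtriangleup$ and $(n\pm1)/4$ conditions into residue conditions modulo $24$ and $4$, and your closing remark about the excluded pairs $(n_1,n_2)\equiv(1,1),(1,13),(13,1),(13,13)\pmod{24}$ (which would violate $\gcd(n_1-1,n_2-1)=6$) is a nice observation that the paper leaves implicit.
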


If $q=3$, then we have only one possibility: $n_1\equiv 7\ (\mathrm{mod}\ 12)$ and $n_2\equiv 7\ (\mathrm{mod}\ 12)$. 
\begin{corollary}
Let $q=3$ and  $n_1\equiv 7\ (\mathrm{mod}\ 12)$ and $n_2\equiv 7\ (\mathrm{mod}\ 12)$, we have
 \begin{align}
g_\lambda(x) =\left\{
\begin{array}{ll}
\frac{(x^n-1)}{(x^{n_1}-1)d_0(x)},\ \ \ if\ \ \Lambda(\beta)=0\\
\frac{(x^n-1)}{(x^{n_1}-1)d_1(x)},\ \ \ if\ \ \Lambda(\beta)=1\nonumber 
\end{array}
\right.
\ \ \mathrm{and}\ \ \ L_\lambda=n-\frac{(n_1-1)(n_2+1)+2}{2}.
 \end{align}
In this case, the cyclic code $C_\lambda$  over $\mathrm{GF}(q)$ defined by the sequence $\lambda^{\infty}$ has parameters $[n,\frac{(n_1-1)(n_2+1)+2}{2},d]$ and generator polynomial $g_\lambda(x)$ as above. 
\end{corollary}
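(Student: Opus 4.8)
The plan is to obtain this corollary as the $q=3$ specialisation of Theorem~\ref{gen}(2); the content is then purely bookkeeping, namely to (i) verify that $q=3$ together with $n_1\equiv n_2\equiv 7\pmod{12}$ really does land us in case (2) of Theorem~\ref{gen}, and (ii) read off which of the ten subcases listed there applies.

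First I would record the needed congruences. Writing $n_1=12a+7$, $n_2=12b+7$, one gets $n=n_1n_2\equiv 49\equiv 1\pmod{12}$ and $\frac{n-1}{4}=3(12ab+7a+7b+4)\equiv 0\pmod 3$, so we sit in the branch ``$n\equiv 1\pmod{12}$ and $\frac{n-1}{4}\equiv 0\pmod p$'' of Theorem~\ref{gen}(2). This also explains the remark preceding the corollary that $(7,7)$ modulo $12$ is the only admissible possibility when $p=3$: since $\gcd(n_1-1,n_2-1)=6$ forbids $12\mid n_i-1$, the primes $n_1,n_2$ cannot both be $\equiv 1\pmod{12}$, whereas if $n\equiv 7\pmod{12}$ then $\frac{n+1}{4}\equiv 2\not\equiv 0\pmod 3$, which puts that configuration into Theorem~\ref{gen}(1) instead. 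Next, from $\bigtriangleup_1=\frac{n_1+1}{2}=6a+4\equiv 1\pmod 3$ and $\bigtriangleup_2=\frac{n_2-1}{2}=6b+3\equiv 0\pmod 3$ I get $\bigtriangleup_1\neq 0$, $\bigtriangleup_2=0$, whence also $\bigtriangleup=(n_1+1)\bigtriangleup_2\equiv 0\pmod 3$; thus only the two subcases of Theorem~\ref{gen}(2) carrying the label ``$\bigtriangleup_1\neq 0,\ \bigtriangleup_2=0$'' can occur.

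It then remains to check that $d_0(x),d_1(x)\in\mathrm{GF}(3)[x]$ and to locate $\Lambda(\beta)$. For the first point, Lemma~\ref{D0} applies, since we have just verified $n\equiv 1\pmod{12}$ and $\frac{n-1}{4}\equiv 0\pmod 3$; hence $q\ (\mathrm{mod}\ n)\in D_0$, and then $d_i(x)\in\mathrm{GF}(3)[x]$ for each $i$ by the remark recorded just before Theorem~\ref{gen} (and $\Lambda(\beta)\in\mathrm{GF}(3)$ by Lemma~\ref{lemma 3.7}). For the second point, $n_1n_2\equiv 1\pmod{12}$ lets us invoke Lemma~\ref{imp}, giving $\Lambda(\beta)(\Lambda(\beta)+1)=\frac{n-1}{4}\equiv 0\pmod 3$, so that $\Lambda(\beta)\in\{0,-1\}$ in $\mathrm{GF}(3)$ (i.e.\ $\Lambda(\beta)\in\{0,2\}$). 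Feeding $\bigtriangleup_1\neq 0$, $\bigtriangleup_2=0$, and $\Lambda(\beta)=0$ (resp.\ $\Lambda(\beta)=-1$) into Theorem~\ref{gen}(2) yields $g_\lambda(x)=\frac{x^n-1}{(x^{n_1}-1)d_0(x)}$ (resp.\ with $d_1(x)$ in place of $d_0(x)$) and $L_\lambda=n-\frac{(n_1-1)(n_2+1)+2}{2}$, and the code parameters follow immediately, since $k=n-\deg g_\lambda(x)=n-L_\lambda=\frac{(n_1-1)(n_2+1)+2}{2}$ by Lemma~\ref{spanmin}. There is no genuinely hard step here: the only thing requiring care is the subcase accounting in Theorem~\ref{gen}(2), which reduces to the handful of modular computations above, together with reconciling two harmless notational slips, namely that ``$x^{n_1-1}$'' in Theorem~\ref{gen} should read $x^{n_1}-1$ and that ``$\Lambda(\beta)=1$'' in the corollary should read $\Lambda(\beta)=-1$ (which equals $2$ in $\mathrm{GF}(3)$). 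Once these are aligned the corollary is immediate from Theorem~\ref{gen}, Lemma~\ref{imp}, Lemma~\ref{D0} and Lemma~\ref{spanmin}.
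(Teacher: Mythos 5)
Your proposal is correct and follows exactly the route the paper intends: the corollary is obtained by specialising Theorem~\ref{gen}(2) via the congruence bookkeeping $n\equiv 1\pmod{12}$, $\frac{n-1}{4}\equiv 0\pmod 3$, $\bigtriangleup_1\neq 0$, $\bigtriangleup_2=\bigtriangleup=0$, together with Lemma~\ref{imp} (to force $\Lambda(\beta)\in\{0,-1\}$), Lemma~\ref{D0} and Lemma~\ref{lemma 3.7} (to place $q$ in $D_0$ so that $d_0,d_1\in\mathrm{GF}(3)[x]$), and Lemma~\ref{spanmin} for the dimension. Your handling of the two notational slips (``$x^{n_1-1}$'' versus $x^{n_1}-1$, and ``$\Lambda(\beta)=1$'' versus $\Lambda(\beta)=-1=2$ in $\mathrm{GF}(3)$) is consistent with the paper's intent, so nothing further is needed.
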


\begin{corollary}
 Let $q=5.$ We have the following conclusions:\\
 {\rm{(1)}} If $n_1\equiv 1\ (\mathrm{mod}\ 60)$ and $n_2\equiv 43\ (\mathrm{mod}\ 60)$ or $n_1\equiv 1\ (\mathrm{mod}\ 60)$ and $n_2\equiv 7\ (\mathrm{mod}\ 60)$ or $n_1\equiv 31\ (\mathrm{mod}\ 60)$ and $n_2\equiv 43\ (\mathrm{mod}\ 60)$ or $n_1\equiv 31\ (\mathrm{mod}\ 60)$ and $n_2\equiv 7\ (\mathrm{mod}\ 60)$ or $n_1\equiv 31\ (\mathrm{mod}\ 60)$ and $n_2\equiv 13\ (\mathrm{mod}\ 60)$ or $n_1\equiv 31\ (\mathrm{mod}\ 60)$ and $n_2\equiv 37\ (\mathrm{mod}\ 60),$ we have
 \begin{align}
  g_\lambda(x) = x^n-1\ \ \mathrm{and} \ \ \ L_\lambda=n.\nonumber
 \end{align}
In this case, the cyclic code $C_\lambda$  over $\mathrm{GF}(q)$ defined by the sequence $\lambda^{\infty}$ has parameters $[n,0,0]$ and generator polynomial $g_\lambda(x)$ as above.\\
 {\rm{(2)}} If $n_1\equiv 19\ (\mathrm{mod}\ 60)$ and $n_2\equiv 13\ (\mathrm{mod}\ 60)$ or $n_1\equiv 19\ (\mathrm{mod}\ 60)$ and $n_2\equiv 7\ (\mathrm{mod}\ 60)$ or $n_1\equiv 19\ (\mathrm{mod}\ 60)$ and $n_2\equiv 43\ (\mathrm{mod}\ 60)$ or $n_1\equiv 49\ (\mathrm{mod}\ 60)$ and $n_2\equiv 43\ (\mathrm{mod}\ 60)$ or $n_1\equiv 49\ (\mathrm{mod}\ 60)$ and $n_2\equiv 7\ (\mathrm{mod}\ 60)$, we have
 \begin{align}
  g_\lambda(x) = \frac{x^n-1}{x^{n_2}-1}\ \ \mathrm{and} \ \ \ L_\lambda=n-n_2.\nonumber
 \end{align}
In this case, the cyclic code $C_\lambda$  over $\mathrm{GF}(q)$ defined by the sequence $\lambda^{\infty}$ has parameters $[n,n_2,n_1]$ $($From Theorem $\ref{notinw0}, d=n_1)$  and generator polynomial $g_\lambda(x)$ as above.\\ 
 {\rm{(3)}} If $n_1\equiv 43\ (\mathrm{mod}\ 60)$ and $n_2\equiv 1\ (\mathrm{mod}\ 60)$ or $n_1\equiv 7\ (\mathrm{mod}\ 60)$ and $n_2\equiv 1\ (\mathrm{mod}\ 60)$ or $n_1\equiv 43\ (\mathrm{mod}\ 60)$ and $n_2\equiv 31\ (\mathrm{mod}\ 60)$ or $n_1\equiv 7\ (\mathrm{mod}\ 60)$ and $n_2\equiv 31\ (\mathrm{mod}\ 60)$ or $n_1\equiv 37\ (\mathrm{mod}\ 60)$ and $n_2\equiv 31\ (\mathrm{mod}\ 60)$ or $n_1\equiv 13\ (\mathrm{mod}\ 60)$ and $n_2\equiv 31\ (\mathrm{mod}\ 60),$ we have
 \begin{align}
  g_\lambda(x) = \frac{x^n-1}{x^{n_1}-1}\ \ \ \mathrm{and} \ \ L_\lambda=n-n_1.\nonumber
 \end{align}
In this case, the cyclic code $C_\lambda$  over $\mathrm{GF}(q)$ defined by the sequence $\lambda^{\infty}$ has parameters $[n,n_1,n_2]$ $($From Theorem $\ref{notinw0}, d=n_2)$  and generator polynomial $g_\lambda(x)$ as above.\\ 
{\rm{(4)}} If $n_1\equiv 1\ (\mathrm{mod}\ 60)$ and $n_2\equiv 19\ (\mathrm{mod}\ 60)$ or $n_1\equiv 31\ (\mathrm{mod}\ 60)$ and $n_2\equiv 49\ (\mathrm{mod}\ 60)$ or $n_1\equiv 13\ (\mathrm{mod}\ 60)$ and $n_2\equiv 43\ (\mathrm{mod}\ 60)$ or $n_1\equiv 37\ (\mathrm{mod}\ 60)$ and $n_2\equiv 7\ (\mathrm{mod}\ 60)$ or $n_1\equiv 43\ (\mathrm{mod}\ 60)$ and $n_2\equiv 13\ (\mathrm{mod}\ 60)$ or $n_1\equiv 7\ (\mathrm{mod}\ 60)$ and $n_2\equiv 37\ (\mathrm{mod}\ 60)$ or $n_1\equiv 31\ (\mathrm{mod}\ 60)$ and $n_2\equiv 19\ (\mathrm{mod}\ 60)$ or $n_1\equiv 13\ (\mathrm{mod}\ 60)$ and $n_2\equiv 7\ (\mathrm{mod}\ 60)$ or $n_1\equiv 37\ (\mathrm{mod}\ 60)$ and $n_2\equiv 43\ (\mathrm{mod}\ 60)$  we have
 \begin{align}
  g_\lambda(x) =\left\{
  \begin{array}{ll}
  \frac{(x^n-1)}{d_0(x)},\ \ \ if\ \ \Lambda(\beta)=0\\
  \frac{(x^n-1)}{d_1(x)},\ \ \ if\ \ \Lambda(\beta)=1\nonumber
  \end{array}
  \right.
  \ \ \mathrm{and}\ \ \ L_\lambda=n-\frac{(n_1-1)(n_2-1)}{2}.
 \end{align}
In this case, the cyclic code $C_\lambda$  over $\mathrm{GF}(q)$ defined by the sequence $\lambda^{\infty}$ has parameters $[n,\frac{(n_1-1)(n_2-1)}{2},d]$ and generator polynomial $g_\lambda(x)$ as above.\\
{\rm{(5)}} If $n_1\equiv 19\ (\mathrm{mod}\ 60)$ and $n_2\equiv 19\ (\mathrm{mod}\ 60)$ or $n_1\equiv 19\ (\mathrm{mod}\ 60)$ and $n_2\equiv 49\ (\mathrm{mod}\ 60)$ or $n_1\equiv 49\ (\mathrm{mod}\ 60)$ and $n_2\equiv 19\ (\mathrm{mod}\ 60)$  we have
 \begin{align}
  g_\lambda(x) = \left\{
  \begin{array}{ll}
  \frac{(x^n-1)}{(x^{n_2}-1)d_0(x)},\ \ \ if\ \ \Lambda(\beta)=0\\
  \frac{(x^n-1)}{(x^{n_2}-1)d_1(x)},\ \ \ if\ \ \Lambda(\beta)=1\nonumber
  \end{array}
  \right.
  \ \ \mathrm{and}\ \ \ L_\lambda=n-\frac{(n_1+1)(n_2-1)+2}{2}.
 \end{align}
In this case, the cyclic code $C_\lambda$  over $\mathrm{GF}(q)$ defined by the sequence $\lambda^{\infty}$ has parameters $[n,\frac{(n_1+1)(n_2-1)+2}{2},d]$ and generator polynomial $g_\lambda(x)$ as above. \\
{\rm{(6)}} If $n_1\equiv 1\ (\mathrm{mod}\ 60)$ and $n_2\equiv 31\ (\mathrm{mod}\ 60)$ or $n_1\equiv 31\ (\mathrm{mod}\ 60)$ and $n_2\equiv 1\ (\mathrm{mod}\ 60)$, we have
 \begin{align}
  g_\lambda(x) = \left\{
  \begin{array}{ll}
  \frac{(x^n-1)}{(x^{n_1}-1)d_0(x)},\ \ \ if\ \ \Lambda(\beta)=0\\
  \frac{(x^n-1)}{(x^{n_1}-1)d_1(x)},\ \ \ if\ \ \Lambda(\beta)=1\nonumber
  \end{array}
  \right.
  \ \ \mathrm{and}\ \ \ L_\lambda=n-\frac{(n_1-1)(n_2+1)+2}{2}.
 \end{align}
In this case, the cyclic code $C_\lambda$  over $\mathrm{GF}(q)$ defined by the sequence $\lambda^{\infty}$ has parameters $[n,\frac{(n_1-1)(n_2+1)+2}{2},d]$ and generator polynomial $g_\lambda(x)$ as above. \\
{\rm{(7)}} If $n_1\equiv 19\ (\mathrm{mod}\ 60)$ and $n_2\equiv 31\ (\mathrm{mod}\ 60)$ or $n_1\equiv 19\ (\mathrm{mod}\ 60)$ and $n_2\equiv 1\ (\mathrm{mod}\ 60)$ or $n_1\equiv 49\ (\mathrm{mod}\ 60)$ and $n_2\equiv 31\ (\mathrm{mod}\ 60)$ we have
 \begin{align}
g_\lambda(x) =\left\{
\begin{array}{ll}
\frac{(x^n-1)(x-1)}{(x^{n_1}-1)(x^{n_2}-1)d_0(x)},\ \ \ if\ \ \Lambda(\beta)=0\\
\frac{(x^n-1)(x-1)}{(x^{n_1}-1)(x^{n_2}-1)d_1(x)},\ \ \ if\ \ \Lambda(\beta)=1\nonumber 
\end{array}
\right.
\ \ \mathrm{and}\ \ \ L_\lambda=n-\frac{(n_1+1)(n_2+1)-2}{2}.
 \end{align}
In this case, the cyclic code $C_\lambda$  over $\mathrm{GF}(q)$ defined by the sequence $\lambda^{\infty}$ has parameters $[n,\frac{(n_1+1)(n_2+1)-2}{2},d]$ and generator polynomial $g_\lambda(x)$ as above. 
\end{corollary}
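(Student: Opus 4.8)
The plan is to obtain this corollary as the specialization of Theorem~\ref{gen} to $q=5$ (so $p=5$), supplemented by Lemma~\ref{imp} and Lemma~\ref{D0}. First I would set up the elementary dictionary between the hypotheses stated in the corollary and those of Theorem~\ref{gen}. From $n_i \bmod 60$ one reads off both $n_i \bmod 12$ and $n_i \bmod 5$. The residue $n \bmod 12$ — equivalently, by Lemmas~\ref{W0} and~\ref{W3}, the parity of $(n_1-1)(n_2-1)/36$ — decides which of the two arithmetic families ($n\equiv 1$ or $n\equiv 7 \pmod{12}$) we are in. The quantities occurring in Theorem~\ref{gen} then become purely congruence conditions: $\bigtriangleup_1=\frac{n_1+1}{2}\bmod 5$ vanishes iff $n_1\equiv 4\pmod 5$; $\bigtriangleup_2=\frac{n_2-1}{2}\bmod 5$ vanishes iff $n_2\equiv 1\pmod 5$; $\bigtriangleup=\frac{(n_1+1)(n_2-1)}{2}\equiv 2\bigtriangleup_1\bigtriangleup_2 \pmod 5$, so $\bigtriangleup=0$ iff $\bigtriangleup_1=0$ or $\bigtriangleup_2=0$; and $\frac{n-1}{4}\equiv 0\pmod 5$ iff $5\mid n_1n_2-1$, while $\frac{n+1}{4}\equiv 0\pmod 5$ iff $5\mid n_1n_2+1$.

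With this dictionary in hand the proof reduces to a finite case check. For each listed congruence pair I would (i) compute $n\bmod 12$ and decide, according as $5$ does or does not divide $\frac{n+1}{4}$ (when $n\equiv 7\pmod{12}$) or $\frac{n-1}{4}$ (when $n\equiv 1\pmod{12}$), whether Theorem~\ref{gen}(1) or Theorem~\ref{gen}(2) applies; (ii) evaluate $(\bigtriangleup_1,\bigtriangleup_2,\bigtriangleup)\bmod 5$; and (iii) read off $g_\lambda(x)$ and $L_\lambda$ from the matching row. Families (1)--(3) of the corollary are the rows with $(\bigtriangleup_1,\bigtriangleup_2,\bigtriangleup)$ equal to $(\neq 0,\neq 0,\neq 0)$, $(0,\neq 0)$, $(\neq 0,0)$ in Theorem~\ref{gen}(1), and families (4)--(7) are the same four patterns (together with $(0,0)$) in Theorem~\ref{gen}(2); the row $\bigtriangleup_1\neq 0,\bigtriangleup_2\neq 0,\bigtriangleup=0$ cannot occur for $q=5$ since $\bigtriangleup=2\bigtriangleup_1\bigtriangleup_2$ in $\mathrm{GF}(5)$. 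In the part~(2) families I additionally invoke Lemma~\ref{D0} to conclude $5\bmod n\in D_0$, so $d_0(x),d_1(x)\in\mathrm{GF}(5)[x]$; Lemma~\ref{imp} then gives $\Lambda(\beta)(\Lambda(\beta)+1)\equiv 0 \pmod 5$, hence $\Lambda(\beta)\in\{0,-1\}$ (the displayed ``$\Lambda(\beta)=1$'' being $-1$ over $\mathrm{GF}(5)$), and both values yield the same linear span. The dimension $k=n-\deg g_\lambda(x)$ follows from $|D_0|=|D_1|=(n_1-1)(n_2-1)/2$ and $\deg(x^{n_i}-1)=n_i$, while the explicit minimum distances $n_1$, $n_2$ in families (2)--(3) come from the cited Theorem~\ref{notinw0}.

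The main obstacle is entirely bookkeeping: one must verify, for each of the listed congruence pairs, that the claimed branch of Theorem~\ref{gen} is indeed forced — in particular that in families (4)--(7) the combination $(n\bmod 12,\ 5\mid\frac{n\mp 1}{4})$ satisfies the hypothesis of Lemma~\ref{D0}, and that the quadratic-reciprocity argument in the proof of Lemma~\ref{D0} (stated there for odd primes) covers $p=5$ — and that each pair is consistent with $\gcd(n_1-1,n_2-1)=6$. Organizing these checks in a table makes the verification routine, after which the conclusions on $g_\lambda(x)$, $L_\lambda$, and the parameters $[n,k,d]$ of the cyclic code $C_\lambda$ are immediate from Theorem~\ref{gen}.
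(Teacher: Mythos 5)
Your proposal is correct and follows essentially the same route as the paper, which states these corollaries as direct consequences of Theorem~\ref{gen} together with Lemmas~\ref{imp} and~\ref{D0}: one translates each pair of residues mod $60$ into $n\bmod 12$ and the vanishing pattern of $(\bigtriangleup_1,\bigtriangleup_2,\bigtriangleup)$ over $\mathrm{GF}(5)$ and reads off the matching row (your observations that $\bigtriangleup=2\bigtriangleup_1\bigtriangleup_2$ rules out the $\bigtriangleup_1\neq0,\bigtriangleup_2\neq0,\bigtriangleup=0$ row, that Lemma~\ref{D0} covers odd $p$ including $p=5$, and that the displayed ``$\Lambda(\beta)=1$'' is $-1$ in $\mathrm{GF}(5)$ are all consistent with the paper's intent). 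Nothing further is needed beyond the finite case check you describe.
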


\section{The minimum distance of the cyclic codes}\label{section 5}
In this section, we determine the lower bounds on the minimum distance of some of the cyclic codes of this paper.

\begin{theorem}\label{notinw0}
\cite{ding12} Let $C_{i}$ denote the cyclic code over $\mathrm{GF}(q)$ with the generator polynomial $g_{i}(x)=\frac{x^{n}-1}{x^{n_{i}}-1}.$ 
The cyclic code $C_{i}$  has parameters $[n,n_{i},d_{i}]$, where $d_{i}=n_{i-(-1)^{i}}$ and $i=1,2$.
\end{theorem}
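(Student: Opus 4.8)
The plan is to identify the code $C_i$ explicitly: it is the ``repetition'' code consisting of all words of length $n$ that are periodic with period $n_i$, and all three parameters then drop out at once.

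First I would record the dimension. Since $\deg(x^n-1)=n$ and $\deg(x^{n_i}-1)=n_i$, the generator polynomial $g_i(x)=(x^n-1)/(x^{n_i}-1)$ has degree $n-n_i$, so by the general theory $\dim C_i = n-\deg g_i(x)=n_i$, which is the claimed dimension.

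Next I would pin down the structure of the codewords. Because $n_i\mid n$ we have $x^n-1=(x^{n_i})^{n/n_i}-1$, hence
\[
g_i(x)=\frac{x^n-1}{x^{n_i}-1}=1+x^{n_i}+x^{2n_i}+\cdots+x^{(n/n_i-1)n_i}.
\]
For $0\le j\le n_i-1$ the polynomial $x^j g_i(x)$ has degree $<n$, so it needs no reduction modulo $x^n-1$, and as a word of $\mathrm{GF}(q)^n$ it has its nonzero coordinates precisely in the positions $\equiv j \pmod{n_i}$. In particular these $n_i$ words are linearly independent and each one is invariant under the cyclic shift by $n_i$ positions. Since $C_i=\langle g_i(x)\rangle$ is spanned by the cyclic shifts of $g_i(x)$, it follows that $C_i$ is contained in the subspace $V$ of $\mathrm{GF}(q)^n$ of all words $(c_0,\dots,c_{n-1})$ satisfying $c_k=c_{(k+n_i)\bmod n}$ for every $k$. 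A word of $V$ is determined freely by its first $n_i$ coordinates, so $\dim V=n_i=\dim C_i$, and therefore $C_i=V$.

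Finally the minimum distance follows. Every nonzero $c\in C_i=V$ is the $(n/n_i)$-fold repetition of a nonzero block $(c_0,\dots,c_{n_i-1})$, so its Hamming weight equals $n/n_i$ times the weight of that block, hence a positive multiple of $n/n_i$; and the value $n/n_i$ is attained, for instance by repeating $(1,0,\dots,0)$ (that is, by the word corresponding to $g_i(x)$ itself). Thus $d_i=n/n_i$, and since $\{1,2\}=\{i,\,i-(-1)^i\}$ we get $n/n_i=n_{i-(-1)^i}$; explicitly $C_1$ has parameters $[n,n_1,n_2]$ and $C_2$ has parameters $[n,n_2,n_1]$. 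The only step that takes a moment of thought is the lower bound on $d_i$, but it is disposed of by the identification $C_i=V$, which in turn is forced by matching $\dim V$ against the already-known dimension of $C_i$; I do not anticipate a genuine obstacle.
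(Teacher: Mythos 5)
Your proof is correct and complete. The paper itself gives no argument for this theorem (it is quoted from \cite{ding12}), and your route—identifying $C_i$ with the space of words of period $n_i$, so that every nonzero codeword is an $(n/n_i)$-fold repetition of a nonzero block, which yields dimension $n_i$ and minimum distance $n/n_i=n_{i-(-1)^i}$—is in substance the standard argument of the cited source, where the same periodicity is read off from the parity-check polynomial $x^{n_i}-1$ forcing $c_{k+n_i}=c_k$.
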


\begin{theorem}\label{notinw0.1}
\cite{ding12} Let $C_{(n_{1},n_{2},q)}$ denote the cyclic code over $\mathrm{GF}(q)$ with the generator polynomial $g(x)=\frac{(x^{n}-1)(x-1)}{(x^{n_{1}}-1)(x^{n_{2}}-1)}$. 
 The cyclic code $C_{(n_{1},n_{2},q)}$ has parameters $[n,n_{1}+n_{2}-1,d_{(n_{1},n_{2},q)}]$, where $d_{(n_{1},n_{2},q)}=\mathrm{min}(n_{1},n_{2})$.
\end{theorem}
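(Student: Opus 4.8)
The plan is to compute the dimension by hand and then to realise the code $C_{(n_1,n_2,q)}$ as the sum of the two ``periodic'' codes of Theorem~\ref{notinw0}, after which the minimum distance falls out of an elementary counting argument in the ring $\mathbb{Z}_{n_1}\times\mathbb{Z}_{n_2}$. For the dimension, note that since $\gcd(n_1,n_2)=1$ we have $\gcd(x^{n_1}-1,x^{n_2}-1)=x-1$, so $\dfrac{(x^{n_1}-1)(x^{n_2}-1)}{x-1}=\mathrm{lcm}(x^{n_1}-1,x^{n_2}-1)$ is a divisor of $x^n-1$; hence $g(x)$ is an honest monic divisor of $x^n-1$ of degree $n+1-n_1-n_2=n-(n_1+n_2-1)$. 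Equivalently, in the notation fixed before Theorem~\ref{notinw0.1}, $g(x)=\omega(x)=d_0(x)d_1(x)=\prod_{i\in\mathbb{Z}_n^{\ast}}(x-\beta^{i})$. In either form $\dim C_{(n_1,n_2,q)}=n-\deg g=n_1+n_2-1$.

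Next I would identify the code. Let $C_i=\bigl\langle (x^n-1)/(x^{n_i}-1)\bigr\rangle$ be the codes of Theorem~\ref{notinw0}. Since $(x^n-1)/(x^{n_i}-1)=1+x^{n_i}+\cdots+x^{(n/n_i-1)n_i}$, a multiple $a(x)\bigl(1+x^{n_i}+\cdots\bigr)$ with $\deg a<n_i$ has $m$-th coordinate $a_{\,m\bmod n_i}$, so $C_i$ is exactly the set of words of period $n_i$. From the identity $(x^n-1)/(x^{n_i}-1)=g(x)\cdot(x^{n_{3-i}}-1)/(x-1)$ we get $C_1,C_2\subseteq C_{(n_1,n_2,q)}$, hence $C_1+C_2\subseteq C_{(n_1,n_2,q)}$. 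On the other hand $C_1\cap C_2$ consists of the words periodic mod $n_1$ and mod $n_2$ simultaneously, hence (as $\gcd(n_1,n_2)=1$) periodic mod $1$, i.e.\ the $1$-dimensional repetition code; so $\dim(C_1+C_2)=n_1+n_2-1=\dim C_{(n_1,n_2,q)}$ and therefore $C_{(n_1,n_2,q)}=C_1+C_2$. The bound $d\le\min(n_1,n_2)$ is now immediate, since by Theorem~\ref{notinw0} the code $C_1$ (resp.\ $C_2$) already contains a codeword of weight $n_2$ (resp.\ $n_1$)---for instance its own generator polynomial.

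The lower bound is the heart of the matter. I would pass to the Chinese Remainder indexing $\mathbb{Z}_n\cong\mathbb{Z}_{n_1}\times\mathbb{Z}_{n_2}$, $j\leftrightarrow(j\bmod n_1,\,j\bmod n_2)$; under it a codeword $c=c_1+c_2\in C_1+C_2$ becomes a function $c_{(a,b)}=f(a)+h(b)$ with $f\colon\mathbb{Z}_{n_1}\to\mathrm{GF}(q)$ and $h\colon\mathbb{Z}_{n_2}\to\mathrm{GF}(q)$. Writing $\beta_w=|h^{-1}(w)|$ and counting the zeros of $c$ in each fibre $\{a\}\times\mathbb{Z}_{n_2}$,
\[
\mathrm{wt}(c)=\sum_{a\in\mathbb{Z}_{n_1}}\bigl(n_2-\beta_{-f(a)}\bigr).
\]
If $h$ is non-constant then $\beta_w\le n_2-1$ for every $w$, so each summand is $\ge 1$ and $\mathrm{wt}(c)\ge n_1$; by the symmetric computation (summing over $b\in\mathbb{Z}_{n_2}$) if $f$ is non-constant then $\mathrm{wt}(c)\ge n_2$; and if both $f$ and $h$ are constant then $c$ is a scalar multiple of the all-ones word, of weight $0$ or $n$. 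Hence every nonzero codeword has weight at least $\min(n_1,n_2)$, and together with the upper bound this yields $d_{(n_1,n_2,q)}=\min(n_1,n_2)$.

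The only step that is not pure bookkeeping is this last fibrewise count; everything else reduces to divisibility among the polynomials $x^{n_1}-1$, $x^{n_2}-1$, $x^n-1$ and a single dimension comparison. I therefore expect the main obstacle to be organising the argument so that the identification $C_{(n_1,n_2,q)}=C_1+C_2$ and the passage to $\mathbb{Z}_{n_1}\times\mathbb{Z}_{n_2}$ are stated cleanly; once that structural picture is in place, the weight estimate is short.
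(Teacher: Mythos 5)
Your proof is correct, but note that there is nothing in the paper to compare it against: Theorem \ref{notinw0.1} is quoted from Ding \cite{ding12} without proof, so your argument serves as a self-contained substitute rather than a variant of an in-paper proof. Checking it step by step: the dimension count via $\gcd(x^{n_1}-1,x^{n_2}-1)=x-1$, so that $g(x)$ is a monic divisor of $x^n-1$ of degree $n+1-n_1-n_2$, is right (and your aside that $g(x)=d_0(x)d_1(x)$ matches the factorization the paper sets up before Theorem \ref{gen}); the identification of $C_i=\langle (x^n-1)/(x^{n_i}-1)\rangle$ with the words whose coordinates depend only on the index mod $n_i$ is right, and together with $C_1\cap C_2$ being the repetition code it gives $\dim(C_1+C_2)=n_1+n_2-1$, whence $C_{(n_1,n_2,q)}=C_1+C_2$ by the inclusion coming from $(x^n-1)/(x^{n_i}-1)=g(x)\,(x^{n_{3-i}}-1)/(x-1)$; the upper bound $d\le\min(n_1,n_2)$ from the weight-$n_2$ and weight-$n_1$ generator polynomials is right; and the lower bound via the Chinese Remainder fibre count $\mathrm{wt}(c)=\sum_{a\in\mathbb{Z}_{n_1}}\bigl(n_2-\beta_{-f(a)}\bigr)$, with the three-way case split ($h$ non-constant gives $\ge n_1$, $f$ non-constant gives $\ge n_2$, both constant gives weight $0$ or $n$), exhausts all nonzero codewords. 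Two cosmetic points worth making explicit in a write-up: the decomposition $c=c_1+c_2$ is not unique (because $C_1\cap C_2\neq 0$), but any choice of decomposition suffices for the fibre count; and the equality $d=\min(n_1,n_2)$, rather than just a bound, is exactly what the paper later leans on in Theorems \ref{minw0.2} via the $c(x)c(x^r)$ squaring trick, so your sharper statement is the right one to prove.
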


\begin{theorem}\label{minw0.1}
Assume that $q\in D_{0}$. Let $C^{(i,j)}$ denote the cyclic code over $\mathrm{GF}(q)$ with the generator polynomial 
$g^{(i,j)}(x)=\frac{x^{n}-1}{(x^{n_{i}}-1)d_{j}(x)}$ and let $d^{(i,j)}$ denote the minimum distance of this code, where $i\in\{1,2\}$ and $j\in\{0,1\}$.
The cyclic code $C^{(i,j)}$ has parameters $[n,n_{i}+\frac{(n_{1}-1)(n_{2}-1)}{2},d^{(i,j)}]$, 
where $d^{(i,j)}\geq\lceil\sqrt{n_{i-(-1)^{i}}}\rceil$.\\
If $-1\in D_1,$ we have ${(d^{(i,j)}})^{2}- d^{(i,j)} + 1\geq n_{i-(-1)^{i}}.$
\end{theorem}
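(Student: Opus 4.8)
Throughout write $D_0=W_0\cup W_2\cup W_4$ and $D_1=W_1\cup W_3\cup W_5$, so that $\mathbb{Z}_n=\{0\}\cup P\cup Q\cup D_0\cup D_1$ is a partition and $|D_j|=3|W_0|=\tfrac{(n_1-1)(n_2-1)}{2}$. I give the argument for $i=1$, where $n_{i-(-1)^i}=n_2$; the case $i=2$ is identical after interchanging $n_1\leftrightarrow n_2$ and $P\leftrightarrow Q$. Since $q\in D_0$, the polynomials $d_0(x),d_1(x)$ lie in $\mathrm{GF}(q)[x]$, and because the zeros $\{\beta^t:t\in\{0\}\cup Q\}$ of $x^{n_1}-1$ are disjoint from the zeros $\{\beta^t:t\in D_j\}$ of $d_j(x)$, the product $(x^{n_1}-1)d_j(x)$ divides $x^n-1$ over $\mathrm{GF}(q)$; hence $C^{(1,j)}$ is a cyclic code over $\mathrm{GF}(q)$ whose generator polynomial $g^{(1,j)}(x)$ has zero set exactly $\{\beta^t:t\in P\cup D_{1-j}\}$ and degree $n-n_1-\tfrac{(n_1-1)(n_2-1)}{2}$, giving $k=n_1+\tfrac{(n_1-1)(n_2-1)}{2}$ as claimed. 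Also, $C_1:=\langle(x^n-1)/(x^{n_1}-1)\rangle$ has zero set $\{\beta^t:t\in P\cup D_0\cup D_1\}$ and, by Theorem~\ref{notinw0}, minimum distance $n_2$.

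The distance bounds will come from a square-root (product) argument. Let $c(x)\in C^{(1,j)}$ be nonzero of weight $w=d^{(1,j)}$, so $Z(c):=\{t\in\mathbb{Z}_n:c(\beta^t)=0\}\supseteq P\cup D_{1-j}$; fix $\nu\in D_1$ and let $\sigma_\nu$ be the $\mathrm{GF}(q)$-algebra automorphism of $R=\mathrm{GF}(q)[x]/(x^n-1)$ with $x\mapsto x^\nu$, so that $\sigma_\nu(c)(\beta^t)=c(\beta^{\nu t})$ and $Z(\sigma_\nu(c))=\nu^{-1}Z(c)$. Multiplication by the unit $\nu^{-1}$ fixes $P$ setwise and, by Lemma~\ref{mod d}, sends $D_{1-j}$ to $D_j$; hence $Z(\sigma_\nu(c))\supseteq P\cup D_j$, i.e. $\sigma_\nu(c)\in C^{(1,1-j)}$ (again a code over $\mathrm{GF}(q)$ since $q\in D_0$). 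Therefore $a:=c\cdot\sigma_\nu(c)\in R$ satisfies $a(\beta^t)=c(\beta^t)c(\beta^{\nu t})=0$ for all $t\in(P\cup D_{1-j})\cup(P\cup D_j)=P\cup D_0\cup D_1$, so $a\in C_1$; and since $a(x)=\sum_{s,s'\in\mathrm{supp}(c)}c_sc_{s'}\,x^{\,s+\nu s'}$, the support of $a$ is contained in $\{s+\nu s'\bmod n:s,s'\in\mathrm{supp}(c)\}$, a set of at most $w^2$ elements. Provided $a\neq0$, we get $w^2\ge\mathrm{wt}(a)\ge n_2$, hence $d^{(1,j)}\ge\lceil\sqrt{n_2}\,\rceil$. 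If moreover $-1\in D_1$, take $\nu=-1$: then $\sigma_\nu(c)$ is the reciprocal of $c$, the $w$ diagonal terms $s-s$ all collapse to $0$, so $\mathrm{supp}(a)$ has at most $w(w-1)+1$ elements and $(d^{(1,j)})^2-d^{(1,j)}+1\ge n_2$, which is the last assertion of the theorem.

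The one point requiring care is the non-vanishing of $a=c\cdot\sigma_\nu(c)$. Writing $R$ as a product of fields indexed by the $q$-cyclotomic cosets modulo $n$, $a=0$ would force $\Gamma\subseteq Z(c)$ or $\nu\Gamma\subseteq Z(c)$ for every coset $\Gamma$; since $Z(c)\supseteq P\cup D_{1-j}$ and $\nu D_j=D_{1-j}$, this is automatic for cosets inside $D_j$ and only constrains those inside $\{0\}\cup Q$, where it forces $c(1)=0$ together with $c$ vanishing on a $\nu$-transversal of the cosets of $Q$; in particular $c$ would then have at least $n_2+\tfrac{(n_1-1)n_2}{2}$ zeros. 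I expect this to be excluded by first passing to the subcode $\{c\in C^{(1,j)}:c(1)=0\}$ (on which the identical product argument runs, with $C_1$ replaced by the code of zero set $\mathbb{Z}_n\setminus Q$, still of minimum distance $n_2$) and then ruling out such a heavily constrained low-weight word by a direct count; this bookkeeping is the main nuisance, while the structural content is simply the inclusion $c\cdot\sigma_\nu(c)\in C_1$ valid whenever $c\in C^{(1,j)}$ and $\nu\in D_1$. Finally, the parameters $[n,k,d^{(i,j)}]$ of $C^{(i,j)}$ with $k=n_i+\tfrac{(n_1-1)(n_2-1)}{2}$ are immediate from the degree computation of the first paragraph.
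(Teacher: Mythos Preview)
Your argument is exactly the paper's: take a minimum-weight $c\in C^{(i,j)}$, apply the substitution $x\mapsto x^{\nu}$ with $\nu\in D_1$ to obtain a word of the same weight in $C^{(i,1-j)}$, and use that the product $c(x)c(x^{\nu})$ lies in $C_i$, hence has weight at least $n_{i-(-1)^i}$ but at most $w^2$ (or $w^2-w+1$ when $\nu=-1$). The paper's proof is simply the terse version of this---it passes directly from ``$c(x)c(x^r)\in C_i$'' to ``$(d^{(i,j)})^2\ge d_i$'' without ever mentioning the possibility that the product is the zero codeword. So on the structural level you have reproduced the paper's proof and have in fact been more scrupulous than the paper by flagging the non-vanishing issue.

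That issue, however, is real, and neither your sketch nor the paper actually closes it. The problematic case is precisely $N(c)\subseteq D_j$, i.e.\ $c$ already lies in the subcode with check polynomial $d_j(x)$; for such $c$ one has $\nu N(c)\subseteq D_{1-j}$ and hence $c\cdot\sigma_\nu(c)=0$ for \emph{every} $\nu\in D_1$, so no ``direct count'' downstream of the product trick can help. For the first (square-root) bound one can partially sidestep this: since $D_1$ surjects onto $\mathbb{Z}_{n_i}^{*}$ under reduction mod $n_i$, one may choose $\nu\in D_1$ with $\nu\equiv1\pmod{n_i}$, and then $\sigma_\nu$ fixes every $\beta^t$ with $t\in\{0\}\cup Q$ (resp.\ $\{0\}\cup P$), so the product is nonzero whenever $N(c)\not\subseteq D_j$. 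The residual case $N(c)\subseteq D_j$ still requires an independent lower bound on the minimum distance of $\langle(x^n-1)/d_j(x)\rangle$, which neither you nor the paper supplies. For the sharpened bound under $-1\in D_1$ the choice $\nu=-1$ is forced, so this workaround is unavailable there as well.
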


\begin{proof} Let $c(x)\in \mathrm{GF}(q)[x]/(x^{n}-1)$ be a codeword of Hamming weight $\omega$ in $C^{(i,j)}$. 
Take any $r\in D_{1}$. The cyclic code $c(x^{r})$ is a codeword of Hamming weight $\omega$ in $C^{(i,(j+1)\ \mathrm{mod}\ 2)}$. 
It then follows that $d^{(i,j)}=d^{(i,(j+1)\ \mathrm{mod}\ 2)}.$ 
Let $c(x)\in \mathrm{GF}(q)[x]/(x^{n}-1)$ be a codeword of minimum weight in $C^{(i,j)}$. 
Then $c(x^{r})$ is a codeword of same weight in $C^{(i,(j+1)\ \mathrm{mod}\ 2)}$. 
Hence, $c(x)c(x^{r})$ is a codeword of $C_{i}$, where $C_{i}$ denote the cyclic code over $\mathrm{GF}(q)$ with the generator polynomial $g_i(x)=\frac{x^n-1}{x^{n_i}-1}$ and 
minimum distance $d_{i}=n_{i-(-1)^{i}}$. Hence, from Theorem \ref{notinw0}, we have ${(d^{(i,j)}})^{2}\geq d_{i}=n_{i-(-1)^{i}},$ and
${(d^{(i,j)}})^{2}- d^{(i,j)} + 1\geq n_{i-(-1)^{i}}$ if $-1\in D_1.$
\end{proof}
\begin{theorem}\label{minw0.2}
Assume that $q\in D_{0}$. Let $C_{(n_1,n_2)}^{(j)}$ denote the cyclic code over $\mathrm{GF}(q)$ 
with the generator polynomial $g_{(n_1,n_2)}^{(j)}(x)=\frac{(x^{n}-1)(x-1)}{(x^{n_{1}}-1)(x^{n_{2}}-1)d_{j}(x)}$ 
and let $d_{n_1,n_2}^{(j)}$ denote the minimum distance of this code, where $i\in\{1,2\}$ and $j\in\{0,1\}$.
The cyclic code $C_{(n_1,n_2)}^{(j)}$ has parameters $[n,n_{1}+n_{2}-1+\frac{(n_{1}-1)(n_{2}-1)}{2},d_{(n_1,n_2)}^{(j)}]$, 
where $d_{(n_1,n_2)}^{(j)}\geq\lceil\sqrt{\mathrm{min}(n_{1},n_{2})}\rceil$.\\
If $-1\in D_1,$ we have ${(d_{(n_1,n_2)}^{(j)}})^{2}- d_{(n_1,n_2)}^{(j)} + 1\geq \mathrm{min}(n_{1},n_2).$
\end{theorem}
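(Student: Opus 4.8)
The plan is to follow the scheme of the proof of Theorem~\ref{minw0.1}: first pin down the dimension by a degree count, and then bound the minimum distance by a ``conjugate and multiply'' reduction to the code $C_{(n_1,n_2,q)}$ of Theorem~\ref{notinw0.1}. For the dimension, recall the factorization $x^n-1=\frac{(x^{n_1}-1)(x^{n_2}-1)}{x-1}d_0(x)d_1(x)$, with $\deg d_j(x)=|D_j|=\frac{(n_1-1)(n_2-1)}{2}$; since $q\in D_0$ each $d_j(x)$ lies in $\mathrm{GF}(q)[x]$, so $\frac{(x^{n_1}-1)(x^{n_2}-1)d_j(x)}{x-1}$ is a genuine divisor of $x^n-1$ over $\mathrm{GF}(q)$ and $g_{(n_1,n_2)}^{(j)}(x)$ is well defined, of degree $n+1-n_1-n_2-\frac{(n_1-1)(n_2-1)}{2}$. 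Hence $k=n-\deg g_{(n_1,n_2)}^{(j)}(x)=n_1+n_2-1+\frac{(n_1-1)(n_2-1)}{2}$, as claimed. I would also note that $\omega(x)=d_0(x)d_1(x)=\frac{(x^n-1)(x-1)}{(x^{n_1}-1)(x^{n_2}-1)}$ is precisely the generator polynomial of $C_{(n_1,n_2,q)}$ and that $g_{(n_1,n_2)}^{(j)}(x)=\omega(x)/d_j(x)$, so $g_{(n_1,n_2)}^{(j)}(x)\,g_{(n_1,n_2)}^{(1-j)}(x)=\omega(x)$; equivalently $g_{(n_1,n_2)}^{(j)}$ vanishes exactly on the $\beta^t$ with $t\in D_{1-j}$.

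For the distance bound, the main tool is that for any $r\in D_1$ the coordinate permutation $i\mapsto ri\pmod n$ induces a Hamming-weight-preserving ring automorphism $\sigma_r$ of $\mathrm{GF}(q)[x]/(x^n-1)$, $f(x)\mapsto f(x^r)$. Because $\gcd(r,n)=1$ it fixes $\{0\}$, $P$ and $Q$ setwise, and by Lemma~\ref{mod d} it interchanges $D_0$ and $D_1$; therefore $\sigma_r$ carries the ideal $C_{(n_1,n_2)}^{(j)}$ isometrically onto $C_{(n_1,n_2)}^{(1-j)}$, so $d_{(n_1,n_2)}^{(j)}=d_{(n_1,n_2)}^{(1-j)}$. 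Now pick a minimum-weight codeword $c(x)$ of $C_{(n_1,n_2)}^{(j)}$ and set $\tilde c(x)=\sigma_r(c)(x)$, a codeword of $C_{(n_1,n_2)}^{(1-j)}$ of the same weight $w=d_{(n_1,n_2)}^{(j)}$. Then $c(x)\tilde c(x)\bmod(x^n-1)$ lies in the product of ideals $\langle g_{(n_1,n_2)}^{(j)}\rangle\,\langle g_{(n_1,n_2)}^{(1-j)}\rangle=\langle\omega(x)\rangle=C_{(n_1,n_2,q)}$, and its weight is at most $w^2$. Provided this product is nonzero, Theorem~\ref{notinw0.1} forces $w^2\ge\min(n_1,n_2)$, and since $w$ is a positive integer, $d_{(n_1,n_2)}^{(j)}\ge\lceil\sqrt{\min(n_1,n_2)}\,\rceil$.

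When in addition $-1\in D_1$, I would run the same argument with the distinguished exponent $r=-1\in D_1$. Then $\tilde c(x)=c(x^{-1})$ has support $\{-a:a\in\mathrm{supp}(c)\}$, so $c(x)c(x^{-1})\bmod(x^n-1)$ is supported inside the difference set $\{a-b:a,b\in\mathrm{supp}(c)\}$, whose cardinality is at most $w^2-w+1$ (the difference $0$ is attained $w$ times and the remaining differences group into $\pm$ pairs). The same product-of-ideals computation shows $c(x)c(x^{-1})\bmod(x^n-1)\in C_{(n_1,n_2,q)}$, so if it is nonzero Theorem~\ref{notinw0.1} yields $w^2-w+1\ge\min(n_1,n_2)$, i.e.\ $(d_{(n_1,n_2)}^{(j)})^2-d_{(n_1,n_2)}^{(j)}+1\ge\min(n_1,n_2)$.

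The step I expect to require the most care is showing that the product $c(x)\tilde c(x)\bmod(x^n-1)$ is nonzero, since $c$ and $\tilde c$ are supported on the complementary cyclotomic components indexed by $D_j$ and $D_{1-j}$ together with the degenerate components carrying roots in $\{0\}\cup P\cup Q$, and a priori their product could collapse to $0$. The clean way around this is to choose the minimum-weight codeword with $c(1)\neq0$: then, as $\sigma_r$ fixes the root $\beta^0=1$, both $c$ and $\tilde c$ are nonzero on that component and $(c\tilde c)(1)=c(1)^2\neq0$. What then remains is to exclude the scenario in which \emph{every} minimum-weight codeword of $C_{(n_1,n_2)}^{(j)}$ sits in the proper subcode $\langle(x^n-1)/d_j(x)\rangle$ of codewords supported only on the $D_j$-components (on which the whole construction degenerates to $0$); ruling this out, or else bounding the minimum distance of that subcode separately, is the crux of the argument.
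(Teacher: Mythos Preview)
Your approach is exactly the one the paper uses: pick $r\in D_1$, observe that $c(x)\mapsto c(x^r)$ swaps $C_{(n_1,n_2)}^{(j)}$ and $C_{(n_1,n_2)}^{(1-j)}$ isometrically, and then note that $c(x)c(x^r)$ lies in $C_{(n_1,n_2,q)}$ so Theorem~\ref{notinw0.1} gives $w^2\ge\min(n_1,n_2)$, with the sharper $w^2-w+1$ count when $r=-1\in D_1$. Your dimension computation and your justification of the $w^2-w+1$ bound via the difference set are more explicit than what the paper writes, but the skeleton is identical.

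The one point on which you go beyond the paper is the nonzero-product issue in your final paragraph. The paper's proof does not address this at all: it simply asserts that $c(x)c(x^r)$ is a codeword of $C_{(n_1,n_2,q)}$ and invokes Theorem~\ref{notinw0.1}, tacitly assuming the product is nonzero. So the subtlety you flag is not resolved in the paper either; your proposal is already at least as complete as the published argument, and your identification of the degenerate subcode $\langle (x^n-1)/d_j(x)\rangle$ as the obstruction is a correct diagnosis of what would need to be handled to make the bound fully rigorous.
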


\begin{proof} Let $c(x)\in \mathrm{GF}(q)[x]/(x^{n}-1)$ be a codeword of Hamming weight $\omega$ in $C_{(n_1,n_2)}^{(j)}$. 
Take any $r\in D_{1}$. The cyclic code $c(x^{r})$ is a codeword of Hamming weight $\omega$ in $C_{(n_1,n_2)}^{((j+1)\ \mathrm{mod}\ 2)}$. 
It then follows that $d_{(n_1,n_2)}^{(j)}=d_{(n_1,n_2)}^{((j+1)\ \mathrm{mod}\ 2)}.$ 
Let $c(x)\in \mathrm{GF}(q)[x]/(x^{n}-1)$ be a codeword of minimum weight in  $C_{(n_1,n_2)}^{(j)}$. 
Then $c(x^{r})$ is a codeword of same weight in $C_{(n_1,n_2)}^{((j+1)\ \mathrm{mod}\ 2)}$. 
Hence, $c(x)c(x^{r})$ is a codeword of $C_{(n_1,n_2,q)}$, where $C_{(n_1,n_2,q)}$ denote the cyclic code
over $\mathrm{GF}(q)$ with the generator polynomial $g(x)=\frac{(x^n-1)(x-1)}{(x^{n_1}-1)(x^{n_2}-1)}$ and 
minimum distance $d_{(n_1,n_2,q)}=\mathrm{min} (n_1, n_2).$ 
Hence, from Theorem \ref{notinw0.1}, we have ${(d_{(n_1,n_2)}^{(j)}})^{2}\geq d_{(n_1,n_2)^j}=\mathrm{min}(n_{1},n_2),$ and
${(d_{(n_1,n_2)}^{(j)}})^{2}- d_{(n_1,n_2)}^{(j)} + 1\geq \mathrm{min}(n_{1},n_2)$ if $-1\in D_1.$
\end{proof}
\begin{example}\label{ex1}
 Let $(p,m,n_1,n_2)=(2,1,7,31).$ Then $q=2,\ n=217$ and $C_\lambda$ is a $[217,121]$ cyclic code over $\mathrm{GF}(q)$ with generator polynomial
$g(x)=\frac{x^{217}-1}{(x^{31}-1)d_1(x)}=x^{96}+x^{94}+x^{91}+x^{87}+x^{86}+x^{85}
+x^{83}+x^{81}+x^{80}+x^{78}+x^{77}+x^{75}+x^{72}+x^{69}+x^{67}+x^{65}+x^{64}+x^{63}+x^{60}+x^{58}+x^{55}+x^{53}+x^{52}+x^{51}+x^{48}+x^{45}+x^{44}+x^{43}+x^{41}+x^{38}+x^{36}+x^{33}+x^{32}+x^{31}+x^{29}+x^{27}+x^{24}+x^{21}+x^{19}+x^{18}+x^{16}+x^{15}+x^{13}+x^{11}+x^{10}+x^9+x^5+x^2+1.$
We did some computation and our computation shows that upper bound on the minimum distance for this binary code is $31$. From Theorem \ref{minw0.1}, we have the lower bound on the minimum distance for this binary code is $3$.
\end{example}

\begin{example}\label{ex2}
 Let $(p,m,n_1,n_2)=(2,1,7,31).$ Then $q=3,\ n=217$ and $C_\lambda$ is a $[217,97]$ cyclic code over $\mathrm{GF}(q)$ with generator polynomial
$g(x)=\frac{x^{217}-1}{(x^{7}-1)d_1(x)}=x^{120}+2x^{115}+x^{113}+2x^{109}+2x^{108}+x^{106}+x^{105}+x^{104}+2x^{102}+2x^{100}+x^{98}+2x^{96}+x^{95}+x^{92}+
x^{90}+x^{88}+2x^{87}+2x^{85}+x^{83}+2x^{81}+x^{79}+x^{78}+2x^{77}+x^{76}+2x^{75}+2x^{74}+2x^{71}+2x^{70}+x^{69}+x^{67}+2x^{66}+2x^{65}+x^{64}+2x^{61}+x^{60}+2x^{59}+x^{56}+2x^{55}+2x^{54}+x^{53}+x^{51}+2x^{50}+2x^{49}+2x^{46}+2x^{45}+x^{44}+2x^{43}+x^{42}+x^{41}+2x^{39}+x^{37}+2x^{35}+2x^{33}+x^{32}+x^{30}+x^{28}+x^{25}+2x^{24}+x^{22}+2x^{20}+2x^{18}+x^{16}+x^{15}+x^{14}+2x^{12}+2x^{11}+x^{7}+2x^{5}+1.$
We did some computation and our computation shows that upper bound on the minimum distance for this ternary code is $58$. From Theorem \ref{minw0.1}, we have the lower bound on the minimum distance for this binary code is $6$.
\end{example}

\begin{example}\label{ex3}
 Let $(p,m,n_1,n_2)=(2,1,7,19).$ Then $q=2,\ n=133$ and $C_\lambda$ is a $[133,19,7]$ cyclic code over $\mathrm{GF}(q)$ with generator polynomial
$g(x)=\frac{(x^{133}-1)}{(x^{19}-1)(x^{13}-1)}=x^{114}+x^{95}+x^{76}+x^{57}+x^{38}+x^{19}+1.$
 This is a bad cyclic code due to its poor minimum distance. The code in this case is bad because $q\notin D_0.$
 \end{example}
 
\begin{conclusion*}
 In this paper, we have computed the autocorrelation values and linear complexities of the two-prime WGCS-I of order 6. We have also constructed the cyclic codes over $\mathrm{GF}(q)$ of WGCS-I of order 6. The autocorrelation value is four-valued if $n\equiv 7\ (\mathrm{mod}\ 12)$ and is five-valued if  $n\equiv 1\ (\mathrm{mod}\ 12)$ for any $|n_2-n_1|$ is multiple of $6$. This two-prime WGCS-I of order 6 has low autocorrelation values.
 In the case $\Lambda(\beta)\notin \{0,1\}$, the least value of linear complexity is $n-(n_1+n_2-1)$ and in the case $\Lambda(\beta)\in \{0,1\}$, the least value of linear complexity is $n-\frac{(n_1-1)(n_2-1)}{2}>\frac{n}{2}$. Our results show that these sequence possesses large linear complexity.
 
 The cyclic codes employed in this paper depend on $n_1, n_2 $ and $q$. When $q\in D_0$, we get a good code. We expect that the codes in Examples {\rm{\ref{ex1}}} and {\rm{\ref{ex2}}} give good codes.
  When $q\notin D_0$, we get a bad code, for example, we get a bad code in Example {\rm{\ref{ex3}}}.
  Finally, we expect that cyclic codes described in this paper can be employed to construct the good cyclic codes of large length.

\end{conclusion*}

\bibliographystyle{plain}
   \bibliography{cyl2ref}

 \end{document}